\definecolor{darkred}{rgb}{0.8,0.1,0.1}
\theoremstyle{plain}
\newtheorem{theo}{Theorem}[section]
\newtheorem{lem}[theo]{Lemma}
\newtheorem{propo}[theo]{Proposition}
\newtheorem{cor}[theo]{Corollary}
\theoremstyle{definition}
\newtheorem{defi}[theo]{Definition}
\newtheorem{ex}[theo]{Example}
\newtheorem{rem}[theo]{Remark}
\numberwithin{equation}{section}
\def\nn{\nonumber}
\def\bbR{\mathbb{R}}
\def\bbC{\mathbb{C}}
\def\bbZ{\mathbb{Z}}
\def\bbT{\mathbb{T}}
\def\Hom{\mathrm{Hom}}
\def\id{\mathrm{id}}
\def\supp{\mathrm{supp}}
\def\dd{\mathrm{d}}
\def\sc{\mathrm{sc}}
\def\1{\mathbbm{1}}
\def\op{\mathrm{op}}
\newcommand{\ip}[2]{\big\langle #1,#2 \big\rangle}
\newcommand{\ips}[2]{\langle #1,#2\rangle}
\def\Conf{\mathfrak{C}}
\def\Obs{\mathfrak{O}}
\def\sdConf{\mathfrak{sdC}}
\def\sdObs{\mathfrak{sdO}}
\def\Man{\mathsf{Man}}
\def\Ab{\mathsf{Ab}}
\def\Loc{\mathsf{Loc}}
\def\PSAb{\mathsf{PSAb}}
\def\CAlg{C^\ast\mathsf{Alg}}
\def\TT{\mathfrak{T}}
\def\obs{\mathcal{O}}
\def\OmegaZ{\Omega_\bbZ}
\def\free{\mathrm{free}}
\def\tor{\mathrm{tor}}
\def\dR{\mathrm{dR}}
\def\H{\mathrm{H}}
\def\dH{\hat{\H}}
\def\Ht{\H_\tor}
\def\Hf{\H_\free}
\def\Pair{\mathsf{Pair}}
\def\cu{\mathrm{curv}}
\def\dcu{\ast\,\cu}
\def\ch{\mathrm{char}}
\def\de{\delta}
\def\del{\partial}
\def\cdel{\updelta}
\def\colim{\mathrm{colim}}
\def\solve{\mathrm{solve}}
\def\sk{\vspace{2mm}}
\title{%
Abelian duality on globally hyperbolic spacetimes
}
\author{%
Christian Becker$^{1,a}$, Marco Benini$^{1,2,b}$, Alexander Schenkel$^{2,c}$ and Richard J.\ Szabo$^{2,d}$ \vspace{4mm}\\
{\small $^1$ Institut f\"ur Mathematik, Universit\"at Potsdam,}\\
{\small Karl-Liebknecht-Str.~24-25, 14476 Potsdam, Germany.}\vspace{2mm}\\
{\small $^2$ Department of Mathematics, Heriot-Watt University,}\\
{\small Colin Maclaurin Building, Riccarton, Edinburgh EH14 4AS, United Kingdom.}\vspace{0.2mm}\\
{\small  Maxwell Institute for Mathematical Sciences, Edinburgh, United Kingdom.}\vspace{0.2mm}\\
{\small  Higgs Centre for Theoretical Physics, Edinburgh, United Kingdom.}\vspace{4mm}\\
{\footnotesize \texttt{Email:} $^a$\texttt{becker@math.uni-potsdam.de},
$^b$\texttt{mbenini87@gmail.com}, $^c$\texttt{as880@hw.ac.uk}, $^d$\texttt{R.J.Szabo@hw.ac.uk}}
 }
\date{May 2016}
\begin{document}

\maketitle

\begin{abstract}
\noindent
We study generalized electric/magnetic duality in Abelian gauge theory by combining techniques from locally covariant quantum field theory and Cheeger-Simons differential cohomology on the category of globally hyperbolic Lorentzian manifolds. Our approach generalizes previous treatments using the Hamiltonian formalism in a manifestly covariant way and without the assumption of compact Cauchy surfaces. We construct semi-classical configuration spaces and corresponding presymplectic Abelian groups of observables, which are quantized by the CCR-functor to the category of $C^*$-algebras. We demonstrate explicitly how duality is implemented as a natural isomorphism between quantum field theories. We apply this formalism to develop a fully covariant quantum theory of self-dual fields.
\end{abstract}

\paragraph*{Report no.:} EMPG--15--13
\paragraph*{Keywords:} Abelian gauge theory, differential cohomology, Dirac charge quantization, 
Abelian duality, self-dual Abelian gauge fields, algebraic quantum field theory
\paragraph*{MSC 2010:} 81T13, 81T05, 53C08, 55N20

\newpage

{\baselineskip=12pt
\tableofcontents
}

\bigskip

\section{\label{secIntro}Introduction and summary}
Dualities in string theory have served as a rich source of
conjectural relations between seemingly disparate situations in
mathematics and physics, particularly in some approaches to quantum
field theory. Heuristically, a `duality' is an equivalence between two
descriptions of the same quantum theory in different classical terms,
and it typically involves an interchange of classical and quantum
data. The prototypical example is electric/magnetic duality of Maxwell
theory on a four-manifold $M$: Magnetic flux is discretized at the
classical level by virtue of the fact that it originates as the
curvature of a line bundle on $M$, whereas electric flux
discretization is a quantum effect arising via Dirac charge
quantization. The example of electric/magnetic duality in Maxwell
theory has a generalization to any spacetime dimensionality, of relevance to 
the study of fluxes in string theory, which we may collectively refer 
to as `Abelian duality'. The configuration spaces of these (generalized) Abelian gauge theories 
are mathematically modeled by suitable (generalized) differential cohomology groups, 
see e.g.~\cite{Freed:2000ta,Szabo:2012hc} for reviews.
\sk

In this paper we will describe a new perspective on Abelian duality by
combining methods from Cheeger-Simons differential cohomology and
locally covariant quantum field theory; this connection between
Abelian gauge theory and differential cohomology was originally pursued 
by~\cite{Becker:2014tla}. The quantization of Abelian gauge theories was described 
from a Hamiltonian perspective by~\cite{Freed:2006ya,Freed:2006yc}, where the 
representation theory of Heisenberg groups was used to define the quantum Hilbert 
space of an Abelian gauge theory in a manifestly duality invariant way. 
In the present work we shall instead build the semi-classical configuration space for dual gauge field 
configurations in a fully covariant fashion, which agrees with that proposed 
by~\cite{Freed:2006ya,Freed:2006yc} upon fixing a Cauchy surface $\Sigma$ in 
a globally hyperbolic spacetime $M$, but which is manifestly independent of the 
choice of $\Sigma$. Following the usual ideas of algebraic quantum field theory, 
we construct not a quantum Hilbert space of states but rather a $C^*$-algebra of 
quantum observables; the requisite natural presymplectic structure also agrees with 
that of~\cite{Freed:2006ya,Freed:2006yc} upon fixing a Cauchy surface $\Sigma$, 
but is again independent of the choice of $\Sigma$. Our approach thereby lends a 
new perspective on the phenomenon of Abelian duality, and it enables a rigorous 
(functorial) definition of quantum duality as a natural isomorphism between quantum 
field theory functors. An alternative rigorous perspective on Abelian duality has been recently proposed
by~\cite{Elliott:2014haa} using the factorization algebra approach to (Euclidean) quantum field theory.
We do not yet understand how to describe the full duality groups, 
i.e.\ the analogues of the $SL(2,\bbZ)$ S-duality group of Maxwell theory, as this
in principle requires a detailed understanding of the automorphism groups of our 
quantum field theory functors~\cite{Fewster:2012yc}, which is beyond the scope of the present paper.
\sk

Our approach also gives a novel and elegant formulation of the quantum theory
of self-dual fields, which is an important ingredient in the
formulation of string theory and supergravity: In two dimensions the self-dual gauge
field is a worldsheet periodic chiral scalar field in heterotic string theory whose quantum Hilbert space
carries representations of the usual (affine) Heisenberg algebra; in six
dimensions the self-dual field is an Abelian gerbe connection which lives on the worldvolume 
of M5-branes and NS5-branes, and in the evasive superconformal
$(2,0)$~theory whose quantum Hilbert space should similarly carry
irreducible representations of the corresponding Heisenberg group; in ten dimensions 
the self-dual field is the Ramond-Ramond four-form potential of Type~IIB supergravity.
The two generic issues
associated with the formulation of the self-dual field theory are: (a) The lack of covariant local
Lagrangian formulation of the theory (without certain choices,
cf.~\cite{Belov:2006jd}); and: (b) The reconciliation of the
self-duality equation with Dirac quantization requires the
simultaneous discretization of both electric and magnetic fluxes in
the same semi-classical theory. Our quantization of Abelian gauge theories
at the level of algebras of quantum observables eludes both of these
problems. In particular, the noncommutativity of torsion fluxes
observed by~\cite{Freed:2006ya,Freed:2006yc} is also straightforwardly
evident in our approach. As in~\cite{Freed:2006ya,Freed:2006yc}, our quantization
 of the self-dual field does not follow from the approach developed in the rest of this paper. 
Other Abelian self-dual gauge theories can be analyzed starting 
from generalized differential cohomology theories fulfilling a suitable self-duality property, 
e.g.\ differential K-theory, see~\cite{Freed:2006ya,Freed:2006yc} for the Hamiltonian point of view. 
An approach closer to the one pursued in the present paper is possible also in these cases 
provided one has suitable control on the properties of the relevant generalized differential cohomology theory. 
\sk

In addition to being cast in a manifestly covariant framework, another improvement on 
the development of~\cite{Freed:2006ya,Freed:2006yc} is that our approach does not 
require the spacetime to admit compact Cauchy surfaces. Our main technical achievement 
is the development of a suitable theory of Cheeger-Simons differential characters  with compact support
and Pontryagin duality, in a manner which does not destroy the Abelian 
duality. As the mathematical details of this theory are somewhat involved and of independent 
interest, they have been delegated to a companion paper~\cite{BBSS} to which we frequently 
refer. The present paper focuses instead on the aspects of interest in physics.
\sk

The outline of the remainder of this paper is as follows. In Section~\ref{secConf} we 
introduce and analyze the semi-classical configuration spaces of dual gauge fields in the 
language of differential cohomology; our main result is the identification of this space 
with the space of solutions of a well-posed Cauchy problem which agrees with the description 
of~\cite{Freed:2006ya,Freed:2006yc}, but in a manifestly covariant fashion and without 
the assumption of compactness of Cauchy surfaces. In Section~\ref{secSCConf} we 
analogously study a suitable space of dual gauge field configurations of spacelike compact 
support, and show in Section~\ref{secObs} that it is isomorphic to a suitable Abelian group 
of observables defined in the spirit of smooth Pontryagin duality as in~\cite{Becker:2014tla}. 
In Section~\ref{secQuantization} we consider the quantization of the semi-classical gauge theories 
and the extent to which they satisfy the axioms of locally covariant quantum field 
theory~\cite{Brunetti:2001dx}; we show that, just as in~\cite{Becker:2014tla}, our quantum 
field theory functors satisfy the causality and time-slice axioms but violate the locality 
axiom.\footnote{The violation of locality is due to topological properties of the spacetime $M$ and 
owes to the fact that differential cohomology constructs the pertinent configuration spaces 
as gauge orbit spaces. As a matter of fact, all approaches to gauge theory in the context of 
general local covariance \cite{Brunetti:2001dx} exhibit at least some remnant of the failure of locality, 
see \cite{Becker:2014tla, Benini:2013ita, Benini:2013tra, Dappiaggi:2011zs, 
Fewster:2014hba, Dappiaggi:2011cj, Sanders:2012sf}. 
There are indications that the tension between locality and gauge theory can be solved 
by means of homotopical techniques (in the context of model categories), 
see~\cite{Benini:2015hta} for the first steps towards this goal.} 
In Section~\ref{secAbeliandualities} we show that dualities extend to the 
quantum field theories thus defined. In Section~\ref{secSelfdual} we apply our formalism to give a 
proper covariant formulation of the quantum field theory of a self-dual field. An appendix at the 
end of the paper provides some technical details of constructions which are used in the main text.


\section{\label{secConf}Dual gauge fields}
In this section we describe and analyze the configuration spaces of
the (higher) gauge theories that will be of interest in this paper.
Their main physical feature is a discretization of both electric and magnetic 
fluxes, which is motivated by Dirac charge quantization.
To simplify notation, we normalize both 
electric and magnetic fluxes so that they are quantized in the same
integer lattice $\bbZ\subset \bbR$.
Because Dirac charge quantization arises as a quantum effect
(i.e.~it depends on Planck's constant $\hbar$, which in our
conventions is equal to $1$), 
we shall use the attribution ``semi-classical'' for the gauge
field configurations introduced below. In this paper all manifolds are implicitly assumed 
to be smooth, connected, oriented and of finite type, i.e.\ they admit a finite good cover.

\subsection{Semi-classical configuration space}
Let $M$ be a manifold. The integer cohomology group $\H^k(M;\bbZ)$ of degree $k$
is an Abelian group which has a (non-canonical) splitting
$\H^k(M;\bbZ)\simeq \Hf^k(M;\bbZ)\oplus \Ht^k(M;\bbZ)$ into free and torsion
subgroups, respectively. Let $\Omega_\bbZ^k(M)\subset \Omega^k(M)$
denote the closed differential $k$-forms on $M$ with integer periods. 
Below we recall the definition of Cheeger-Simons differential characters~\cite{CS}. 
\begin{defi}\label{defDiffChar}
A degree $k$ Cheeger-Simons differential character on a manifold $M$
is a group homomorphism $h : Z_{k-1}(M)\to \bbT$ 
from the group $Z_{k-1}(M)$ of $k-1$-cycles on $M$ to the circle group $\bbT := \bbR/\bbZ$
for which there exists a differential form $\omega_h\in\Omega^k(M)$ such that
\begin{flalign}\label{eqn:curvproperty}
h(\del \gamma) = \int_\gamma\, \omega_h \mod \bbZ~,\qquad \forall \gamma \in C_k(M)~,
\end{flalign}
where $\del \gamma$ denotes the boundary of the $k$-chain $\gamma$. 
The Abelian group of Cheeger-Simons differential characters is denoted by $\dH^k(M;\bbZ)$. 
\end{defi}
For a modern perspective on differential cohomology which includes the Cheeger-Simons model
see \cite{SS,BB}. We use the degree conventions of \cite{BB} in which the curvature 
of a differential character in $\dH^k(M;\bbZ)$ is a $k$-form.
The assignment of $\dH^k(M;\bbZ)$ to each manifold $M$ 
is a contravariant functor
\begin{flalign}\label{eqdiffcohofunctor}
\dH^k( - ;\bbZ) : \mathsf{Man}^{\op}\longrightarrow \Ab
\end{flalign}
from the category $\Man$ of manifolds to the category $\Ab$ of Abelian groups. For 
notational convenience, we simply denote by $f^\ast$ the group homomorphism
$\dH^k(f;\bbZ) : \dH^k( M^\prime ;\bbZ) \to \dH^k( M ;\bbZ) $ for any
smooth map $f: M\to M^\prime$.
The functor (\ref{eqdiffcohofunctor}) comes together with four natural transformations which are given by
the curvature map $\cu : \dH^k( - ;\bbZ) \Rightarrow \Omega^k_{\bbZ}(-)$, 
the characteristic class map $\ch : \dH^k( - ;\bbZ) \Rightarrow \H^k(-;\bbZ)$, 
the inclusion of topologically trivial fields $\iota : \Omega^{k-1}(-)/\OmegaZ^{k-1}(-) \Rightarrow \dH^k( - ;\bbZ) $ 
and the inclusion of flat fields $\kappa: \H^{k-1}(-;\bbT)\Rightarrow
\dH^k( - ;\bbZ) $, where $\bbT=\bbR / \bbZ$ is the circle group.
The (functorial) diagram of Abelian groups
\begin{flalign}\label{eqDiffCohoDiagram}
\xymatrix{
		&	0 \ar[d]	&	0 \ar[d]	&	0\ar[d]																\\
0 \ar[r]	&	\frac{\H^{k-1}(M;\bbR)}{\Hf^{k-1}(M;\bbZ)} \ar[r]^-{\tilde \kappa} \ar[d]
					&	\frac{\Omega^{k-1}(M)}{\OmegaZ^{k-1}(M)} \ar[r]^-\dd \ar[d]_-\iota
								&	\dd\Omega^{k-1}(M) \ar[r] \ar[d]^-{\subseteq}					&	0	\\
0 \ar[r]	&	\H^{k-1}(M;\bbT) \ar[r]^-\kappa \ar[d]
					&	\dH^k(M;\bbZ) \ar[r]^-\cu \ar[d]_-\ch
								&	\OmegaZ^k(M) \ar[r] \ar[d]^-{[\,\cdot\,]}					&	0	\\
0 \ar[r]	&	\Ht^k(M;\bbZ) \ar[r] \ar[d]
					&	\H^k(M;\bbZ) \ar[r] \ar[d]
								&	\Hf^k(M;\bbZ) \ar[r] \ar[d] 									&	0	\\
		&	0		&	0		&	0
}
\end{flalign}
is a commutative diagram whose rows and columns are short exact sequences.
\vspace{2mm}

In the remainder of this paper we shall take $M$ to be a time-oriented $m$-dimensional globally hyperbolic
Lorentzian manifold, which we regard as `spacetime'; for a thorough
discussion of Lorentzian geometry including global hyperbolicity see
e.g.\ \cite{BEE,ONeill}, while a brief overview can be found in e.g.~\cite[Section 1.3]{Bar:2007zz}. 
The semi-classical configuration space $\Conf^k(M;\bbZ)$ of interest to us is obtained as the pullback 
\begin{flalign}\label{eqConf}
\xymatrix{
\Conf^k(M;\bbZ) \ar@{-->}[d] \ar@{-->}[r]	&	\dH^{m-k}(M;\bbZ) \ar[d]^-\dcu		\\
\dH^k(M;\bbZ) \ar[r]_-\cu					&	\Omega^k(M)
}
\end{flalign}
By definition, any element $(h,\tilde h) \in \Conf^k(M;\bbZ)\subseteq \dH^k(M;\bbZ) \times \dH^{m-k}(M;\bbZ) $ 
has the property that the curvature of $h$ is the Hodge dual of the
curvature of $\tilde h$, i.e.\ $\cu\, h = \dcu\, \tilde h$. 
We may interpret this condition as being responsible for the
quantization of electric fluxes: the de~Rham cohomology class
of the Hodge dual curvature $\dcu \,h$ is also an element in $\Hf^{m-k}(M;\bbZ)$
and hence electric fluxes are quantized in the same lattice
$\bbZ\subset \bbR$ as magnetic fluxes.
In a similar fashion, we introduce the semi-classical topologically trivial fields $\TT^k(M;\bbZ)$ as the pullback 
\begin{flalign}\label{eqTTConf}
\xymatrix{
\TT^k(M;\bbZ) \ar@{-->}[d] \ar@{-->}[r]		&	\frac{\Omega^{m-k-1}(M)}{\OmegaZ^{m-k-1}(M)}
														\ar[d]^-{\ast\,\dd}								\\
\frac{\Omega^{k-1}(M)}{\OmegaZ^{k-1}(M)}
	\ar[r]_-\dd									&	\Omega^k(M)
}
\end{flalign}
\vspace{1mm}

To simplify notation we will adopt the following useful convention: 
For any graded Abelian group $A^\sharp = \bigoplus_{k\in\bbZ} \, A^k$, we introduce 
\begin{flalign}
A^{p,q} := A^p \times A^q~. 
\end{flalign}
Using \eqref{eqDiffCohoDiagram} we introduce
a new commutative diagram of Abelian groups with exact rows and columns, 
whose central object is the semi-classical configuration space $\Conf^k(M;\bbZ)$. 
\begin{theo}\label{thmConfDiagram}
Consider the two group homomorphisms 
\begin{subequations}
\begin{flalign}
\cu_1: \Conf^k(M;\bbZ) \longrightarrow \OmegaZ^k \cap \ast
\OmegaZ^{m-k}(M)~, \qquad
(h,\tilde h) \longmapsto \cu\, h = \dcu \,\tilde h
\end{flalign}
and 
\begin{flalign}
\dd_1: \TT^k(M;\bbZ) \longrightarrow \dd \Omega^{k-1} \cap \ast \dd
\Omega^{m-k-1}(M)~, \qquad
([A],[\tilde A]) \longmapsto \dd A = \ast \dd \tilde A~.
\end{flalign}
\end{subequations}
Then the diagram of Abelian groups 
\begin{flalign}\label{eqConfDiagram}
\xymatrix{
		&	0 \ar[d]		&	0 \ar[d]		&	0\ar[d]														\\
0 \ar[r]	&	\frac{\H^{k-1,m-k-1}(M;\bbR)}{\Hf^{k-1,m-k-1}(M;\bbZ)} 
				\ar[r]^-{\tilde \kappa \times \tilde \kappa} \ar[d]
						&	\TT^k(M;\bbZ) \ar[r]^-{\dd_1} \ar[d]_-{\iota \times \iota}
										&	\dd\Omega^{k-1} \cap \ast\,\dd\Omega^{m-k-1}(M) 
												\ar[r] \ar[d]^-\subseteq							&	0	\\
0 \ar[r]	&	\H^{k-1,m-k-1}(M;\bbT) \ar[r]^-{\kappa \times \kappa} \ar[d]
						&	\Conf^k(M;\bbZ) \ar[r]^-{\cu_1} \ar[d]_-{\ch \times \ch}
										&	\OmegaZ^k \cap \ast\,\OmegaZ^{m-k}(M)
												\ar[r] \ar[d]^-{([\,\cdot\,],[\ast^{-1}\,\cdot\,])}	&	0	\\
0 \ar[r]	&	\Ht^{k,m-k}(M;\bbZ) \ar[r] \ar[d]
						&	\H^{k,m-k}(M;\bbZ) \ar[r] \ar[d]
										&	\Hf^{k,m-k}(M;\bbZ) \ar[r] \ar[d] 					&	0	\\
		&	0			&	0			&	0
}
\end{flalign}
is a commutative diagram whose rows and columns are short exact sequences.
\end{theo}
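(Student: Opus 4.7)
The proof strategy is to view diagram \eqref{eqConfDiagram} as the Hodge-duality-constrained ``doubling'' of \eqref{eqDiffCohoDiagram} in degrees $k$ and $m-k$. Concretely, $\Conf^k(M;\bbZ)$ and $\TT^k(M;\bbZ)$ are defined as pullbacks along $\cu$ and $\dd$ respectively, and the four natural transformations $\cu$, $\ch$, $\iota$, $\kappa$ of \eqref{eqDiffCohoDiagram}, applied componentwise in degrees $k$ and $m-k$, restrict to maps between the pullback objects because the relations $\cu\circ\iota=\dd$, $\cu\circ\kappa=0$ and $\dd\circ\tilde\kappa=0$ in each degree automatically propagate the Hodge duality constraint (for instance, an element $([A],[\tilde A])\in\TT^k(M;\bbZ)$ satisfies $\dd A=\ast\dd\tilde A$, so $(\iota([A]),\iota([\tilde A]))$ has matching curvatures and lies in $\Conf^k(M;\bbZ)$). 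Commutativity of every square of \eqref{eqConfDiagram} is then inherited from commutativity of the two copies of \eqref{eqDiffCohoDiagram}.

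For exactness I first dispose of the ``direct product'' cases: the bottom row and the left column of \eqref{eqConfDiagram} are literally the direct products in $\Ab$ of the corresponding sequences of \eqref{eqDiffCohoDiagram} in degrees $k$ and $m-k$, hence are short exact. I then verify the top and middle rows by direct inspection of the pullback definitions of $\TT^k$ and $\Conf^k$: injectivity on the left and exactness in the middle reduce to the analogous statements in \eqref{eqDiffCohoDiagram} in each degree, while surjectivity on the right is immediate because any $\omega=\dd A=\ast\dd\tilde A$ (respectively $\omega\in\OmegaZ^k\cap\ast\,\OmegaZ^{m-k}(M)$) admits preimages under $\dd$ (respectively $\cu$) in each degree that assemble into an element of the relevant pullback. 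At this point the nine lemma, applied to the three short-exact rows together with the short-exact left column, reduces the problem to exactness of the right column.

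The only genuine obstacle is surjectivity at the right end of the right column: for any pair $(\alpha,\beta)\in\Hf^{k,m-k}(M;\bbZ)$, one must produce a closed $k$-form $\omega$ with integer periods representing $\alpha$ whose Hodge dual $\ast^{-1}\omega$ is again a closed form with integer periods representing $\beta$. Injectivity and exactness in the middle of this column are algebraically routine: two such forms with the same de~Rham classes differ by exact forms in each degree, and the Hodge duality constraint is inherited by the difference. The surjectivity, however, is an analytic rather than algebraic statement about the simultaneous integer-quantization of fluxes and dual fluxes on a globally hyperbolic spacetime $M$, and I would establish it using the Green-hyperbolic operator techniques and the structural results for Cheeger--Simons characters on globally hyperbolic manifolds developed in the companion paper \cite{BBSS}. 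With this input the right column becomes short exact, and the nine lemma propagates short-exactness to the middle column, completing the proof.
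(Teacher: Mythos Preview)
Your overall architecture is sound and close to the paper's: commutativity is inherited from two copies of \eqref{eqDiffCohoDiagram}, the bottom row and left column are products of short exact sequences, and the top and middle rows follow by inspection of the pullback definitions. Your use of the nine lemma is a clean organizational variant---the paper instead checks surjectivity of $\ch\times\ch$ by a direct diagram chase and verifies exactness at the remaining middle objects by hand---but the logical content is the same, and you have correctly located the single nontrivial step: surjectivity of $([\,\cdot\,],[\ast^{-1}\,\cdot\,])$ onto $\Hf^{k,m-k}(M;\bbZ)$.

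The gap is that you do not actually prove this step, and your pointer to \cite{BBSS} is misplaced: that companion paper treats differential characters with compact support and Pontryagin duality, and does not contain the statement you need. The paper's argument is self-contained and short. Given $(z,\tilde z)\in\Hf^{k,m-k}(M;\bbZ)$, choose closed representatives $\omega\in\OmegaZ^k(M)$ and $\tilde\omega\in\OmegaZ^{m-k}(M)$. Using the retarded/advanced Green's operators $G^\pm$ for the d'Alembertian $\Box=\de\,\dd+\dd\,\de$ together with a partition of unity $\{\chi_+,\chi_-\}$ with past/future compact supports, one solves $[\de\theta]=[\omega]$ in $\H^k_\dR(M)$ and $[\de\tilde\theta\,]=[\tilde\omega]$ in $\H^{m-k}_\dR(M)$ (explicitly, $\theta=G(\dd\chi_+\wedge\omega)$ with $G=G^+-G^-$ works, and similarly for $\tilde\theta$). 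Then $F:=\de\theta+\ast\,\de\tilde\theta$ satisfies $[F]=[\omega]$ and $[\ast^{-1}F]=[\tilde\omega]$, hence $F\in\OmegaZ^k\cap\ast\,\OmegaZ^{m-k}(M)$ and maps to $(z,\tilde z)$. You should replace the appeal to \cite{BBSS} with this explicit construction; once that is done your nine-lemma reduction goes through.
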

\begin{proof}
Commutativity of this diagram follows by construction. Hence we focus on proving that the rows and columns are exact. 
The bottom row and the left column are exact because they are Cartesian products of exact sequences. 
Injectivity of $\iota \times \iota$, $\kappa \times \kappa$ 
and $\tilde \kappa \times \tilde \kappa$ is immediate by \eqref{eqDiffCohoDiagram}. 
\sk

Let us now show that $\dd_1$ and $\cu_1$ are surjective. 
Given $\dd A = \ast\, \dd \tilde A$ for $A \in \Omega^{k-1}(M)$ 
and $\tilde A \in \Omega^{m-k-1}(M)$, we note that $([A], [\tilde A])$ is an element of $\TT^k(M; \bbZ)$ 
and $\dd_1 ([A], [\tilde A]) = \dd A =\ast\, \dd \tilde A$, thus showing that $\dd_1$ is surjective. 
A similar argument applies to $\cu_1$ using surjectivity of $\cu: \dH^p(M;\bbZ) \to \OmegaZ^p(M)$ 
for $p = k$ and for $p = m-k$. 
\sk

To show that $([\,\cdot\,],[\ast^{-1}\,\cdot\,])$ is also surjective,
let us take any $(z,\tilde z) \in \Hf^{k,m-k}(M;\bbZ) \subseteq \H^{k,m-k}(M;\bbR)$
and recall that by de Rham's theorem it can be presented as $(z,\tilde z) = ([\omega],[\tilde \omega])$, 
for some $\omega \in \OmegaZ^k(M)$ and $\tilde \omega \in \OmegaZ^{m-k}(M)$. 
Let $\de = (-1)^{m\, (k-1)}\, \ast\dd\,\ast: \Omega^k(M) \to \Omega^{k-1}(M)$ denote the codifferential. 
We solve the equations $[\de \theta] = [\omega] \in \H^k_\dR(M)$ 
and $[\de \tilde \theta\, ]=[\tilde \omega] \in \H^{m-k}_\dR(M)$ 
for $\theta \in \Omega^{k+1}(M)$ and $\tilde \theta \in \Omega^{m-k+1}(M)$.\footnote{To show that 
a solution exists, let us introduce the d'Alembert operator  
$\Box = \de \, \dd + \dd \, \de$ and consider its retarded/advanced 
Green's operators $G^\pm$, cf.\ \cite{Bar, Bar:2007zz}. 
Let us also consider a partition of unity $\{\chi_+,\chi_-\}$ on $M$ such that 
$\chi_\pm$ has past/future compact support, see \cite{Bar} for a definition of these support systems. 
Then $\theta = G (\dd \chi_+ \wedge \omega)$ is a solution, where $G = G^+ - G^-$ is the causal propagator. 
In fact $\de \theta = \de \, \dd (G^+ (\chi_+ \, \omega) + G^- (\chi_-
\, \omega)) 
= \omega - \dd \, \de (G^+ (\chi_+ \, \omega) + G^- (\chi_- \, \omega))$. 
A similar argument applies to $\tilde \theta$.} 
Introducing $F = \de \theta + \ast\, \de \tilde \theta$, we find $[F] = [\de \theta] = [\omega] \in \H^k_\dR(M)$ 
and $[\ast^{-1} F] = [\de \tilde \theta\, ] = [\tilde \omega] \in \H^{m-k}_\dR(M)$; 
in particular, both $F$ and $\ast^{-1} F$ have integral periods since so do $\omega$ and $\tilde \omega$. 
We conclude that $F \in \OmegaZ^k \cap \ast\, \OmegaZ^{m-k}(M)$. 
\sk

Surjectivity of $\ch \times \ch$ follows from what we have already shown above 
and by using a diagram chasing argument. 
Take any $(x,\tilde x) \in \H^{k,m-k}(M;\bbZ)$. Mapping to the corresponding free group 
and recalling that both $([\,\cdot\,],[\ast^{-1}\,\cdot\,])$ and $\cu_1$ are epimorphisms, 
we find $(h,\tilde h) \in \Conf^k(M;\bbZ)$ whose image along $([\,\cdot\,],[\ast^{-1}\,\cdot\,]) \circ \cu_1$ 
matches the image of $(x,\tilde x)$ in $\Hf^{k,m-k}(M;\bbZ)$.
By exactness of the bottom row, 
$(\ch\, h,\ch\, \tilde h)$ differs from $(x, \tilde x)$ by an element
$(t,\tilde t\, )$ 
of the torsion subgroup $\Ht^{k,m-k}(M;\bbZ)$, i.e.\ $(x, \tilde x) =
(\ch\, h + t,\ch\, \tilde h + \tilde t\, )$. 
Exactness of the left column allows us to find a preimage $(u,\tilde
u) \in \H^{k-1,m-k-1}(M;\bbT)$ for $(t,\tilde t\, )$. 
Commutativity of the diagram then implies that $(h + \kappa\, u,\tilde h + \kappa\, \tilde u)$ 
is a preimage of $(x,\tilde x)$ via $\ch \times \ch$. 
\sk

We still have to check that the first two rows and the last two columns are exact at their middle objects. 
This is a straightforward consequence of the exactness of the corresponding rows and columns in \eqref{eqDiffCohoDiagram}. 
\end{proof}

\begin{rem}
To better motivate the semi-classical configuration space $\Conf^k(M;\bbZ)$ 
we establish below its relation with Maxwell theory. For this purpose we consider the case $m=4$ and $k=2$. 
The usual Maxwell equations (without external sources) for the Faraday tensor $F \in \Omega^2(M)$ 
are $\dd F = 0$ and $\dd \ast F = 0$. These equations are invariant under electric-magnetic duality, 
i.e.\ under the exchange of $F$ and $\ast F$. The standard approach to gauge theory 
consists in the replacement of $F$ with the curvature of (the isomorphism class of) a circle bundle with connection
(equivalently, a differential cohomology class in degree 2). 
In this framework, however, $\ast F$ does not have any geometric interpretation, 
hence the original electric-magnetic duality of Maxwell theory is lost passing to gauge theory. 
Nevertheless, one can present Maxwell equations in an equivalent way, 
which is however better suited for a gauge theoretic extension preserving electric-magnetic duality: 
\begin{flalign}
F = \ast \tilde F~,~~~~~ \dd F = 0~,~~~~~ \dd \tilde F = 0~.
\end{flalign}
Interpreting both $F$ and $\tilde F$ as the curvatures of circle bundles with connections, 
the semi-classical configuration space $\Conf^2(M;\bbZ)$ is obtained 
and the original electric-magnetic duality of Maxwell theory is lifted to $\Conf^2(M;\bbZ)$, 
see Section \ref{secAbeliandualities} for the situation in arbitrary spacetime dimension and degree. 
Notice that the semi-classical configuration space has the same local ``degrees of freedom'' as Maxwell theory. 
In fact, on a contractible spacetime $\Conf^2(M;\bbZ)$ reduces to
the top-right corner in diagram \eqref{eqConfDiagram}. 
Since exact and closed forms are the same on a contractible manifold, Maxwell theory is recovered. 
In conclusion, the semi-classical configuration space $\Conf^2(M;\bbZ)$ 
is a gauge theoretic extension of Maxwell theory that carries the same local information, 
however preserving electric-magnetic duality by matching  
the relevant topological (as opposed to local) data in a suitable way. 
As a by-product, any configuration $(h,\tilde h) \in \Conf^2(M;\bbZ)$ 
realizes the discretization of magnetic and electric fluxes, 
which arise as the characteristic classes $\ch \,h, \ch\, \tilde h \in \H^2(M;\bbZ)$. 
This argument can be made general for higher gauge theories in arbitrary spacetime dimension. 
\end{rem}

\begin{rem}
The semi-classical configuration space is a contravariant functor 
\begin{flalign}
\Conf^k(-;\bbZ): \Loc_m^\op \longrightarrow \Ab
\end{flalign}
from the category $\Loc_m$ of time-oriented $m$-dimensional globally hyperbolic Lorentzian manifolds 
with causal embeddings\footnote{A causal embedding $f: M \to M^\prime$ 
between time-oriented $m$-dimensional globally hyperbolic Lorentzian manifolds 
is an orientation and time-orientation preserving isometric embedding, 
whose image is open and causally compatible, 
i.e.\ $J^\pm_{M^\prime}(f(p)) \cap f(M) = f(J^\pm_M(p))$ for all $p
\in M$; here $J^\pm_M(p)$ denotes the causal future/past of $p\in M$
consisting of all points of $M$ which can be reached by a
future/past-directed smooth causal curve stemming from $p$, see~\cite{Bar:2007zz}.} as morphisms 
to the category $\Ab$ of Abelian groups. For notational convenience, we simply denote by $f^\ast$ 
the group homomorphism $\Conf^k(f;\bbZ): \Conf^k(M^\prime;\bbZ) \to
\Conf^k(M;\bbZ)$ associated with a morphism
$f: M \to M^\prime$ in $\Loc_m$. 
\end{rem}


\subsection{Cauchy problem}\label{secCauchyPbl}
We will now show that the semi-classical configuration space
$\Conf^k(M;\bbZ)$ is the space of solutions of a well-posed Cauchy problem. 
Let us start by recalling a well-known result for the Cauchy problem 
of the Faraday tensor, see e.g.\ \cite{Dappiaggi:2011zs, Fewster:2014hba}
and also \cite[Chapter 3, Corollary 5]{BaerFredenhagen} for details on 
how to treat initial data of not necessarily compact
support. For the related Cauchy problem of the gauge potential see \cite{Sanders:2012sf}. 
Throughout this paper $\Sigma$ will denote a smooth 
spacelike Cauchy surface of $M$ with embedding $\iota_\Sigma: \Sigma \to M$ into~$M$.
\begin{theo}\label{thmCauchyPblF}
For each $(B,\tilde B) \in \Omega^{k,m-k}_\dd(\Sigma)$ (where the subscript ${}_\dd$ denotes closed forms),
there exists a unique solution $F \in \Omega^k(M)$ to the initial value problem 
\begin{subequations}\label{eqCauchyPblF}
\begin{align}
\dd F 			&	= 0~, \qquad	\iota_\Sigma^\ast F	= B~,			\\[4pt]
\dd \ast^{-1} F	&	= 0~, \qquad	\iota_\Sigma^\ast \ast^{-1} F	= \tilde B~,
\end{align}
\end{subequations}
whose support is contained
in the causal future and past of the support of the initial data, i.e.\
$\supp\, F \subseteq J(\supp\, B \cup \supp\, \tilde B)$.
\end{theo}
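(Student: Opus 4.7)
The plan is to reduce the first-order system to the Cauchy problem for the Hodge--d'Alembert operator $\Box = \dd\,\de + \de\,\dd$, which is a normally hyperbolic operator on $\Omega^k(M)$ and therefore admits a well-posed Cauchy problem on the globally hyperbolic spacetime $M$, see e.g.~\cite{Bar, BaerFredenhagen}. Invoking the Bernal--Sánchez splitting I would fix a diffeomorphism $M \simeq \bbR \times \Sigma$ for which $\Sigma$ corresponds to $\{0\} \times \Sigma$, and decompose every $k$-form near $\Sigma$ as $F = F_\parallel + \dd t \wedge F_\perp$ with $F_\parallel$ a time-dependent $k$-form on $\Sigma$ and $F_\perp$ a time-dependent $(k-1)$-form on $\Sigma$. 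In this splitting $\iota_\Sigma^\ast F = F_\parallel\big|_{t=0}$, while $\iota_\Sigma^\ast \ast^{-1} F$ fixes $F_\perp\big|_{t=0}$ up to a Hodge star on $\Sigma$ and lapse-dependent factors. Hence the initial data $(B, \tilde B)$ prescribes $F|_\Sigma$ as a section of $\Lambda^k T^\ast M|_\Sigma$.

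Next I would use the first-order system itself to determine $\nabla_n F|_\Sigma$. Rewriting $\dd F = 0$ and $\de F = 0$ with respect to this splitting produces \emph{constraint} equations $\dd_\Sigma F_\parallel = 0$ and $\de_\Sigma F_\perp = 0$ on $\Sigma$, together with \emph{evolution} equations expressing $\partial_t F_\parallel$ in terms of $\dd_\Sigma F_\perp$ and $\partial_t F_\perp$ in terms of $\de_\Sigma F_\parallel$. The constraints are automatically verified by the hypothesis $(B,\tilde B)\in \Omega^{k,m-k}_\dd(\Sigma)$, since $\de_\Sigma F_\perp\big|_{t=0}$ is identified with a Hodge-dual of $\dd \tilde B = 0$. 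The evolution equations then determine $\nabla_n F|_\Sigma$ uniquely in terms of $(B,\tilde B)$. With these initial data in hand, well-posedness of the Cauchy problem for $\Box$ produces a unique $F \in \Omega^k(M)$ satisfying $\Box F = 0$ together with the required support property $\supp F \subseteq J(\supp B \cup \supp \tilde B)$.

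The final step is to verify that the solution actually satisfies the original first-order system $\dd F = 0$ and $\dd \ast^{-1} F = 0$ on all of $M$. Since $\Box$ commutes with both $\dd$ and $\de$ (as a consequence of $\dd^2 = 0$ and $\de^2 = 0$), both $\dd F$ and $\de F$ satisfy the homogeneous wave equation on forms of degree $k+1$ and $k-1$ respectively. By construction $(\dd F)|_\Sigma = 0$ and $(\de F)|_\Sigma = 0$, and a short calculation using $\Box F = 0$ restricted to $\Sigma$ together with the evolution equations shows that the normal derivatives $\nabla_n(\dd F)|_\Sigma$ and $\nabla_n(\de F)|_\Sigma$ also vanish; uniqueness for the Cauchy problem for $\Box$ then forces $\dd F = 0$ and $\de F = 0$ throughout $M$. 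Uniqueness of $F$ itself is then immediate from uniqueness for $\Box$. The step I expect to be the most delicate is this final constraint-propagation argument: one must carefully verify that the closedness of $B$ and $\tilde B$ together with the prescription of $\nabla_n F|_\Sigma$ combine to make the full first-order Cauchy data of $\dd F$ and $\de F$ vanish on $\Sigma$, rather than only their zeroth-order restrictions.
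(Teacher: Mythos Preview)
The paper does not actually prove Theorem~\ref{thmCauchyPblF}; it merely states it as a ``well-known result'' and refers to \cite{Dappiaggi:2011zs, Fewster:2014hba} and \cite[Chapter~3, Corollary~5]{BaerFredenhagen} for the argument, including the treatment of non-compactly supported initial data. So there is no in-paper proof to compare against.

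Your proposal is the standard route taken in those references and is correct in outline: reduce the first-order closed/coclosed system to the normally hyperbolic problem $\Box F = 0$ by reading off $F\vert_\Sigma$ from $(B,\tilde B)$ and $\nabla_n F\vert_\Sigma$ from the evolution part of $\dd F=0$, $\de F=0$; then invoke global well-posedness and finite propagation speed for $\Box$; finally propagate the constraints using $[\Box,\dd]=[\Box,\de]=0$ and uniqueness. Your identification of the delicate point is accurate: one must check that the full first-order Cauchy data of $\dd F$ and $\de F$ on $\Sigma$ vanish, not merely the pullbacks. In the splitting you set up, the zeroth-order vanishing of $\dd F\vert_\Sigma$ already uses both the constraint $\dd_\Sigma B=0$ and the evolution prescription $\partial_t F_\parallel\vert_{t=0}=\dd_\Sigma F_\perp\vert_{t=0}$, while the vanishing of $\nabla_n(\dd F)\vert_\Sigma$ requires substituting $\Box F=0$ at $t=0$ to convert the appearing $\partial_t^2$-terms into spatial operators acting on the already-prescribed data; the analogous computation works for $\de F$. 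With this bookkeeping done carefully (and the lapse/shift factors tracked, since the Bernal--S\'anchez metric is $-N^2\,\dd t^2 + g_t$ rather than an exact product), the argument goes through exactly as you describe.
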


We consider also the similar well-posed initial value problem for $\tilde F \in \Omega^{m-k}(M)$ given by
\begin{subequations}\label{eqCauchyPblDualF}
\begin{align}
\dd \tilde F 		&	= 0~,	\qquad	\iota_\Sigma^\ast \tilde F	= \tilde B~,		\\[4pt]
\dd \ast \tilde F	&	= 0~, \qquad	\iota_\Sigma^\ast \ast \tilde F	= B~,
\end{align}
\end{subequations}
where the initial data are also specified by $(B,\tilde B) \in \Omega^{k,m-k}_\dd(\Sigma)$.
\sk

Given now any initial data $(B,\tilde B) \in \Omega^{k,m-k}_\dd(\Sigma)$,
let us consider the corresponding unique solutions $F$ and $\tilde F$ of 
the Cauchy problems \eqref{eqCauchyPblF} and \eqref{eqCauchyPblDualF}.
This implies that $F - \ast \tilde F$ solves the Cauchy problem \eqref{eqCauchyPblF} with vanishing initial data, 
and therefore $F = \ast \tilde F$. 
We further show that, given initial data $(B,\tilde B) \in \Omega^{k,m-k}_\bbZ(\Sigma)$ with integral periods,
the corresponding solution $F$ of the Cauchy problem \eqref{eqCauchyPblF}
is such that both $F$ and $\ast^{-1} F$ have integral periods. 
For this, using the results of Lemma \ref{lemChainHomotopy} (i)
we can express each $k$-cycle $\gamma \in Z_k(M)$ as 
$\gamma = \iota_{\Sigma\,\ast} \, \pi_{\Sigma\,\ast} \gamma + \del
h_\Sigma \gamma$, and hence 
\begin{flalign}\label{eqZCauchyPblF}
\int_\gamma \, F = \int_{\pi_{\Sigma\,\ast} \gamma} \,
\iota_\Sigma^\ast F + \int_{h_\Sigma \gamma} \, \dd F 
= \int_{\pi_{\Sigma\,\ast} \gamma} \, B \ \in \ \bbZ~.
\end{flalign}
Similarly, for each $m{-}k$-cycle $\tilde \gamma \in Z_{m-k}(M)$ we have 
\begin{flalign}\label{eqZCauchyPblDualF}
\int_{\tilde \gamma}\, \ast^{-1} F 
= \int_{\pi_{\Sigma\,\ast} \tilde \gamma} \, \iota_\Sigma^\ast\, \ast^{-1} F 
+ \int_{h_\Sigma \tilde \gamma} \, \dd \ast^{-1} F 
= \int_{\pi_{\Sigma\,\ast} \tilde \gamma}\, \tilde B \ \in \ \bbZ~. 
\end{flalign}
Conversely, given $F \in \Omega^k_\bbZ \cap \ast \Omega^{m-k}_\bbZ(M)$
we have $\iota_\Sigma^\ast F \in \Omega^k_\bbZ(\Sigma)$ 
and $\iota_\Sigma^\ast \ast^{-1} F \in \Omega^{m-k}_\bbZ(\Sigma)$. 
Summing up, we obtain
\begin{cor}\label{corCauchyPblDyn}
The embedding $\iota_\Sigma: \Sigma \to M$ of  $\Sigma$ into 
$M$ induces an isomorphism of Abelian groups
\begin{flalign}
\xymatrix{
\Omega^k_\bbZ \cap \ast \Omega^{m-k}_\bbZ(M)
	\ar@<3pt>[rr]^-{(\iota_\Sigma^\ast,\iota_\Sigma^\ast\, \ast^{-1})}
				&&	\Omega^{k,m-k}_\bbZ(\Sigma) \ar@<1pt>[ll]^-{\solve_\Sigma}~~,
}
\end{flalign}
whose inverse $\solve_\Sigma$ is the map assigning to initial data
$(B,\tilde B) \in \Omega^{k,m-k}_\bbZ(\Sigma)$ the corresponding unique solution 
$F \in \Omega^k_\bbZ \cap \ast \Omega^{m-k}_\bbZ(M)$ 
of the Cauchy problem \eqref{eqCauchyPblF}.
\end{cor}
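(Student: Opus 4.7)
The plan is to recognize that the corollary is an assembly of the preceding observations: the dynamical content lives entirely in Theorem \ref{thmCauchyPblF}, and what remains is to check that the restriction map and the candidate inverse $\solve_\Sigma$ are well defined on the integrality-constrained subgroups and mutually inverse. The two maps are manifestly group homomorphisms (by linearity of $\iota_\Sigma^\ast$, $\ast$ and of the Cauchy problem \eqref{eqCauchyPblF}), so only set-theoretic bijectivity needs to be addressed.

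First I would verify that $\solve_\Sigma$ actually takes values in $\Omega^k_\bbZ \cap \ast\,\Omega^{m-k}_\bbZ(M)$. Given $(B,\tilde B)\in\Omega^{k,m-k}_\bbZ(\Sigma)$, apply Theorem \ref{thmCauchyPblF} to produce the unique solution $F\in\Omega^k(M)$ of \eqref{eqCauchyPblF} and, independently, the unique solution $\tilde F\in\Omega^{m-k}(M)$ of \eqref{eqCauchyPblDualF}. Then $F-\ast\tilde F$ solves \eqref{eqCauchyPblF} with vanishing initial data, so by uniqueness $F=\ast\tilde F$, placing $F$ in $\Omega^k\cap\ast\,\Omega^{m-k}(M)$. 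Integrality of the periods of $F$ and of $\ast^{-1}F$ is then exactly the content of the period computations \eqref{eqZCauchyPblF} and \eqref{eqZCauchyPblDualF}, which use the decomposition $\gamma=\iota_{\Sigma\,\ast}\pi_{\Sigma\,\ast}\gamma+\del h_\Sigma\gamma$ from Lemma \ref{lemChainHomotopy} applied to any $k$-cycle $\gamma\in Z_k(M)$ and any $(m{-}k)$-cycle $\tilde\gamma\in Z_{m-k}(M)$, together with the closedness equations in \eqref{eqCauchyPblF}.

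Next I would check that the restriction map $(\iota_\Sigma^\ast,\iota_\Sigma^\ast\,\ast^{-1})$ lands in $\Omega^{k,m-k}_\bbZ(\Sigma)$: closedness of $\iota_\Sigma^\ast F$ and $\iota_\Sigma^\ast\ast^{-1}F$ follows from $\dd F=0=\dd\ast^{-1}F$ together with naturality of $\dd$, and integrality is immediate since for any cycle $\sigma$ in $\Sigma$ the pushforward $\iota_{\Sigma\,\ast}\sigma$ is a cycle in $M$, hence $\int_\sigma \iota_\Sigma^\ast F=\int_{\iota_{\Sigma\,\ast}\sigma} F\in\bbZ$ and analogously for $\ast^{-1}F$.

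Finally, the two composition identities. The equality $(\iota_\Sigma^\ast,\iota_\Sigma^\ast\,\ast^{-1})\circ\solve_\Sigma=\id$ holds tautologically, since $\solve_\Sigma(B,\tilde B)$ was defined as the solution realizing those initial data. For the opposite composition, given $F\in\Omega^k_\bbZ\cap\ast\,\Omega^{m-k}_\bbZ(M)$, both $F$ and $\solve_\Sigma(\iota_\Sigma^\ast F,\iota_\Sigma^\ast\ast^{-1}F)$ are solutions of \eqref{eqCauchyPblF} with the same initial data $(\iota_\Sigma^\ast F,\iota_\Sigma^\ast\ast^{-1}F)$, so the uniqueness half of Theorem \ref{thmCauchyPblF} forces them to coincide. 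The only nontrivial step is the integrality argument in the first paragraph, and even there the required chain-level decomposition is supplied by Lemma \ref{lemChainHomotopy}; once that is invoked, the rest of the proof is routine bookkeeping.
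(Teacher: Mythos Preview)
Your proposal is correct and follows essentially the same approach as the paper: the corollary is stated as a summary of the preceding paragraph, which establishes exactly the integrality via Lemma~\ref{lemChainHomotopy} and the identifications \eqref{eqZCauchyPblF}--\eqref{eqZCauchyPblDualF}, together with the observation $F=\ast\tilde F$ from uniqueness in Theorem~\ref{thmCauchyPblF}. Your write-up is slightly more explicit about the two composition identities, but the substance is identical.
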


Let us consider the central row of the diagram \eqref{eqConfDiagram}. 
Taking into account also naturality of $\kappa$ and $\cu$, one finds that the diagram of Abelian groups
\begin{flalign}\label{eqCauchyPblDiagram}
\xymatrix{
0 \ar[r]	&	\H^{k-1,m-k-1}(M;\bbT)
				\ar[r]^-{\kappa \times \kappa} \ar[d]_-{\iota_\Sigma^\ast \times \iota_\Sigma^\ast}
				&	\Conf^k(M;\bbZ) \ar[r]^-{\cu_1} \ar[d]_-{\iota_\Sigma^\ast \times \iota_\Sigma^\ast}
						& \Omega^k_\bbZ \cap \ast \Omega^{m-k}_\bbZ(M) 
								\ar[r] \ar[d]^-{(\iota_\Sigma^\ast,\iota_\Sigma^\ast\, \ast^{-1})}	&	0	\\
0 \ar[r]	&	\H^{k-1,m-k-1}(\Sigma;\bbT) \ar[r]_-{\kappa \times \kappa}
				&	\dH^{k,m-k}(\Sigma;\bbZ) \ar[r]_-{\cu \times \cu}
						& \Omega^{k,m-k}_\bbZ(\Sigma) \ar[r]									&	0
}
\end{flalign}
commutes and its rows are short exact sequences.
Using also Lemma \ref{lemChainHomotopy} (ii), Corollary~\ref{corCauchyPblDyn} 
and the five lemma, we obtain
\begin{theo}\label{thmCauchyPbl}
The embedding $\iota_\Sigma: \Sigma \to M$ induces an isomorphism of Abelian groups
\begin{flalign}
\xymatrix{\Conf^k(M;\bbZ) \ar[rr]^-{\iota_\Sigma^\ast \times \iota_\Sigma^\ast}
				&&	\dH^{k,m-k}(\Sigma;\bbZ)~.
}
\end{flalign}
\end{theo}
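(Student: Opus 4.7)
The plan is to read this off from the commutative diagram~\eqref{eqCauchyPblDiagram} by a direct application of the five lemma. The setup already provides: the top row is the middle row of Theorem~\ref{thmConfDiagram}, hence a short exact sequence; the bottom row is the central row of the Cheeger--Simons diagram~\eqref{eqDiffCohoDiagram} applied to $\Sigma$ (and then taking the product of the sequences for degrees $k$ and $m-k$), again a short exact sequence; and the commutativity is naturality of $\kappa$ and $\cu$ with respect to the smooth map $\iota_\Sigma : \Sigma \to M$. So the only content to supply is that the left and right vertical arrows are isomorphisms.

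For the right vertical arrow $(\iota_\Sigma^\ast,\iota_\Sigma^\ast\,\ast^{-1}) : \Omega^k_\bbZ \cap \ast\,\Omega^{m-k}_\bbZ(M) \to \Omega^{k,m-k}_\bbZ(\Sigma)$ I would simply invoke Corollary~\ref{corCauchyPblDyn}, which expresses it as the restriction-to-initial-data map of a well-posed Cauchy problem whose inverse is the solve map $\solve_\Sigma$. For the left vertical arrow $\iota_\Sigma^\ast \times \iota_\Sigma^\ast : \H^{k-1,m-k-1}(M;\bbT) \to \H^{k-1,m-k-1}(\Sigma;\bbT)$, I would use Lemma~\ref{lemChainHomotopy}~(ii), which provides the chain-level ingredient — a chain homotopy equivalence between $C_\bullet(\Sigma)$ and $C_\bullet(M)$ induced by $\iota_\Sigma$ and the projection $\pi_\Sigma$ from a Cauchy temporal splitting — so that $\iota_\Sigma^\ast$ induces an isomorphism on singular cohomology with any coefficients, in particular with coefficients in $\bbT$, and then take the product of the two resulting isomorphisms in degrees $k-1$ and $m-k-1$.

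With both outer vertical maps identified as isomorphisms and the two rows short exact, the five lemma yields immediately that the middle vertical map $\iota_\Sigma^\ast \times \iota_\Sigma^\ast : \Conf^k(M;\bbZ) \to \dH^{k,m-k}(\Sigma;\bbZ)$ is an isomorphism of Abelian groups, which is precisely the claim of Theorem~\ref{thmCauchyPbl}.

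There is no genuine obstacle inside this proof: the real work has been offloaded upstream, into Corollary~\ref{corCauchyPblDyn} (existence, uniqueness and support propagation for the Cauchy problem~\eqref{eqCauchyPblF} on possibly non-compact Cauchy surfaces, plus the period calculations~\eqref{eqZCauchyPblF}--\eqref{eqZCauchyPblDualF} ensuring integrality is preserved in both directions) and into Lemma~\ref{lemChainHomotopy} (a chain-homotopy inverse $h_\Sigma$ to $\iota_{\Sigma\,\ast}$, presumably constructed from the flow of a complete timelike vector field provided by a Cauchy temporal function on $M$). Given those, the proof is a one-line diagram chase, so the only thing I would be careful to double-check is that $\cu_1$ in the top row does land in $\Omega^k_\bbZ \cap \ast\,\Omega^{m-k}_\bbZ(M)$ rather than in a larger codomain, and that the right-hand vertical arrow uses exactly the same identification, so that the rightmost square really does commute on the nose.
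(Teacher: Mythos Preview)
Your proposal is correct and matches the paper's own proof essentially verbatim: the paper also sets up diagram~\eqref{eqCauchyPblDiagram}, invokes Lemma~\ref{lemChainHomotopy}~(ii) for the left vertical isomorphism, Corollary~\ref{corCauchyPblDyn} for the right vertical isomorphism, and concludes via the five lemma. Your closing caveat about the codomain of $\cu_1$ is already handled by Theorem~\ref{thmConfDiagram}, so there is nothing further to check.
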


We can interpret the result of Theorem~\ref{thmCauchyPbl}
as establishing the well-posedness of the initial value problem for $(h,\tilde h) \in \dH^{k,m-k}(M;\bbZ)$ given by
\begin{align}\label{eqCauchyPbl}
\cu\, h = \ast \, \cu\, \tilde h~,	\qquad \iota_\Sigma^\ast h = h_\Sigma~,	\quad	\iota_\Sigma^\ast \tilde h = \tilde h_\Sigma~,
\end{align}
for initial data $(h_\Sigma,\tilde h_\Sigma) \in \dH^{k,m-k}(\Sigma;\bbZ)$.
It follows that the semi-classical configuration space $\Conf^k(M;\bbZ)$ arises as 
the space of solutions of this Cauchy problem.

\begin{rem}\label{remPresymplStructureCompact}
If $M$ has compact Cauchy surfaces $\Sigma$, we can easily endow $\Conf^k(M;\bbZ)$ 
with the structure of a presymplectic Abelian group induced by the ring structure $\cdot$ on differential characters, 
see \cite{CS, SS, BB}. For this, we define the circle-valued presymplectic structure 
\begin{flalign}\label{eqsymptmp}
\sigma :	 \Conf^k(M;\bbZ) \times \Conf^k(M;\bbZ)
\longrightarrow			\bbT	~, \qquad
		\big((h,\tilde h)\,,\, (h', \tilde h{}' \, )\big)		\longmapsto		\big(\iota_\Sigma^\ast(\tilde h \cdot h'
																	- \tilde h{}' \cdot h)\big)[\Sigma]~,
\end{flalign}
where $[\Sigma] \in \H_{m-1}(\Sigma)$ denotes the fundamental class of $\Sigma$. 
Using compatibility between the ring structure on differential characters 
and the natural transformations $\iota$, $\kappa$, $\cu$ and $\ch$, 
one can show that $\sigma$ is in fact independent of the choice of $\Sigma$. 
Fixing any Cauchy surface $\Sigma$ and using the isomorphism given in Theorem \ref{thmCauchyPbl},
the presymplectic structure (\ref{eqsymptmp}) can be induced to initial data and thereby agrees with the
one constructed by \cite{Freed:2006yc,Freed:2006ya} from a Hamiltonian perspective.
However, in contrast to \cite{Freed:2006yc,Freed:2006ya} 
our construction does {\em not} depend on the choice of a Cauchy surface, i.e.\ it is generally covariant.
As we show in Section \ref{secObs}, the assumption of compactness of the Cauchy surfaces
can be dropped, provided that one introduces a suitable support restriction
on the semi-classical gauge fields. 
\end{rem}


\section{\label{secSCConf}Dual gauge fields with spacelike compact support}
In this section we introduce and analyze a suitable Abelian group
$\Conf^k_{\rm sc}(M;\bbZ)$ 
of semi-classical gauge fields of spacelike compact support. 
Similarly to the case of the usual quantum field theories on curved spacetimes, 
such as Klein-Gordon theory, the role played by $\Conf^k_{\rm sc}(M;\bbZ)$ will be dual
to that of the semi-classical configuration space $\Conf^k(M;\bbZ)$; in fact,
we shall show in Section~\ref{secObs} that elements in $\Conf^k_{\rm sc}(M;\bbZ)$ 
define functionals (i.e.\ classical observables) on $\Conf^k(M;\bbZ)$ which are 
group characters $\Conf^k(M;\bbZ) \to \bbT$. This dual role of the semi-classical 
gauge fields of spacelike compact support will be reflected mathematically
in the fact that $\Conf^k_{\rm sc}(-;\bbZ): \Loc_m \to \Ab$ is a covariant functor, while
$\Conf^k(-;\bbZ): \Loc_m^\op \to \Ab$ is contravariant.
The correct definition of $\Conf^k_{\rm sc}(M;\bbZ)$ is a 
very subtle point because, in contrast to the standard examples 
like Klein-Gordon theory, the Abelian group $\Conf^k_{\rm sc}(M;\bbZ)$
{\em cannot} be presented as a subgroup of $\Conf^k(M;\bbZ)$, see
Remark \ref{remRelConfNonSub} below.
We give a definition of $\Conf^k_{\rm sc}(M;\bbZ)$ in terms of relative differential cohomology 
and frequently refer to the companion paper \cite{BBSS} 
for further technical details.

\subsection{Semi-classical configuration space}
Let $K\subseteq M$ be a compact subset.
In analogy to \eqref{eqConf}, we define the Abelian group $\Conf^k(M,M \setminus J(K);\bbZ)$ 
of semi-classical gauge fields on $M$ relative to $M \setminus J(K)$ as the pullback
\begin{flalign}\label{eqRelConf}
\xymatrix@C=.5cm{
\Conf^k(M,M \setminus J(K);\bbZ) \ar@{-->}[d] \ar@{-->}[rr]	
		& &	\dH^{m-k}(M,M \setminus J(K);\bbZ) \ar[d]^-\dcu\\
		\dH^k(M,M \setminus J(K);\bbZ) \ar[rr]_-\cu
		& &	\Omega^k(M,M \setminus J(K))
}
\end{flalign}
where $\dH^{p}(M,M \setminus J(K);\bbZ)$ denote the relative differential cohomology groups
and $\Omega^k(M,M \setminus J(K))$ denotes the group of relative differential forms,
see \cite{BB,BBSS} for the definitions and our conventions.
We shall make frequent use of the short exact sequence
\begin{flalign}\label{eqRelConfSES}
{\small{\xymatrix@C=.65cm{
0 \ar[r]	&	\H^{k-1,m-k-1}(M,M \setminus J(K);\bbT)
				\ar[r]^-{\kappa \times \kappa} 
				&	\Conf^k(M,M \setminus J(K);\bbZ)
						\ar[r]^-{\cu_1} 
						& \Omega^k_\bbZ \cap \ast \Omega^{m-k}_\bbZ(M,M \setminus J(K)) 
								\ar[r] 	&	0	\\
} } }
\end{flalign}
for relative semi-classical gauge fields, which immediately follows 
from \cite[Part~II, Section~3.3]{BB} and \cite[Theorem~3.2]{BBSS} 
by imitating the proof of Theorem~\ref{thmConfDiagram}.
\begin{rem}\label{remRelConfNonSub}
One may heuristically think of semi-classical gauge fields on $M$
relative to $M\setminus J(K)$ as fields
on $M$ which ``vanish'' outside of the closed light-cone $J(K)$ of $K$.
However, strictly speaking this interpretation is not correct: There is
a group homomorphism $I : \Conf^k(M,M \setminus J(K);\bbZ) \to \Conf^k(M;\bbZ)$ 
which is induced by the group homomorphisms (denoted with abuse of notation by the same symbols) 
$I : \dH^p(M,M\setminus J(K);\bbZ) \to \dH^p(M,\bbZ)$ that restrict
relative differential characters from relative cycles to cycles by precomposing 
them with the homomorphism $Z_{p-1}(M) \to Z_{p-1}(M,M\setminus J(K))$, cf.\ \cite[Section 3.1]{BBSS}.
By \cite[Remark 3.3]{BBSS} and Theorem \ref{thmRelCauchyPbl} below, 
we observe that the homomorphism $I : \Conf^k(M,M \setminus J(K);\bbZ) \to \Conf^k(M;\bbZ)$ 
is not necessarily injective, which implies that $\Conf^k(M,M \setminus J(K);\bbZ)$ is in general not a subgroup of 
$\Conf^k(M;\bbZ)$. 
\end{rem}

We define the Abelian group $\Conf^k_{\rm sc}(M;\bbZ)$ 
of semi-classical gauge fields of spacelike compact support
by formalizing the intuition that for any element $(h,\tilde h)\in \Conf^k_{\rm sc}(M;\bbZ)$ 
there should exist a sufficiently large compact subset $K \subseteq M$
such that $(h,\tilde h)$ can be represented as an element in $\Conf^k(M,M \setminus J(K);\bbZ)$.
Let us denote by $\mathcal{K}_M$ the directed set of compact subsets of $M$
and notice that the assignment $\Conf^k(M,M \setminus J(-);\bbZ): \mathcal{K}_M  \to \Ab$
is a diagram of shape $\mathcal{K}_M$.\footnote{For this, we use the group homomorphisms
$Z_{p-1}(M,M\setminus J(K^\prime\, )) \to Z_{p-1}(M,M\setminus J(K))$ of relative cycles 
which exist for any $K\subseteq K^\prime$.}
Then the intuition is formalized by taking the colimit of this diagram, i.e.\ we define
the semi-classical gauge fields of spacelike compact support by
\begin{flalign}	\label{eqSCConf}
\Conf^k_\sc(M;\bbZ) := \colim\big(\Conf^k(M,M \setminus J(-);\bbZ): \mathcal{K}_M  \to \Ab\big)~.
\end{flalign}

\begin{rem}\label{remCofinal}
The colimit in \eqref{eqSCConf} can be equally well computed by
restricting to the directed set $\mathcal{K}_\Sigma$ of compact subsets 
of any smooth spacelike Cauchy surface $\Sigma$ of $M$. 
In fact, denoting by $\mathcal{C}_M$ the directed set of closed subsets of $M$, 
one notices that the map $\mathcal{K}_M \to \mathcal{C}_M$, $K \mapsto J(K)$, 
preserves the preorder relation induced by inclusion. 
In particular, we may interpret the functor $\Conf^k(M,M \setminus J(-);\bbZ): \mathcal{K}_M  \to \Ab$ 
as the composition of the functors $\Conf^k(M,M \setminus -;\bbZ): \mathcal{C}_M  \to \Ab$ 
and $J: \mathcal{K}_M \to \mathcal{C}_M$; then $\mathcal{K}_\Sigma \subseteq \mathcal{K}_M$ is cofinal 
with respect to $J: \mathcal{K}_M \to \mathcal{C}_M$.
In fact, for each $K \subseteq M$, we have $J(K) \subseteq J(K_\Sigma)$ for $K_\Sigma = J(K) \cap \Sigma$, 
which is by construction a compact subset of $\Sigma$. This observation provides the isomorphism 
\begin{equation}
\Conf^k_\sc(M;\bbZ) \simeq \colim \big(\Conf^k(M,M \setminus J(-);\bbZ): \mathcal{K}_\Sigma  \to \Ab\big)~.
\end{equation}
\end{rem}

Similarly to Remark \ref{remRelConfNonSub}, there is a group homomorphism (denoted with abuse of notation
by the same symbol)
\begin{flalign}\label{eqn:Imapping}
I : \Conf^k_{\rm sc}(M;\bbZ)  \longrightarrow \Conf^k(M;\bbZ) ~,
\end{flalign}
which is however in general not injective, see \cite[Remark~4.4]{BBSS} 
and Corollary \ref{corSCCauchyPbl} below. Hence semi-classical gauge fields
of spacelike compact support cannot in general be faithfully represented as elements in 
the semi-classical configuration space $\Conf^k(M;\bbZ)$. 


\subsection{Cauchy problem}\label{secCauchyPblRestr} 
Consider any compact subset $K \subseteq \Sigma$. 
Taking into account the support property of the Cauchy problem considered in Theorem \ref{thmCauchyPblF} 
and applying arguments similar to those in Section~\ref{secCauchyPbl} to the relative case, 
in particular \eqref{eqZCauchyPblF} and \eqref{eqZCauchyPblDualF} (see also Lemma \ref{lemRelChainHomotopy} (i)), 
one concludes that, given initial data
$(B,\tilde B) \in \OmegaZ^{k,m-k}(\Sigma,\Sigma \setminus K)$, the Cauchy problem \eqref{eqCauchyPblF} 
has a unique solution $F \in \OmegaZ^k \cap \ast\OmegaZ^{m-k}(M,M \setminus J(K))$. 
This observation leads us to the relative version of Corollary~\ref{corCauchyPblDyn}.
\begin{cor}\label{corRelCauchyPblDyn}
The embedding $\iota_\Sigma: \Sigma \to M$ induces an isomorphism of Abelian groups
\begin{flalign}
\xymatrix{
\Omega^k_\bbZ \cap \ast \Omega^{m-k}_\bbZ(M,M \setminus J(K))
	\ar@<3pt>[rr]^-{(\iota_\Sigma^\ast,\iota_\Sigma^\ast \, \ast^{-1})}
				&&	\Omega^{k,m-k}_\bbZ(\Sigma,\Sigma \setminus K)
						\ar@<1pt>[ll]^-{\solve_\Sigma}~,
}
\end{flalign}
whose inverse $\solve_\Sigma$ is the map assigning 
to initial data $(B,\tilde B) \in \Omega^{k,m-k}_\bbZ(\Sigma,\Sigma \setminus K)$ the corresponding 
unique solution $F  \in \Omega^k_\bbZ \cap \ast \Omega^{m-k}_\bbZ(M,M \setminus J(K))$ 
of the Cauchy problem \eqref{eqCauchyPblF}. 
\end{cor}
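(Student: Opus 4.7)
My plan is to mirror the proof of Corollary \ref{corCauchyPblDyn}, but carefully track supports so that everything stays within the relative groups indexed by the compact set $K \subseteq \Sigma$ (or rather by $J(K) \subseteq M$). The two maps to construct are the pullback $(\iota_\Sigma^\ast, \iota_\Sigma^\ast\,\ast^{-1})$ and the solution map $\solve_\Sigma$; the task is to show each lands in the relative group on the other side, and that they are mutual inverses.

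First I would define $\solve_\Sigma$ as follows. Given initial data $(B,\tilde B) \in \Omega^{k,m-k}_\bbZ(\Sigma,\Sigma \setminus K)$, in particular $\supp B \cup \supp \tilde B \subseteq K$, so Theorem \ref{thmCauchyPblF} produces a unique $F \in \Omega^k(M)$ solving \eqref{eqCauchyPblF} with $\supp F \subseteq J(\supp B \cup \supp \tilde B) \subseteq J(K)$. The support condition places $F$ in the relative differential forms $\Omega^k(M, M \setminus J(K))$, and running the same Cauchy argument on the dual equation \eqref{eqCauchyPblDualF} with data $(B,\tilde B)$ produces a $\tilde F$ of spacelike compact support contained in $J(K)$; since $F - \ast \tilde F$ solves the homogeneous Cauchy problem, uniqueness forces $F = \ast \tilde F$, so $\ast^{-1} F$ also has support in $J(K)$.

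Next I would verify integrality of the periods of $F$ and $\ast^{-1} F$ on relative cycles. This is exactly the relative analogue of the computations \eqref{eqZCauchyPblF} and \eqref{eqZCauchyPblDualF}: using Lemma \ref{lemRelChainHomotopy} (i), any relative cycle $\gamma \in Z_k(M, M \setminus J(K))$ decomposes as $\gamma = \iota_{\Sigma\,\ast}\pi_{\Sigma\,\ast}\gamma + \partial h_\Sigma \gamma$ with $\pi_{\Sigma\,\ast}\gamma \in Z_k(\Sigma, \Sigma \setminus K)$, so that
\begin{flalign*}
\int_\gamma F = \int_{\pi_{\Sigma\,\ast}\gamma} \iota_\Sigma^\ast F + \int_{h_\Sigma \gamma} \dd F = \int_{\pi_{\Sigma\,\ast}\gamma} B \in \bbZ~,
\end{flalign*}
and symmetrically for $\ast^{-1} F$ with data $\tilde B$. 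This shows $F \in \Omega^k_\bbZ \cap \ast\, \Omega^{m-k}_\bbZ(M, M \setminus J(K))$, so $\solve_\Sigma$ is well-defined.

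For the converse direction, I would check that $(\iota_\Sigma^\ast, \iota_\Sigma^\ast\,\ast^{-1})$ lands in $\Omega^{k,m-k}_\bbZ(\Sigma, \Sigma \setminus K)$: the support of $\iota_\Sigma^\ast F$ lies in $\Sigma \cap J(K) \supseteq K$, but actually equals $K$ after restriction since $F$ vanishes outside $J(K)$, so the support of $\iota_\Sigma^\ast F$ is contained in $\Sigma \cap \supp F \subseteq \Sigma \cap J(K)$, which is compact, and any $k$-cycle of $\Sigma$ supported in $\Sigma \setminus K$ can be pushed into a cycle in $M$ supported away from $\supp F$, giving vanishing pairing; integrality on relative cycles of $\Sigma$ follows since these embed as relative cycles in $M$ via $\iota_\Sigma$. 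Finally, the two maps are mutual inverses: $\solve_\Sigma \circ (\iota_\Sigma^\ast, \iota_\Sigma^\ast\,\ast^{-1}) = \id$ follows from uniqueness in Theorem \ref{thmCauchyPblF} (both sides satisfy the same Cauchy problem), and $(\iota_\Sigma^\ast, \iota_\Sigma^\ast\,\ast^{-1}) \circ \solve_\Sigma = \id$ is immediate from the definition of $\solve_\Sigma$.

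The main subtlety I expect is the bookkeeping of the relative support conditions, in particular ensuring that the chain homotopy of Lemma \ref{lemRelChainHomotopy} (i) respects the relative structure, i.e.\ that the push-forward $\pi_{\Sigma\,\ast}$ and the homotopy $h_\Sigma$ really do map $Z_\bullet(M, M\setminus J(K))$ into chains supported away from $\Sigma \setminus K$; once that is granted, the rest of the proof is a straightforward transcription of the absolute case.
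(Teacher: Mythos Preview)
Your proposal is correct and follows essentially the same approach as the paper: the paper's argument (which is the paragraph immediately preceding the corollary) also invokes the support property of Theorem~\ref{thmCauchyPblF} to land in the relative forms, and then reruns the integrality computations \eqref{eqZCauchyPblF}--\eqref{eqZCauchyPblDualF} using the relative chain homotopy of Lemma~\ref{lemRelChainHomotopy}~(i). One small cleanup: since $\Sigma$ is a smooth spacelike Cauchy surface it is acausal, so $\Sigma \cap J(K) = K$ on the nose, which tightens your support check for $\iota_\Sigma^\ast F$.
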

Using \eqref{eqRelConfSES} and \cite[Theorem 3.2]{BBSS},
and the fact that relative differential cohomology is a functor (in a suitable sense, see \cite[Section 3.1]{BBSS}),
we conclude that the diagram of Abelian groups
\begin{flalign}\label{eqRelCauchyPblDiagram}
{\small{\xymatrix@C=.65cm{
0 \ar[r]	&	\H^{k-1,m-k-1}(M,M \setminus J(K);\bbT)
				\ar[r]^-{\kappa \times \kappa} \ar[d]_-{\iota_\Sigma^\ast \times \iota_\Sigma^\ast}
				&	\Conf^k(M,M \setminus J(K);\bbZ)
						\ar[r]^-{\cu_1} \ar[d]_-{\iota_\Sigma^\ast \times \iota_\Sigma^\ast}
						& \Omega^k_\bbZ \cap \ast \Omega^{m-k}_\bbZ(M,M \setminus J(K)) 
								\ar[r] \ar[d]^-{(\iota_\Sigma^\ast,\iota_\Sigma^\ast\, \ast^{-1})}	&	0	\\
0 \ar[r]	&	\H^{k-1,m-k-1}(\Sigma,\Sigma \setminus K;\bbT) \ar[r]_-{\kappa \times \kappa}
				&	\dH^{k,m-k}(\Sigma,\Sigma \setminus K;\bbZ) \ar[r]_-{\cu \times \cu}
						& \Omega^{k,m-k}_\bbZ(\Sigma,\Sigma \setminus K) \ar[r]				&	0
} } }
\end{flalign}
commutes and its rows are short exact sequences. 
Using also Lemma \ref{lemRelChainHomotopy} (ii), Corollary~\ref{corRelCauchyPblDyn}
and the five lemma, we obtain the relative version of Theorem \ref{thmCauchyPbl}.
\begin{theo}\label{thmRelCauchyPbl}
The embedding $\iota_\Sigma: \Sigma \to M$ induces an isomorphism of Abelian groups
\begin{flalign}
\xymatrix{\Conf^k(M,M \setminus J(K);\bbZ) \ar[rr]^-{\iota_\Sigma^\ast \times \iota_\Sigma^\ast}
				&&	\dH^{k,m-k}(\Sigma,\Sigma \setminus K;\bbZ)~.
}
\end{flalign}
\end{theo}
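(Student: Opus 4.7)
My plan is to apply the five lemma to the commutative diagram \eqref{eqRelCauchyPblDiagram}, whose rows are already known to be short exact sequences (this has been established in the paragraph immediately preceding the theorem, using \eqref{eqRelConfSES} and \cite[Theorem 3.2]{BBSS}). The five lemma will promote the central vertical map $\iota_\Sigma^\ast \times \iota_\Sigma^\ast : \Conf^k(M, M\setminus J(K);\bbZ) \to \dH^{k,m-k}(\Sigma, \Sigma\setminus K;\bbZ)$ to an isomorphism, provided both the left and right vertical arrows are isomorphisms.

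The rightmost vertical arrow $(\iota_\Sigma^\ast, \iota_\Sigma^\ast \, \ast^{-1})$ acting on relative closed forms with integral periods is handled exactly by Corollary \ref{corRelCauchyPblDyn}; the inverse is the Cauchy solution operator $\solve_\Sigma$, whose support property ensures that the unique solution with initial data supported in a compact $K\subseteq\Sigma$ lies in $\Omega^k_\bbZ \cap \ast\, \Omega^{m-k}_\bbZ(M, M\setminus J(K))$, as already discussed before Corollary \ref{corRelCauchyPblDyn}. So this step is essentially free.

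The core technical point is the leftmost vertical map $\iota_\Sigma^\ast \times \iota_\Sigma^\ast : \H^{k-1,m-k-1}(M, M\setminus J(K);\bbT) \to \H^{k-1,m-k-1}(\Sigma, \Sigma\setminus K;\bbT)$. My strategy is to obtain this as an immediate consequence of Lemma \ref{lemRelChainHomotopy} (ii), which — by analogy with Lemma \ref{lemChainHomotopy} used in the proof of Theorem \ref{thmCauchyPbl} — should provide a chain homotopy equivalence between the relative singular chain complexes $C_\bullet(\Sigma, \Sigma\setminus K)$ and $C_\bullet(M, M\setminus J(K))$ induced by $\iota_\Sigma$ and a retraction $\pi_\Sigma$ constructed via a global time function, together with a chain homotopy $h_\Sigma$ respecting the relative support conditions. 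A chain homotopy equivalence induces isomorphisms on cohomology with arbitrary coefficients, in particular with coefficients in $\bbT$, so both factors $\H^{k-1}(-;\bbT)$ and $\H^{m-k-1}(-;\bbT)$ are mapped isomorphically, and hence so is their product.

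With both outer vertical maps established as isomorphisms and the rows short exact, the five lemma applies directly in the diagram \eqref{eqRelCauchyPblDiagram} and yields the conclusion. The main obstacle I anticipate — and the reason the technical input is delegated to the companion paper \cite{BBSS} and to Lemma \ref{lemRelChainHomotopy} — is constructing the chain homotopy $h_\Sigma$ on $C_\bullet(M)$ in a way that descends to the relative complex $C_\bullet(M, M\setminus J(K))$, i.e.\ with the property that $h_\Sigma$ sends chains in $M\setminus J(K)$ to chains in $M\setminus J(K)$ (so that $\pi_{\Sigma\,\ast}$ takes values in chains of $\Sigma\setminus K$). This is the relative refinement of the identity $\gamma = \iota_{\Sigma\,\ast}\pi_{\Sigma\,\ast}\gamma + \del h_\Sigma\gamma$ used in \eqref{eqZCauchyPblF}, and once it is in hand the rest of the argument is formal.
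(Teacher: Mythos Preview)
Your proposal is correct and follows exactly the same approach as the paper: the paper's proof consists of precisely the sentence ``Using also Lemma \ref{lemRelChainHomotopy} (ii), Corollary~\ref{corRelCauchyPblDyn} and the five lemma, we obtain the relative version of Theorem \ref{thmCauchyPbl}.'' Your anticipated obstacle --- that the chain homotopy $h_\Sigma$ must preserve $M\setminus J(K)$ --- is indeed the content of Lemma \ref{lemRelChainHomotopy}, whose proof verifies that the homotopy $H_\Sigma(s,(t,x)) = (st,x)$ restricts to $M\setminus J(K)$.
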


Taking the colimit of \eqref{eqRelCauchyPblDiagram} over the directed set $\mathcal K_\Sigma$ of compact subsets of $\Sigma$
and recalling Remark \ref{remCofinal} we find that the diagram of Abelian groups
\begin{flalign}\label{eqSCCauchyPblDiagram}
\xymatrix@C=33pt{
0 \ar[r]	&	\H^{k-1,m-k-1}_{\rm sc}(M;\bbT)
				\ar[r]^-{\kappa \times \kappa} \ar[d]_-{\iota_\Sigma^\ast \times \iota_\Sigma^\ast}
				&	\Conf^k_{\rm sc}(M;\bbZ) \ar[r]^-{\cu_1} \ar[d]_-{\iota_\Sigma^\ast \times \iota_\Sigma^\ast}
						& \Omega^k_{{\rm
                                                    sc},\bbZ} \cap
                                                \ast
                                                \Omega^{m-k}_{{\rm sc},\bbZ}(M) 
								\ar[r]
                                                                \ar[d]^-{(\iota_\Sigma^\ast,\iota_\Sigma^\ast
                                                                  \, \ast^{-1})}	&	0	\\
0 \ar[r]	&	\H^{k-1,m-k-1}_{\rm c}(\Sigma;\bbT) \ar[r]_-{\kappa \times \kappa}
				&	\dH^{k,m-k}_{\rm c}(\Sigma;\bbZ) \ar[r]_-{\cu \times \cu}
						& \Omega^{k,m-k}_{{\rm
                                                    c},\bbZ}(\Sigma) \ar[r]								&	0
}
\end{flalign}
commutes, its rows are short exact sequences and its vertical arrows are isomorphisms. 
The subscript $_{\rm c}$ denotes compact support
and the various groups of this diagram are {\em defined} by these colimits.\footnote{For a 
detailed presentation of differential characters with compact support, see \cite[Section 4]{BBSS}.} 
This shows that $\Conf^k_{\rm sc}(M;\bbZ)$ is the space of solutions
of the Cauchy problem \eqref{eqCauchyPbl} for $(h,\tilde h)\in \dH^{k,m-k}_{\rm sc}(M;\bbZ)$ 
with initial data in $\dH^{k,m-k}_{\rm c}(\Sigma;\bbZ)$.
\begin{cor}\label{corSCCauchyPbl}
The embedding $\iota_\Sigma: \Sigma \to M$ induces an isomorphism of Abelian groups
\begin{flalign}
\xymatrix{\Conf^k_{\rm sc}(M;\bbZ) \ar[rr]^-{\iota_\Sigma^\ast \times \iota_\Sigma^\ast}
				&&	\dH^{k,m-k}_{\rm c}(\Sigma;\bbZ)~.
}
\end{flalign}
\end{cor}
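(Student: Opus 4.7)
The plan is to deduce the corollary from Theorem \ref{thmRelCauchyPbl} by a straightforward colimit argument, leveraging the preparatory material just before the corollary statement. The key observation is that Theorem \ref{thmRelCauchyPbl} already furnishes, for every compact $K \subseteq \Sigma$, an isomorphism
\begin{flalign*}
\iota_\Sigma^\ast \times \iota_\Sigma^\ast: \Conf^k(M,M \setminus J(K);\bbZ)\,\longrightarrow\,\dH^{k,m-k}(\Sigma,\Sigma\setminus K;\bbZ)~,
\end{flalign*}
and these isomorphisms are natural in $K$ with respect to the inclusion-induced maps on relative cycles. So the whole job reduces to identifying both sides after passing to the colimit over $\mathcal{K}_\Sigma$.

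First, I would invoke Remark \ref{remCofinal}, whose point is precisely that $\mathcal{K}_\Sigma \subseteq \mathcal{K}_M$ is cofinal for the purposes of computing $\Conf^k_{\rm sc}(M;\bbZ)$, so that
\begin{flalign*}
\Conf^k_{\rm sc}(M;\bbZ) \;\simeq\; \colim_{K\in\mathcal{K}_\Sigma}\,\Conf^k(M,M \setminus J(K);\bbZ)~.
\end{flalign*}
On the other side, the target group $\dH^{k,m-k}_{\rm c}(\Sigma;\bbZ)$ is \emph{defined} (cf.\ the footnote referring to \cite[Section~4]{BBSS} and the discussion around diagram \eqref{eqSCCauchyPblDiagram}) as the colimit $\colim_{K\in\mathcal{K}_\Sigma}\,\dH^{k,m-k}(\Sigma,\Sigma\setminus K;\bbZ)$. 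Since the colimit functor on $\Ab$ preserves isomorphisms, applying it to the natural family of isomorphisms supplied by Theorem \ref{thmRelCauchyPbl} yields the claimed isomorphism $\iota_\Sigma^\ast \times \iota_\Sigma^\ast$.

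As a consistency check (and an alternative route), I would note that this conclusion is equally read off from the diagram \eqref{eqSCCauchyPblDiagram}: the outer vertical arrows are isomorphisms by the corresponding colimit constructions for cohomology with compact support in $\bbT$-coefficients and for the intersection spaces of closed integral forms of compact support (which is exactly what Corollary \ref{corRelCauchyPblDyn} gives in the colimit), and the rows are short exact by construction, so the five lemma forces the middle vertical arrow to be an isomorphism as well.

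There is no substantial obstacle: all the hard work sits in Theorem \ref{thmRelCauchyPbl} and in the cofinality observation of Remark \ref{remCofinal}. The only point that requires mild care is verifying that the isomorphisms of Theorem \ref{thmRelCauchyPbl} are natural with respect to the structural maps of the diagrams $\Conf^k(M, M\setminus J(-);\bbZ)$ and $\dH^{k,m-k}(\Sigma,\Sigma\setminus -;\bbZ)$ indexed by $\mathcal{K}_\Sigma$, but this is immediate from the functoriality of relative differential cohomology recorded in \cite[Section~3.1]{BBSS} together with the naturality of $\iota_\Sigma^\ast$ and $\ast^{-1}$.
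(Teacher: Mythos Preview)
Your proposal is correct and follows essentially the same approach as the paper: the paper's argument (given in the paragraph preceding the corollary) is precisely to take the colimit of the diagram \eqref{eqRelCauchyPblDiagram} over $\mathcal{K}_\Sigma$, invoke Remark \ref{remCofinal} for cofinality, and observe that all vertical arrows of the resulting diagram \eqref{eqSCCauchyPblDiagram} are isomorphisms since filtered colimits in $\Ab$ preserve isomorphisms. Your two routes (direct colimit of the isomorphisms from Theorem \ref{thmRelCauchyPbl}, or five lemma on \eqref{eqSCCauchyPblDiagram}) are both valid and amount to the same argument the paper gives.
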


The assignment of the 
Abelian groups $\Conf_{\rm sc}^k(M;\bbZ)$ to objects $M$ in $\Loc_m$ 
is a covariant functor
\begin{flalign}
\Conf^k_{\rm sc}(-;\bbZ): \Loc_m \longrightarrow	 \Ab~.
\end{flalign}
The group homomorphism $f_\ast := \Conf^k_{\rm sc}(f;\bbZ) : \Conf^k_{\rm sc}(M;\bbZ) 
\to \Conf^k_{\rm sc}(M^\prime;\bbZ)$ associated with a morphism 
$f : M\to M^\prime$ in $\Loc_m$ is constructed in Lemma \ref{lemSCPushforward}.

\begin{rem}\label{remSequencesSCnatural}
With a similar construction as in Lemma \ref{lemSCPushforward}, we obtain two more functors
\begin{flalign}
\H^{k-1,m-k-1}_{\rm sc}(-;\bbT) : \Loc_m \longrightarrow \Ab~~,
\qquad \Omega^k_{{\rm sc},\bbZ} \cap \ast \Omega^{m-k}_{{\rm
    sc},\bbZ}(-) :   \Loc_m \longrightarrow \Ab ~.
\end{flalign}
Using these constructions one can further show that the 
short exact sequence in the first row of the diagram \eqref{eqSCCauchyPblDiagram}
is natural, i.e.\ for any morphism $f : M\to M^\prime$  in $\Loc_m$, the diagram
of Abelian groups
\begin{flalign}
\xymatrix@C=33pt{
0 \ar[r]	&	\H^{k-1,m-k-1}_{\rm sc}(M;\bbT)
				\ar[r]^-{\kappa \times \kappa} \ar[d]_-{f_{\ast}}
				&	\Conf^k_{\rm sc}(M;\bbZ) \ar[r]^-{\cu_1} \ar[d]_-{f_{\ast}}
						& \Omega^k_{{\rm
                                                    sc},\bbZ} \cap
                                                \ast
                                                \Omega^{m-k}_{{\rm sc},\bbZ}(M) 
								\ar[r] \ar[d]^-{f_{\ast}}	&	0	\\
0 \ar[r]	&	\H^{k-1,m-k-1}_{\rm sc}(M^\prime;\bbT)
				\ar[r]_-{\kappa \times \kappa} 
				&	\Conf^k_{\rm sc}(M^\prime;\bbZ) \ar[r]_-{\cu_1} 
						& \Omega^k_{{\rm
                                                    sc},\bbZ} \cap
                                                \ast
                                                \Omega^{m-k}_{{\rm sc},\bbZ}(M^\prime\, ) 
								\ar[r] 	&	0	
}
\end{flalign}
commutes. Here we also use the notation $f_\ast$ for the group homomorphisms 
$\H^{k-1,m-k-1}_{\rm sc}(f;\bbT) $ and $\Omega^k_{{\rm sc},\bbZ} \cap
\ast \Omega^{m-k}_{{\rm sc},\bbZ}(f) $.
\end{rem}


\section{\label{secObs}Observables for dual gauge fields}
In this section we introduce and analyze a suitable Abelian group
$\Obs^k(M;\bbZ)$ of basic semi-classical observables.
In general, observables are given by functionals on the configuration
space of the field theory.
Recalling that the semi-classical configuration space $\Conf^k(M;\bbZ)$ is an Abelian group,
there is a distinguished class of observables given by the
group characters $\Conf^k(M;\bbZ)^\star := \Hom_{\Ab}(\Conf^k(M;\bbZ),\bbT)$.
However, generic group characters define observables that are too singular
for quantization, hence it is reasonable to impose a suitable regularity condition
in the spirit of smooth Pontryagin duality \cite{HLZ, Becker:2014tla}. 
After defining and analyzing the smooth Pontryagin dual $\Obs^k(M;\bbZ)$ of $\Conf^k(M;\bbZ)$,
we shall show that it is isomorphic to the Abelian group $\Conf^k_{\rm sc}(M;\bbZ)$
of semi-classical gauge fields of spacelike compact support.
Generalizing the constructions of Remark \ref{remPresymplStructureCompact}
to the case of not necessarily compact Cauchy surfaces, we obtain a natural presymplectic structure
on the Abelian group of semi-classical observables $\Obs^k(M;\bbZ)$.
We shall analyze properties of these presymplectic Abelian groups
in view of the axioms of locally covariant quantum field theory \cite{Brunetti:2001dx}.

\subsection{Semi-classical observables}
We shall begin by imposing a suitable regularity condition on the 
Abelian group of group characters $\Conf^k(M;\bbZ)^\star $ 
of the semi-classical configuration space.

\begin{defi}\label{defSmoothDual}
The Abelian group $\Obs^k(M;\bbZ)$ of semi-classical observables
is the following subgroup of $\Conf^k(M;\bbZ)^\star$:
A group character $\varphi \in \Conf^k(M;\bbZ)^\star$ is an element in 
$\Obs^k(M;\bbZ)$ if and only if there exists 
$\omega = \ast \tilde \omega \in \Omega^k_{{\rm sc},\bbZ} \cap \ast\Omega^{m-k}_{{\rm sc},\bbZ}(M)$ 
and a smooth spacelike Cauchy surface $\Sigma$ of $M$ such that
\begin{flalign}\label{eqSmoothDual}
\varphi\big((\iota \times \iota)([A],[\tilde A])\big) 
= \int_\Sigma \, \big( \tilde A \wedge \omega - (-1)^{k\, (m-k)} \, A \wedge \tilde \omega\big) \mod \bbZ~, 
\end{flalign}
for all  semi-classical topologically trivial fields 
$([A],[\tilde A]) \in \TT^k(M;\bbZ)$.
\end{defi}

We now prove that Definition~\ref{defSmoothDual} does not depend on the choice
of Cauchy surface $\Sigma$ used to 
evaluate the integral (\ref{eqSmoothDual}).
For this, notice that
$\omega = \ast \tilde \omega \in \Omega^k_{{\rm sc},\bbZ} \cap \ast\Omega^{m-k}_{{\rm sc},\bbZ}(M)$ 
implies $\Box \, \omega =0$ and $\Box \, \tilde\omega =0$, where $\Box
= \de \, \dd + \dd\, \de$
is the d'Alembert operator. By \cite{Bar:2007zz, Bar} there exists
$\tilde\beta \in\Omega^{m-k}_{\rm c}(M)$ such 
that $\tilde\omega = G\tilde\beta$, where
$G = G^+ - G^-$ is the causal propagator and $G^\pm$ are the retarded/advanced Green's operators
of $\Box$.
We further have $\omega = \ast\tilde \omega = G{\ast}\tilde\beta$.
Because of $\dd \omega =0$ and $\dd\tilde\omega=0$,
there exist $\alpha\in\Omega^{k+1}_{\rm c}(M)$ and $\tilde\alpha\in\Omega^{m-k+1}_{\rm c}(M)$ 
such that $\dd {\ast}\tilde \beta = \Box \alpha$ and $\dd\tilde\beta = \Box\tilde\alpha$.
Using these observations, and realizing $\Sigma$ as the boundary of $J^-(\Sigma)\subseteq M$
and also as the boundary of $J^+(\Sigma)\subseteq M$ (with opposite orientation),
we can rewrite \eqref{eqSmoothDual} as
\begin{flalign}
\nn \varphi\big((\iota \times \iota)([A],[\tilde A])\big)  &=
\int_{\Sigma}\, \big(
 \tilde A \wedge G{\ast}\tilde\beta - (-1)^{k\, (m-k)}\, A \wedge
 G\tilde\beta \, \big) \mod \bbZ\\[4pt]
\nn &=\int_{J^-(\Sigma)} \, \dd\big( \tilde A \wedge
G^+{\ast}\tilde\beta - (-1)^{k\, (m-k)} \, A \wedge  G^+\tilde\beta \,
\big)\mod\bbZ\\
\nn &~~~~~~~~+ \int_{J^+(\Sigma)} \, \dd\big( \tilde A \wedge
G^-{\ast}\tilde\beta - (-1)^{k\, (m-k)} \, A \wedge  G^-\tilde\beta \,
\big)\mod\bbZ\\[4pt]
&= \int_M\, \big( (-1)^{m-k}\,\tilde A \wedge \alpha 
- (-1)^{k\, (m-k)}\,(-1)^k \, A\wedge \tilde\alpha\big)\mod\bbZ~,\label{eqSmoothDualspacetime}
\end{flalign}
where we have also used $\dd A = \ast \dd \tilde A$. It then follows that \eqref{eqSmoothDual} is independent
of the choice of Cauchy surface because \eqref{eqSmoothDualspacetime} shows that it can be written
as an integral over spacetime $M$.

\begin{rem}\label{remSmoothDual}
There is an alternative but equivalent definition of the
Abelian group $\Obs^k(M;\bbZ)$ of semi-classical observables 
which employs the notion of smooth Pontryagin duality developed in \cite{HLZ, Becker:2014tla}. 
Taking the smooth Pontryagin dual of the pullback diagram \eqref{eqConf} which defines
the semi-classical configuration space $\Conf^k(M;\bbZ)$, we define
an Abelian group $\Conf^k(M;\bbZ)^\star_\infty$ (called the smooth Pontryagin dual of $\Conf^k(M;\bbZ)$)
via the pushout
\begin{flalign}
\xymatrix{
\Omega_{\rm c}^k(M) \ar[d]_-{\cu^\star} \ar[r]^-{(\dcu)^\star}	&	\dH^{m-k}(M;\bbZ)^\star_{\infty} \ar@{-->}[d]			\\
\dH^k(M;\bbZ)^{\star}_{\infty} \ar@{-->}[r]					&	\Conf^k(M;\bbZ)^\star_\infty
}
\end{flalign}
where $\dH^p(M;\bbZ)^{\star}_{\infty}$ denotes the smooth Pontryagin dual of $\dH^p(M;\bbZ)$.
This pushout may be realized explicitly as the quotient
\begin{flalign}\label{eqSmoothDualIso}
\Conf^k(M;\bbZ)^\star_\infty 
=  \frac{\dH^{k}(M;\bbZ)^\star_\infty \oplus \dH^{m-k}(M;\bbZ)^\star_\infty }{\big\{ \cu^\star\omega \oplus - (\dcu)^\star\omega \, :\,  \omega \in \Omega_{\rm c}^k(M)\big\}}~.
\end{flalign}
One can show that the Abelian group $\Conf^k(M;\bbZ)^\star_\infty $ is isomorphic
to the Abelian group $\Obs^k(M;\bbZ)$ of semi-classical observables given in 
Definition \ref{defSmoothDual}. As we do not need this isomorphism in
this paper, we refrain from writing it out explicitly. Let us just point out that 
the elements of the smooth Pontryagin dual are in particular continuous group characters. 
In fact, on account of \cite[Appendix A]{Becker:2014tla}, 
all differential cohomology groups on a manifold of finite type are Fr\'echet-Lie groups 
that are (non-canonically) isomorphic to the Cartesian product of a torus, a torsion group, 
a discrete lattice in a Euclidean space (all finite dimensional) and a Fr\'echet vector space of differential forms. 
This observation allows one to conclude that the elements of the smooth Pontryagin dual 
are continuous group characters with respect to the Fr\'echet topology mentioned above. 
\end{rem}

We now show that the assignment of the Abelian groups $\Obs^k(M;\bbZ)$ of semi-classical observables
to objects $M$ in $\Loc_m$ is a covariant functor
\begin{flalign}\label{eqObsFunctor}
\Obs^k(-;\bbZ): \Loc_m \longrightarrow \Ab~.
\end{flalign}
For this, note that
the assignment of character groups $\Conf^k(M;\bbZ)^\star = \Hom_{\Ab}(\Conf^k(M;\bbZ),\bbT)$
(without the regularity condition of Definition \ref{defSmoothDual})
to objects $M$ in $\Loc_m$ is a covariant functor 
$\Conf^k(-;\bbZ)^\star : \Loc_m\to\Ab$: Given any morphism
$f : M\to M^\prime$ in $\Loc_m$, functoriality of the semi-classical configuration spaces
provides us with a group homomorphism $f^\ast = \Conf^k(f;\bbZ): \Conf^k(M^\prime;\bbZ)\to \Conf^k(M;\bbZ)$,
which we can dualize to a group homomorphism (called pushforward)
$f_\ast :=\Conf^k(f;\bbZ)^\star  = (f^\ast)^\star: \Conf^k(M;\bbZ)^\star\to \Conf^k(M^\prime;\bbZ)^\star$
between the character groups. It remains to show that these group homomorphisms
induce group homomorphisms $f_\ast : \Obs^k(M;\bbZ)\to \Obs^k(M^\prime;\bbZ)$, i.e.\ 
that pushforwards preserve the regularity condition of Definition  \ref{defSmoothDual}.
Let $\varphi \in \Obs^k(M;\bbZ)$ and 
$\omega = \ast \tilde \omega \in \Omega^k_{{\rm sc},\bbZ} \cap \ast \Omega^{m-k}_{{\rm sc},\bbZ}(M)$ 
be as in Definition \ref{defSmoothDual}. 
Exploiting the Cauchy problem described by Corollary \ref{corRelCauchyPblDyn},
we can easily push forward $\omega$ and $\tilde\omega$ to
$f_\ast \omega = \ast f_\ast \tilde\omega \in \Omega^k_{{\rm sc},\bbZ} \cap \ast \Omega^{m-k}_{{\rm sc},\bbZ}(M^\prime\, )$
by pushing forward the initial data from a Cauchy surface $\Sigma\subseteq M$
to a suitable Cauchy surface $\Sigma^\prime\subseteq M^\prime$.\footnote{
A suitable Cauchy surface $\Sigma^\prime\subseteq M^\prime$ can be constructed follows: 
Let $U\subseteq \Sigma$ be an open neighborhood
of $\supp\,\omega \cap \Sigma$ with compact closure. 
Then the image $f(\, \overline{U}\, )\subseteq M^\prime$ of the closure $\overline{U}$ of $U$ 
is a spacelike and acausal compact submanifold with boundary, and we can take
$\Sigma^\prime$ to be any Cauchy surface extending $f(\,
\overline{U}\, )$; see \cite[Theorem 1.1]{Bernal:2005qf}
for the existence of such a Cauchy surface.
}
 By construction, we have
\begin{flalign}
\nn f_\ast \varphi\big((\iota \times \iota)([A],[\tilde A])\big) 
&	= \varphi\big((\iota \times \iota)([f^\ast A],[f^\ast \tilde A])\big) \\[4pt]
\nn &	= \int_\Sigma\, \big( f^\ast \tilde A \wedge \omega -
(-1)^{k\, (m-k)} \, f^\ast A \wedge \tilde \omega\big) \mod \bbZ \\[4pt]
&	= \int_{\Sigma^\prime}\, \big(\tilde A \wedge f_\ast \omega 
	- (-1)^{k\, (m-k)}\, A \wedge f_\ast \tilde \omega\big) \mod \bbZ~,
\end{flalign}
which shows that $f_\ast \varphi\in \Obs^k(M^\prime;\bbZ)$ as required.


\subsection{\label{subObsSCConf}Observables from spacelike compact gauge fields}
We shall now show that the Abelian group $\Conf^k_{\rm sc}(M;\bbZ)$ of
semi-classical gauge fields of spacelike compact support is isomorphic to the Abelian group
$\Obs^k(M;\bbZ)$ of semi-classical observables introduced in Definition  \ref{defSmoothDual}.
Using the techniques which allow us to establish this isomorphism,
we shall also prove that $\Obs^k(M;\bbZ)$ is large enough to separate points of the semi-classical configuration space
$\Conf^k(M;\bbZ)$, i.e.\ for $(h,\tilde h), (h',\tilde h'\, )\in \Conf^k(M;\bbZ)$ we have
$\varphi\big((h,\tilde h)\big) = \varphi\big((h',\tilde h'\, )\big)$ for all 
$\varphi \in \Obs^k(M;\bbZ)$ if and only if $(h,\tilde h) = (h',\tilde h'\,)$.
\sk 

By~\cite[Section~5.2]{BBSS}, for any smooth
spacelike Cauchy surface $\Sigma$ of $M$ there is a $\bbT$-valued pairing 
$\ips{\cdot}{\cdot}_{\rm c} : \dH^{m-p}(\Sigma;\bbZ) \times \dH^p_{\rm c}(\Sigma;\bbZ)\to\bbT$
between differential cohomology and compactly supported differential cohomology. Using the isomorphisms
given in Theorem~\ref{thmCauchyPbl} and Corollary~\ref{corSCCauchyPbl}, we define
a $\bbT$-valued pairing between $\Conf^k(M;\bbZ)$ and $\Conf^k_{\rm sc}(M;\bbZ)$ by
\begin{flalign}
\nn \ips{\cdot}{\cdot}:	\Conf^k(M;\bbZ) \times \Conf^k_{\rm sc}(M;\bbZ)	&	\longrightarrow \bbT~,						\\
\big((h,\tilde h)\,,\,(h',\tilde h'\,)\big)	&	\longmapsto
\ips{\iota_\Sigma^\ast \tilde h}{\iota_\Sigma^\ast h'\, }_{\rm c}- 
(-1)^{k\,(m-k)}\,\ips{\iota_\Sigma^\ast h}{\iota_\Sigma^\ast\tilde h'\,}_{\rm c}~.		\label{eqPairingConf}
\end{flalign}
In Lemma \ref{lemPairingSigmaindep} we show that this pairing does not depend on
the choice of Cauchy surface $\Sigma$ and we prove its naturality in the sense that
for any morphism $f: M \to M^\prime$ in $\Loc_m$ the diagram of Abelian groups
\begin{flalign}
\xymatrix{
\ar[d]_-{\id\times f_\ast}\Conf^k(M^\prime;\bbZ) \times \Conf^k_{\rm sc}(M;\bbZ) \ar[rr]^-{f^\ast \times \id} && \Conf^k(M;\bbZ) \times \Conf^k_{\rm sc}(M;\bbZ)\ar[d]^-{\ips{\cdot}{\cdot}}\\
\Conf^k(M^\prime;\bbZ) \times \Conf^k_{\rm sc}(M^\prime;\bbZ)\ar[rr]_-{\ips{\cdot}{\cdot}} && \bbT
}
\end{flalign}
commutes. 
\sk

By partial evaluation, the pairing \eqref{eqPairingConf} allows us to define group characters on 
$\Conf^k(M;\bbZ)$: For any $(h',\tilde h'\, )\in \Conf^k_{\rm sc}(M;\bbZ)	$
there is a group character
\begin{flalign}\label{eqpartialeval}
\ips{\cdot\, }{\, (h',\tilde h'\, )} : \Conf^k(M;\bbZ)\longrightarrow
\bbT~,~~(h,\tilde h)\longmapsto \ips{(h,\tilde h)\, }{\, (h' ,\tilde h'\, )}~.
\end{flalign}
The next result in particular allows us to separate points of the semi-classical configuration space
$\Conf^k(M;\bbZ)$ by using only such group characters.
\begin{propo}\label{prpNonDeg}
The pairing $\ips{\cdot}{\cdot}$ introduced in \eqref{eqPairingConf} is weakly non-degenerate. 
\end{propo}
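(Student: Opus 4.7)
The plan is to push the weak non-degeneracy question from the spacetime $M$ down to a smooth spacelike Cauchy surface $\Sigma$, where it follows from the weak non-degeneracy of the compactly-supported pairing $\langle \cdot, \cdot\rangle_{\rm c}$ on $\Sigma$ established in \cite{BBSS}. By construction \eqref{eqPairingConf}, the pairing on $M$ is defined as the appropriate antisymmetric combination of two copies of $\langle \cdot, \cdot\rangle_{\rm c}$, precomposed with the restriction isomorphisms of Theorem \ref{thmCauchyPbl} and Corollary \ref{corSCCauchyPbl}. Since those restriction maps are isomorphisms, weak non-degeneracy on $M$ is equivalent to weak non-degeneracy of the induced pairing
\begin{flalign*}
\big\langle (h_\Sigma,\tilde h_\Sigma)\,,\,(h'_\Sigma,\tilde h'_\Sigma)\big\rangle
:= \ips{\tilde h_\Sigma}{h'_\Sigma}_{\rm c} - (-1)^{k\,(m-k)}\,\ips{h_\Sigma}{\tilde h'_\Sigma}_{\rm c}
\end{flalign*}
on $\dH^{k,m-k}(\Sigma;\bbZ)\times \dH^{k,m-k}_{\rm c}(\Sigma;\bbZ)$.

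Next I would treat each kernel by using the Cartesian product structure: to show that $(h'_\Sigma,\tilde h'_\Sigma)$ lies in the right kernel only if it vanishes, I would first test against elements of the form $(0,\tilde h_\Sigma)$ with $\tilde h_\Sigma \in \dH^{m-k}(\Sigma;\bbZ)$ arbitrary, which yields $\ips{\tilde h_\Sigma}{h'_\Sigma}_{\rm c}=0$ for all $\tilde h_\Sigma$, and then against $(h_\Sigma,0)$ with $h_\Sigma\in \dH^k(\Sigma;\bbZ)$ arbitrary, which yields $\ips{h_\Sigma}{\tilde h'_\Sigma}_{\rm c}=0$ for all $h_\Sigma$. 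The weak non-degeneracy of $\langle \cdot,\cdot\rangle_{\rm c}$ recalled from \cite[Section~5.2]{BBSS} then forces $h'_\Sigma=0$ and $\tilde h'_\Sigma=0$. The argument for the left kernel is entirely symmetric: test against $(h'_\Sigma,0)$ and $(0,\tilde h'_\Sigma)$ separately.

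Two points warrant particular attention. First, I must verify that the isomorphisms $\Conf^k(M;\bbZ)\simeq \dH^{k,m-k}(\Sigma;\bbZ)$ and $\Conf^k_{\rm sc}(M;\bbZ)\simeq \dH^{k,m-k}_{\rm c}(\Sigma;\bbZ)$ do split as products of two independent restrictions, so that I can freely set one component to zero while the other ranges over the full differential cohomology group; this is immediate from the product form of the isomorphisms in Theorem \ref{thmCauchyPbl} and Corollary \ref{corSCCauchyPbl}. Second, the Cauchy-surface independence of the pairing (asserted just after \eqref{eqPairingConf} and established in Lemma \ref{lemPairingSigmaindep}) ensures that any convenient choice of $\Sigma$ is allowed; no compatibility obstructions arise from varying $\Sigma$.

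The only genuine input is the weak non-degeneracy of $\langle \cdot, \cdot\rangle_{\rm c}$ on $\Sigma$, which is a cohomological statement proved in the companion paper \cite{BBSS}; I would expect this to be the sole non-trivial ingredient, and everything else is a direct consequence of the product structure together with the Cauchy-surface isomorphisms. There is no delicate interplay between the two summands because the antisymmetric combination pairs $h$ only against $\tilde h'$ and $\tilde h$ only against $h'$, so the two kernel tests decouple cleanly.
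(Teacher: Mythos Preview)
Your proposal is correct and follows essentially the same approach as the paper: both reduce the question to the Cauchy surface via the isomorphisms of Theorem~\ref{thmCauchyPbl} and Corollary~\ref{corSCCauchyPbl}, and then invoke the weak non-degeneracy of $\ips{\cdot}{\cdot}_{\rm c}$ from \cite{BBSS} (the paper cites \cite[Corollary~5.6]{BBSS} specifically). Your explicit decoupling by testing against $(0,\tilde h_\Sigma)$ and $(h_\Sigma,0)$ is a helpful elaboration of what the paper leaves implicit, but it is not a different route.
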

\begin{proof}
Recalling Theorem \ref{thmCauchyPbl} and Corollary \ref{corSCCauchyPbl}, 
the pullback along the embedding $\iota_\Sigma:\Sigma\to M$ provides isomorphisms 
$\Conf^k(M;\bbZ) \simeq \dH^{k,m-k}(\Sigma;\bbZ)$ 
and $\Conf^k_{\rm sc}(M;\bbZ) \simeq \dH^{k,m-k}_{\rm c}(\Sigma;\bbZ)$. 
Using these isomorphisms, the pairing \eqref{eqPairingConf} corresponds precisely
to the pairing
\begin{flalign}
\nn \ips{\cdot}{\cdot}_{\Sigma}^{} : \dH^{k,m-k}(\Sigma;\bbZ) \times \dH^{k,m-k}_{\rm c}(\Sigma;\bbZ)	&	\longrightarrow	\bbT~,	\\
\big((h,\tilde h)\,,\,(h',\tilde h'\, )\big)	&	\longmapsto
\ips{\tilde h}{h'\, }_{\rm c} - (-1)^{k\, (m-k)} \, \ips{h}{\tilde
  h'\, }_{\rm c} ~\label{eqPairingBiDiffChar}
\end{flalign}
 between initial data on $\Sigma$. The proof then follows from weak non-degeneracy
 of the pairing $\ips{\cdot}{\cdot}_{\rm c} : \dH^{m-p}(\Sigma;\bbZ) \times \dH^p_{\rm c}(\Sigma;\bbZ)\to\bbT$, 
 cf.\ \cite[Corollary~5.6]{BBSS}.
\end{proof}

Finally, we show that the partial evaluation \eqref{eqpartialeval} 
establishes an isomorphism between $\Conf^k_{\rm sc}(M;\bbZ)$
and the Abelian group $\Obs^k(M;\bbZ)$ of semi-classical
observables introduced in Definition~\ref{defSmoothDual}.
\begin{propo}\label{prpSCConfObsIso}
The group homomorphism
\begin{flalign}\label{eqtmphom}
\obs: \Conf^k_{\rm sc}(M;\bbZ)		 \longrightarrow
\Obs^k(M;\bbZ)~, \qquad 
(h',\tilde h'\, )					\longmapsto
\ips{\cdot\, }{\, (h',\tilde h'\, )}~,
\end{flalign}
is an isomorphism which provides a natural isomorphism
between the functors $\Conf^k_{\rm sc}(-;\bbZ): \Loc_m \to \Ab$ 
and $\Obs^k(-;\bbZ): \Loc_m \to \Ab$. 
\end{propo}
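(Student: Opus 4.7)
The plan is to verify in turn that $\obs$ is well-defined (i.e.\ its image lies in $\Obs^k(M;\bbZ)$), injective, surjective, and natural in $M$.

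\textbf{Well-definedness.} Given $(h',\tilde h')\in \Conf^k_{\rm sc}(M;\bbZ)$, the common curvature $\omega := \cu\, h' = \ast\,\cu\,\tilde h'$ lies in $\Omega^k_{{\rm sc},\bbZ} \cap \ast\,\Omega^{m-k}_{{\rm sc},\bbZ}(M)$ by the definition of $\Conf^k_{\rm sc}(M;\bbZ)$. I would unwind the partial evaluation $\ips{\cdot}{(h',\tilde h')}$ on a topologically trivial input $(\iota\times\iota)([A],[\tilde A])$ using the concrete expression for the pairing $\ips{\cdot}{\cdot}_{\rm c}$ on topologically trivial arguments from \cite[Section~5.2]{BBSS}: this reduces to the wedge-product integral on $\Sigma$, and the result coincides with the right-hand side of \eqref{eqSmoothDual} for this choice of $\omega$. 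Hence $\obs(h',\tilde h')\in \Obs^k(M;\bbZ)$.

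\textbf{Injectivity} is immediate from Proposition~\ref{prpNonDeg}: $\obs(h',\tilde h')=0$ means $\ips{(h,\tilde h)}{(h',\tilde h')}=0$ for every $(h,\tilde h)\in \Conf^k(M;\bbZ)$, forcing $(h',\tilde h')=0$ by weak non-degeneracy.

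\textbf{Surjectivity} is the main technical content. Let $\varphi\in \Obs^k(M;\bbZ)$ with regularity datum $\omega=\ast\,\tilde\omega$. Exactness of the top row of the diagram in \eqref{eqSCCauchyPblDiagram} provides $(h',\tilde h')\in\Conf^k_{\rm sc}(M;\bbZ)$ with $\cu_1(h',\tilde h')=\omega$. The well-definedness computation then shows that $\psi := \varphi - \obs(h',\tilde h')$ annihilates every topologically trivial field, so by exactness of the middle column of \eqref{eqConfDiagram} it factors through a character $\chi$ on $\H^{k,m-k}(M;\bbZ)$. To absorb $\chi$ I would modify $(h',\tilde h')$ by a flat field $(\kappa\, u,\kappa\,\tilde u)$ with $(u,\tilde u)\in \H^{k-1,m-k-1}_{\rm sc}(M;\bbT)$; such a modification preserves the curvature and shifts $\obs(h',\tilde h')$ by the character on $\Conf^k(M;\bbZ)$ obtained from pairing $(u,\tilde u)$ against the characteristic class. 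Via Theorem~\ref{thmCauchyPbl} and Corollary~\ref{corSCCauchyPbl}, both this realisation map and the target character $\chi$ transport faithfully to the Cauchy surface, and the remaining question becomes the classical Pontryagin-duality statement that every (continuous) character of $\H^{k,m-k}(\Sigma;\bbZ)$ is represented by an element of $\H^{k-1,m-k-1}_{\rm c}(\Sigma;\bbT)$. I expect this to be the main obstacle: it requires careful bookkeeping between topologically trivial parts, flat parts, and characteristic classes, all within the smooth Pontryagin framework of \cite{Becker:2014tla, BBSS}, and makes essential use of the continuity assertion recalled in Remark~\ref{remSmoothDual}.

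\textbf{Naturality.} The commuting square of Lemma~\ref{lemPairingSigmaindep} yields $\ips{f^\ast(h,\tilde h)}{(h',\tilde h')} = \ips{(h,\tilde h)}{f_\ast(h',\tilde h')}$ for every morphism $f:M\to M'$ in $\Loc_m$, every $(h,\tilde h)\in\Conf^k(M';\bbZ)$ and every $(h',\tilde h')\in\Conf^k_{\rm sc}(M;\bbZ)$. Rewriting this identity as $(f_\ast\circ \obs_M)(h',\tilde h') = (\obs_{M'}\circ f_\ast)(h',\tilde h')$ immediately shows that the family $\{\obs_M\}$ defines a natural isomorphism between the functors $\Conf^k_{\rm sc}(-;\bbZ)$ and $\Obs^k(-;\bbZ)$.
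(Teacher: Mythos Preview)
Your treatment of well-definedness, injectivity and naturality is essentially identical to the paper's. The substantive difference lies in the surjectivity argument.

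The paper's route is more direct: given $\varphi\in\Obs^k(M;\bbZ)$, it immediately transports $\varphi$ to a smooth character $\varphi_\Sigma\in\dH^{k,m-k}(\Sigma;\bbZ)^\star_\infty$ via the isomorphism of Theorem~\ref{thmCauchyPbl}, then invokes the character duality theorem \cite[Theorem~5.4]{BBSS} in one stroke to produce $(h'_\Sigma,\tilde h'_\Sigma)\in\dH^{k,m-k}_{\rm c}(\Sigma;\bbZ)$ with $\ips{\cdot}{(h'_\Sigma,\tilde h'_\Sigma)}_\Sigma=\varphi_\Sigma$, and finally pulls this back through Corollary~\ref{corSCCauchyPbl}. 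All the delicate bookkeeping between curvatures, flat parts and characteristic classes is packaged inside \cite[Theorem~5.4]{BBSS}.

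Your two-stage construction---first match the curvature using surjectivity of $\cu_1$ on $\Conf^k_{\rm sc}$, then absorb the residual character on $\H^{k,m-k}(M;\bbZ)$ by a flat correction $(\kappa u,\kappa\tilde u)$---is correct and in effect re-derives the relevant portion of that duality theorem by hand. What this buys you is a more transparent view of \emph{where} the topological input enters (namely the Pontryagin isomorphism $\H^{k-1,m-k-1}_{\rm c}(\Sigma;\bbT)\simeq\H^{m-k,k}(\Sigma;\bbZ)^\star$, cf.\ \cite[Remark~5.7]{BBSS}); what it costs is that you must separately verify the compatibility $\ips{(h,\tilde h)}{(\kappa u,\kappa\tilde u)}$ factors through $\ch\times\ch$, and that the resulting map $\H^{k-1,m-k-1}_{\rm c}(\Sigma;\bbT)\to\H^{k,m-k}(\Sigma;\bbZ)^\star$ is surjective. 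Both follow from results in \cite{BBSS}, so your sketch closes, but the paper's single appeal to \cite[Theorem~5.4]{BBSS} is the more economical packaging.
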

\begin{proof}
We first have to show that the group character
$\ips{\cdot\, }{\, (h',\tilde h'\, )}$ satisfies the regularity condition of Definition \ref{defSmoothDual},
for any $(h',\tilde h'\, ) \in \Conf^k_{\rm sc}(M;\bbZ)$. This follows
from \cite[eq.\ (4.12)]{BBSS}: For any $([A],[\tilde A]) \in \TT^k(M)$, we find 
\begin{flalign}
\ip{(\iota \times \iota)([A],[\tilde A])\, }{\, (h',\tilde h'\, )} 
= \int_\Sigma\, \big(\tilde A \wedge \cu\, h' - (-1)^{k\, (m-k)}\, A
\wedge \cu\, \tilde h'\, \big) \mod \bbZ~.
\end{flalign}
Moreover, \eqref{eqtmphom} is injective due to Proposition \ref{prpNonDeg}. 
\sk

To show that \eqref{eqtmphom} is surjective, 
let us take any $\varphi \in \Obs^k(M;\bbZ)$ and choose a smooth spacelike Cauchy surface $\Sigma$ of $M$. 
Using the isomorphism established in Theorem \ref{thmCauchyPbl} and recalling Definition \ref{defSmoothDual},
there exists a unique smooth character $\varphi_\Sigma \in \dH^{k,m-k}(\Sigma;\bbZ)^\star_\infty$ 
such that $\varphi_\Sigma \circ (\iota_\Sigma^\ast \times \iota_\Sigma^\ast) = \varphi$.
Using further the character duality proven in \cite[Theorem~5.4]{BBSS},
there exists a unique $(h'_\Sigma,\tilde h_\Sigma') \in \dH^{k,m-k}_{\rm c}(\Sigma;\bbZ)$ 
such that $\ips{\cdot\, }{\, (h'_\Sigma,\tilde h'_\Sigma)}_{\Sigma}^{}
= \varphi_\Sigma$ by \eqref{eqPairingBiDiffChar}. 
Using Corollary~\ref{corSCCauchyPbl}, we introduce $(h',\tilde h'\, ) \in \Conf^k_{\rm sc}(M;\bbZ)$ as the unique 
element with $(\iota_\Sigma^\ast \times \iota_\Sigma^\ast)(h',\tilde
h'\, ) = (h'_\Sigma,\tilde h'_\Sigma)$. 
Then the definition of $\ips{\cdot}{\cdot}$ given in
\eqref{eqPairingConf} implies that $\varphi = \ips{\cdot\, }{\,
  (h',\tilde h'\, )}$. 
\sk

It remains to prove that the established isomorphism is natural.
Let $f: M \to M^\prime$ be a morphism in $\Loc_m$.
Recall that the pushforward $f_\ast: \Obs^k(M;\bbZ) \to \Obs^k(M^\prime;\bbZ)$ is given by the Pontryagin dual 
of the pullback $f^\ast: \Conf^k(M^\prime;\bbZ) \to \Conf^k(M;\bbZ)$
and that Lemma \ref{lemPairingSigmaindep} establishes naturality 
of the pairing $\ips{\cdot}{\cdot}$. Therefore, for all $(h',\tilde
h'\, ) \in \Conf^k_{\rm sc}(M;\bbZ)$, one has 
$f_\ast\ips{\cdot\, }{\, (h',\tilde h'\, )} = \ips{f^\ast \cdot\, }{\,
  (h',\tilde h'\, )} = \ips{\cdot\, }{\, f_\ast(h',\tilde h'\, )}$, 
i.e.\ the diagram
\begin{flalign}
\xymatrix{
\Conf^k_{\rm sc}(M;\bbZ) \ar[r]^-{\obs} \ar[d]_-{f_\ast}	&	\Obs^k(M;\bbZ) \ar[d]^-{f_\ast}		\\
\Conf^k_{\rm sc}(M^\prime;\bbZ) \ar[r]_-{\obs}			&	\Obs^k(M^\prime;\bbZ)
}
\end{flalign}
commutes and hence \eqref{eqtmphom} is a natural isomorphism.
\end{proof}


\subsection{\label{subSympl}Presymplectic structure}
We will introduce a natural $\bbT$-valued presymplectic structure $\tau$
on the Abelian group $\Obs^k(M;\bbZ)$ of semi-classical observables.
In this way we obtain a functor $(\Obs^k(-;\bbZ),\tau): \Loc_m \to
\PSAb$ valued in the category $\PSAb$
of presymplectic Abelian groups (with group homomorphisms preserving the presymplectic structures as morphisms). 
This will be the main input for Section \ref{secQuantization}, 
where the quantization of the semi-classical model described by $(\Obs^k(-;\bbZ),\tau)$ will be addressed. 
\begin{propo}\label{prpPreSympl}
Let $\obs: \Conf^k_{\rm sc}(M;\bbZ) \to \Obs^k(M;\bbZ)$ be the isomorphism
introduced in Proposition \ref{prpSCConfObsIso} and $I : \Conf^k_{\rm
  sc}(M;\bbZ) \to \Conf^k(M;\bbZ) $ the group homomorphism
given in \eqref{eqn:Imapping}. Then
\begin{flalign}
\tau: \Obs^k(M;\bbZ) \times \Obs^k(M;\bbZ)
\longrightarrow			\bbT~, \qquad							
(\varphi,\varphi'\, )
\longmapsto		\ips{I(\obs^{-1} \varphi)}{\obs^{-1}
  \varphi'\, }
\end{flalign}
defines a presymplectic structure on $\Obs^k(M;\bbZ)$ whose radical is $\obs(\ker I)$. 
\end{propo}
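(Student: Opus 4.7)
The plan is to verify that $\tau$ is a well-defined biadditive antisymmetric pairing and then identify its radical. Well-definedness and biadditivity are immediate: by Proposition \ref{prpSCConfObsIso} the map $\obs^{-1}$ is a group isomorphism, $I$ is a group homomorphism, and the pairing $\ips{\cdot}{\cdot}$ from \eqref{eqPairingConf} is biadditive into $\bbT$. The two substantive points are antisymmetry and the computation of the radical.

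For antisymmetry, I would pass to initial data via the isomorphisms $\Conf^k(M;\bbZ) \simeq \dH^{k,m-k}(\Sigma;\bbZ)$ and $\Conf^k_{\rm sc}(M;\bbZ) \simeq \dH^{k,m-k}_{\rm c}(\Sigma;\bbZ)$ provided by Theorem \ref{thmCauchyPbl} and Corollary \ref{corSCCauchyPbl}. Under these isomorphisms the pairing \eqref{eqPairingConf} takes the explicit form \eqref{eqPairingBiDiffChar}. Writing $(h,\tilde h) = (\iota_\Sigma^\ast \times \iota_\Sigma^\ast)(\obs^{-1}\varphi)$ and $(h',\tilde h'\,) = (\iota_\Sigma^\ast \times \iota_\Sigma^\ast)(\obs^{-1}\varphi'\,)$, and noting that the ``relative-to-absolute'' homomorphism $I$ is compatible with restriction to the Cauchy surface (so its transport to initial data is just the canonical map $\dH^{k,m-k}_{\rm c}(\Sigma;\bbZ) \to \dH^{k,m-k}(\Sigma;\bbZ)$), one obtains $\tau(\varphi,\varphi'\,) = \ips{\tilde h}{h'\,}_{\rm c} - (-1)^{k(m-k)}\ips{h}{\tilde h'\,}_{\rm c}$. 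Swapping $\varphi \leftrightarrow \varphi'$ and invoking the graded symmetry of the Pontryagin pairing on the $(m-1)$-dimensional Cauchy surface $\Sigma$ (from \cite[Section~5.2]{BBSS}), each term exchanges with its counterpart with a sign that, combined with the prefactor $(-1)^{k(m-k)}$, produces an overall minus, yielding $\tau(\varphi',\varphi) = -\tau(\varphi,\varphi'\,)$.

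For the radical, bijectivity of $\obs$ (Proposition \ref{prpSCConfObsIso}) implies that as $\varphi'$ ranges over $\Obs^k(M;\bbZ)$, the element $\obs^{-1}\varphi'$ ranges over all of $\Conf^k_{\rm sc}(M;\bbZ)$. Hence $\varphi$ lies in the radical of $\tau$ if and only if $\ips{I(\obs^{-1}\varphi)}{(h',\tilde h'\,)} = 0$ for every $(h',\tilde h'\,) \in \Conf^k_{\rm sc}(M;\bbZ)$. By the weak non-degeneracy established in Proposition \ref{prpNonDeg}, this is equivalent to $I(\obs^{-1}\varphi) = 0$, i.e.\ $\obs^{-1}\varphi \in \ker I$, i.e.\ $\varphi \in \obs(\ker I)$, as claimed.

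The main obstacle is the sign bookkeeping in the antisymmetry step: one must verify that the graded-commutativity signs for the Pontryagin pairing on the odd-dimensional surface $\Sigma$ combine correctly with the explicit sign $(-1)^{k(m-k)}$ appearing in the defining formula \eqref{eqPairingConf}, so that the two terms actually antisymmetrize rather than symmetrize. Once this is confirmed, the remainder of the argument is a direct consequence of Propositions \ref{prpSCConfObsIso} and \ref{prpNonDeg} together with the explicit initial-data form \eqref{eqPairingBiDiffChar} of the pairing.
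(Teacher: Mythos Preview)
Your proposal is correct and follows essentially the same route as the paper: transfer $\tau$ to $\sigma$ on $\Conf^k_{\rm sc}(M;\bbZ)$ via $\obs^{-1}$, observe biadditivity, deduce antisymmetry from the graded symmetry of the Pontryagin pairing on $\Sigma$, and identify the radical using the weak non-degeneracy of Proposition~\ref{prpNonDeg}. The only difference is that the paper delegates the antisymmetry step to \cite[Proposition~5.9]{BBSS} rather than sketching the sign computation as you do.
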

\begin{proof}
Up to the isomorphism $\obs^{-1}: \Obs^k(M;\bbZ) \to \Conf^k_{\rm sc}(M;\bbZ)$, 
the mapping $\tau$ is given by 
\begin{flalign}\label{eqPreSymplSCConf}
\sigma: \Conf^k_{\rm sc}(M;\bbZ) \times \Conf^k_{\rm sc}(M;\bbZ)
&	\longrightarrow			\bbT~, \qquad
\big((h,\tilde h)\,,\,(h',\tilde h'\, )\big)					\longmapsto		\ips{I(h,\tilde h)\,}{(\,h',\tilde h'\,)}~.
\end{flalign}
Recalling \eqref{eqPairingConf}, we observe that $\sigma$ is
$\bbZ$-bilinear and hence so is $\tau$. 
Antisymmetry follows from \cite[Proposition~5.9]{BBSS}.
Since the pairing $\ips{\cdot}{\cdot}: \Conf^k(M;\bbZ) \times \Conf^k_{\rm sc}(M;\bbZ) \to \bbT$ 
is weakly non-degenerate by Proposition~\ref{prpNonDeg}, the radical of
$\sigma$ coincides with the kernel of $I$ which implies that the radical of $\tau$ is $\ker(I \circ \obs^{-1}) = \obs(\ker I)$. 
\end{proof}
\begin{rem}
If $M$ has compact Cauchy surfaces $\Sigma \subseteq M$, the group homomorphism
$I : \Conf^k_{\rm sc}(M;\bbZ) \to \Conf^k(M;\bbZ) $
is the identity. In fact, with $\Sigma$ compact, $J(\Sigma) = M$ entails 
that the diagram whose colimit defines $\Conf^k_{\rm sc}(M;\bbZ)$ 
has $\Conf^k(M,M \setminus J(\Sigma);\bbZ) = \Conf^k(M;\bbZ)$ as its terminal object. 
In particular, $\tau$ is actually weakly symplectic for globally hyperbolic Lorentzian manifolds with compact Cauchy surfaces. 
In this case \eqref{eqPreSymplSCConf} coincides with the (pre)symplectic structure
described in Remark \ref{remPresymplStructureCompact}.
\end{rem}

Our next task is to prove that the presymplectic structure $\tau$ introduced in Proposition \ref{prpPreSympl} is natural, 
so that we can interpret $(\Obs^k(-;\bbZ),\tau)$ as a functor from $\Loc_m$ to $\PSAb$. 
\begin{propo}\label{prpPreSymplNat}
Let $f: M \to M^\prime$ be a morphism in $\Loc_m$.
Then the diagram of Abelian groups 
\begin{flalign}
\xymatrix{
\Obs^k(M;\bbZ) \times \Obs^k(M;\bbZ) \ar[dr]^-\tau \ar[dd]_-{f_\ast\times f_{\ast}}
																				&	\\ 
				&	\bbT 	\\
\Obs^k(M^\prime;\bbZ) \times \Obs^k(M^\prime;\bbZ) \ar[ur]_-\tau				&	
}
\end{flalign}
commutes.
\end{propo}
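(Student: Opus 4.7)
The plan is to transport everything through the natural isomorphism $\obs$ of Proposition~\ref{prpSCConfObsIso} in order to reduce the claim to the naturality of the presymplectic structure $\sigma$ on $\Conf^k_{\rm sc}(-;\bbZ)$ defined in \eqref{eqPreSymplSCConf}. Setting $(h,\tilde h)=\obs^{-1}\varphi$ and $(h',\tilde h'\, )=\obs^{-1}\varphi'$ for $\varphi,\varphi'\in\Obs^k(M;\bbZ)$, naturality of $\obs$ gives $\obs^{-1}\circ f_\ast = f_\ast\circ\obs^{-1}$, so that
\begin{flalign*}
\tau(f_\ast\varphi,f_\ast\varphi'\, )
= \ip{I\big(f_\ast(h,\tilde h)\big)}{f_\ast(h',\tilde h'\, )}_{M'}~,\qquad
\tau(\varphi,\varphi'\, ) = \ip{I(h,\tilde h)}{(h',\tilde h'\, )}_{M}~.
\end{flalign*}
Thus it suffices to prove $\sigma_{M'}\big(f_\ast(h,\tilde h)\,,\,f_\ast(h',\tilde h'\, )\big) = \sigma_{M}\big((h,\tilde h)\,,\,(h',\tilde h'\, )\big)$ for all $(h,\tilde h),(h',\tilde h'\, )\in\Conf^k_{\rm sc}(M;\bbZ)$.

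Next, I would invoke the naturality of the bilinear pairing $\ip{\cdot}{\cdot}$ established in Lemma~\ref{lemPairingSigmaindep}. Applying it with $x=I\big(f_\ast(h,\tilde h)\big)\in\Conf^k(M';\bbZ)$ on the contravariant side and $y=(h',\tilde h'\, )\in\Conf^k_{\rm sc}(M;\bbZ)$ on the covariant side yields
\begin{flalign*}
\ip{I\big(f_\ast(h,\tilde h)\big)}{f_\ast(h',\tilde h'\, )}_{M'}
= \ip{f^\ast I\big(f_\ast(h,\tilde h)\big)}{(h',\tilde h'\, )}_{M}~.
\end{flalign*}
Comparing with the right-hand side, the whole statement reduces to the identity
\begin{flalign*}
f^\ast \circ I \circ f_\ast = I \colon \Conf^k_{\rm sc}(M;\bbZ) \longrightarrow \Conf^k(M;\bbZ)~,
\end{flalign*}
which expresses the compatibility between the contravariant map $I$ from the relative theory and the covariant pushforward on the spacelike compact theory constructed in Lemma~\ref{lemSCPushforward}.

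This identity is the main geometric content and I expect it to be the principal obstacle. The strategy is to prove it at the level of the relative representatives defining the colimit \eqref{eqSCConf}: given $(h,\tilde h)\in\Conf^k(M,M\setminus J(K);\bbZ)$ for some compact $K\subseteq M$, the pushforward $f_\ast(h,\tilde h)$ is represented in $\Conf^k(M',M'\setminus J(f(K));\bbZ)$ by essentially extending relative cycles by zero, so that precomposition of $I f_\ast(h,\tilde h)$ with the homomorphism $Z_{\bullet-1}(M)\to Z_{\bullet-1}(M')$ on cycles induced by $f$ recovers $I(h,\tilde h)$ evaluated on $Z_{\bullet-1}(M)$; this is precisely the statement $f^\ast I f_\ast(h,\tilde h)=I(h,\tilde h)$. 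The careful check uses the explicit description of relative differential characters on cycles (Remark~\ref{remRelConfNonSub} and \cite[Section~3.1]{BBSS}) together with causal compatibility of $f$, which guarantees $f^{-1}(J(f(K)))=J(K)$ so that the restriction map on cycles is well-defined and matches the construction of $f_\ast$ from Lemma~\ref{lemSCPushforward}. Once this naturality of $I$ is in place, the diagram commutes and the proof is complete.
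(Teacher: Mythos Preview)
Your proof follows essentially the same approach as the paper's: reduce via the natural isomorphism $\obs$ to the statement for $\sigma$, apply the naturality of the pairing from Lemma~\ref{lemPairingSigmaindep}, and then reduce everything to the identity $f^\ast\circ I\circ f_\ast = I$. The paper isolates this last identity as part (i) of Lemma~\ref{lemIproperties} in the appendix; its proof there is the one-line observation that $f^\ast\circ I = I\circ f^\ast$ on relative characters (naturality of $I$ as in \cite[eq.~(3.13)]{BBSS}), combined with the fact that $f_\ast$ is \emph{defined} as the colimit of $(f^\ast)^{-1}$, so $f^\ast\, I\, f_\ast = f^\ast\, I\, (f^\ast)^{-1} = I\, f^\ast\, (f^\ast)^{-1} = I$. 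Your sketch of this step via relative cycles and causal compatibility is correct in spirit and arrives at the same conclusion, though the paper's formulation is cleaner once you recall how $f_\ast$ was constructed in Lemma~\ref{lemSCPushforward}.
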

\begin{proof}
Recalling from Proposition \ref{prpSCConfObsIso} that $\obs: \Conf^k_{\rm sc}(-;\bbZ) \Rightarrow \Obs^k(-;\bbZ)$ 
is a natural isomorphism, it is enough to prove commutativity of the diagram 
\begin{flalign}
\xymatrix{
\Conf^k_{\rm sc}(M;\bbZ) \times \Conf^k_{\rm sc}(M;\bbZ)
	\ar[dr]^-\sigma  \ar[dd]_-{f_\ast\times f_{\ast}}			&	\\
	&  \bbT 	\\
\Conf^k_{\rm sc}(M^\prime;\bbZ) \times \Conf^k_{\rm sc}(M^\prime;\bbZ) \ar[ur]_-\sigma	&	
}
\end{flalign}
for $\sigma$ given in \eqref{eqPreSymplSCConf}. Making use of naturality of the pairing 
$\ips{\cdot}{\cdot}$ (cf.\ Lemma \ref{lemPairingSigmaindep}), we obtain
\begin{flalign}
\nn \sigma\big(f_\ast(h,\tilde h)\,,\, f_\ast(h',\tilde h'\, )\big) &= \ips{I\, f_\ast(h,\tilde h)\, }{\, f_\ast(h',\tilde h'\,)} \\[4pt]
&= \ips{f^\ast\, I \,f_\ast(h,\tilde h)\, }{\, (h',\tilde h'\, )} \nn \\[4pt]
&= \ips{I (h,\tilde h)\, }{\, (h',\tilde h'\, )} = \sigma\big((h,\tilde h)\,,\, (h',\tilde h'\, )\big)~,
\end{flalign}
for all $(h,\tilde h), (h,\tilde h'\, )\in \Conf^k_{\rm sc}(M;\bbZ)$.
In the third equality we used $f^\ast\circ I\circ f_\ast = I$ which is proven in Lemma~\ref{lemIproperties}.
\end{proof}

\subsection{\label{subPropPSAb}Locally covariant field theory}
We analyze properties of the functor $(\Obs^k(-;\bbZ),\tau): \Loc_m \to \PSAb$ from the point of view 
of the axioms of locally covariant field theory \cite{Brunetti:2001dx}. 
\begin{propo}[Causality axiom]\label{prpCausality}
Let $M_1 \stackrel{f_1}{\longrightarrow} M \stackrel{f_2}{\longleftarrow} M_2$ be
a diagram in $\Loc_m$ such that the images of $f_1$ and $f_2$ are causally disjoint, i.e.\
$J(f_1(M_1)) \cap f_2(M_2) = \emptyset$. 
Then the presymplectic structure $\tau: \Obs^k(M;\bbZ) \times \Obs^k(M;\bbZ) \to \bbT$ 
vanishes on ${f_1}_\ast(\Obs^k(M_1;\bbZ)) \times {f_2}_\ast(\Obs^k(M_2;\bbZ))$. 
\end{propo}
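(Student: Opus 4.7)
The plan is to use the natural isomorphism $\obs: \Conf^k_{\rm sc}(-;\bbZ) \Rightarrow \Obs^k(-;\bbZ)$ of Proposition \ref{prpSCConfObsIso} to translate the statement into a question about the pairing $\sigma$ of \eqref{eqPreSymplSCConf} on $\Conf^k_{\rm sc}(M;\bbZ)$. By naturality of $\obs$ it suffices to show that $\sigma(f_{1\ast}(h_1,\tilde h_1), f_{2\ast}(h_2,\tilde h_2)) = 0$ for all $(h_i,\tilde h_i) \in \Conf^k_{\rm sc}(M_i;\bbZ)$.

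First I would construct a convenient Cauchy surface of $M$. Fix a smooth spacelike Cauchy surface $\Sigma_i$ of $M_i$ and, via Corollary \ref{corSCCauchyPbl}, represent $(h_i, \tilde h_i)$ by initial data on $\Sigma_i$ compactly supported in some $K_i \subseteq \Sigma_i$. Each $f_i(\Sigma_i)$ is a spacelike acausal submanifold of $M$ (as $f_i$ is an orientation- and time-orientation preserving isometric causally compatible embedding), hence $f_i(K_i)$ is compact and acausal. Causal disjointness $J(f_1(M_1)) \cap f_2(M_2) = \emptyset$ implies that $f_1(K_1) \sqcup f_2(K_2)$ is an acausal compact subset of $M$, which by \cite[Theorem 1.1]{Bernal:2005qf} can be extended to a smooth spacelike Cauchy surface $\Sigma$ of $M$. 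Using the $\Sigma$-independence and naturality of the pairing (Lemma \ref{lemPairingSigmaindep}), I may evaluate $\sigma(f_{1\ast}(h_1, \tilde h_1), f_{2\ast}(h_2, \tilde h_2))$ on this $\Sigma$. Setting $(H_i, \tilde H_i) := f_{i\ast}(h_i, \tilde h_i)$, the relative Cauchy problem (Corollary \ref{corRelCauchyPblDyn}) together with the cofinality observation of Remark \ref{remCofinal} guarantees that $(\iota_\Sigma^\ast H_i, \iota_\Sigma^\ast \tilde H_i) \in \dH^{k,m-k}_{\rm c}(\Sigma;\bbZ)$ have compact supports contained in $f_i(K_i) \subseteq \Sigma$, and by construction these supports are disjoint.

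The last step is to invoke vanishing of the compactly supported pairing $\ips{\cdot}{\cdot}_{\rm c}$ on $\dH^{\ast}(\Sigma;\bbZ) \times \dH^{\ast}_{\rm c}(\Sigma;\bbZ)$ for pairs of classes with disjoint supports, which follows from its description in \cite[Section~5.2]{BBSS} as being induced by the ring structure on differential cohomology (so in particular it is local). Applied to each of the two terms in \eqref{eqPairingBiDiffChar} with the supports disjoint as above, this yields the required $\sigma(f_{1\ast}(h_1, \tilde h_1), f_{2\ast}(h_2, \tilde h_2)) = 0$. The main technical obstacle I expect is establishing this locality of the compactly supported differential character pairing on disjoint supports; once this is in place, the Cauchy surface extension and the relative Cauchy problem are routine applications of the machinery already developed in the paper and in \cite{BBSS}.
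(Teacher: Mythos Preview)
Your reduction via $\obs$ to $\sigma$ on $\Conf^k_{\rm sc}$ matches the paper's first step exactly. After that, however, the paper takes a shorter and more algebraic route: rather than constructing a common Cauchy surface containing both supports and appealing to locality of the ring structure, it uses naturality of the pairing $\ips{\cdot}{\cdot}$ (Lemma~\ref{lemPairingSigmaindep}) to rewrite
\[
\sigma\big(f_{1\ast}(h,\tilde h),\, f_{2\ast}(h',\tilde h')\big) \;=\; \ip{f_2^\ast\, I\, f_{1\ast}(h,\tilde h)\,}{\,(h',\tilde h')}~,
\]
and then invokes the purely cycle-level identity $f_2^\ast \circ I \circ f_{1\ast} = 0$ (Lemma~\ref{lemIproperties}(ii)): any cycle in $M_2$ pushes forward into $f_2(M_2) \subseteq M \setminus J(f_1(K))$ and hence vanishes in the relative cycle group defining $I\, f_{1\ast}(h,\tilde h)$.

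Your geometric approach should also go through, but the ``locality of the compactly supported pairing for disjoint supports'' that you correctly flag as the main obstacle is a genuine extra input: you need that if $h$ lies in the image of $\dH^{m-p}(\Sigma,\Sigma\setminus K_1;\bbZ)$ under $I$ and $h'$ has compact support in $K_2$ with $K_1\cap K_2=\emptyset$, then $\ips{h}{h'}_{\rm c}=0$. This can be extracted from the module structure of relative differential characters over ordinary ones in \cite{BB,BBSS}, but it is not stated there in ready-to-use form. The paper's argument sidesteps this by pulling back to $M_2$ instead of pushing both sides onto a common $\Sigma$. One minor technical point in your outline: \cite[Theorem~1.1]{Bernal:2005qf} extends an acausal spacelike compact \emph{submanifold with boundary}, not an arbitrary compact subset, so you should thicken each $f_i(K_i)$ to the closure $f_i(\,\overline{U_i}\,)$ of an open neighborhood in $f_i(\Sigma_i)$ (as in the proof of Lemma~\ref{lemSCPushforward}) before invoking it.
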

\begin{proof}
Using again the natural isomorphism
$\obs: \Conf^k_{\rm sc}(-;\bbZ) \Rightarrow \Obs^k(-;\bbZ)$,
it is equivalent to prove the analogous statement for $\sigma$ given in \eqref{eqPreSymplSCConf}.
Given $(h,\tilde h)\in \Conf^k_{\rm sc}(M_1;\bbZ)$ and $(h',\tilde h'\, )\in \Conf^k_{\rm sc}(M_2;\bbZ)$,
naturality of the pairing $\ips{\cdot}{\cdot}$ implies
\begin{flalign}
\sigma\big({f_1}_\ast (h,\tilde h) \,,\, {f_{2}}_\ast(h',\tilde h'\, )\big)= 
\ips{f_{2}^\ast \, I \, {f_{1}}_\ast (h,\tilde h)\, }{\, (h',\tilde h'\, )}
\end{flalign}
and the proof follows from $f_{2}^\ast\circ I \circ {f_{1}}_\ast =0$, see Lemma \ref{lemIproperties}.
\end{proof}

\begin{propo}[Time-slice axiom]\label{prpTimeSlice}
Let $f: M \to M^\prime$ be a Cauchy morphism, i.e.\ a $\Loc_m$-morphism 
whose image $f(M)$ contains a smooth spacelike Cauchy surface of $M^\prime$. 
Then $f_\ast: \Obs^k(M;\bbZ) \to \Obs^k(M^\prime;\bbZ)$ is an isomorphism. 
\end{propo}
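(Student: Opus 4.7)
The plan is to exploit the natural isomorphism $\obs: \Conf^k_{\rm sc}(-;\bbZ) \Rightarrow \Obs^k(-;\bbZ)$ of Proposition \ref{prpSCConfObsIso} in order to reduce the claim to showing that the pushforward $f_\ast: \Conf^k_{\rm sc}(M;\bbZ) \to \Conf^k_{\rm sc}(M^\prime;\bbZ)$ is an isomorphism for any Cauchy morphism $f: M \to M^\prime$.

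First I would exhibit a pair of compatible Cauchy surfaces: since $f$ is a Cauchy morphism, there exists a smooth spacelike Cauchy surface $\Sigma^\prime$ of $M^\prime$ contained in $f(M)$. Because $f$ is an isometric embedding with open, causally compatible image, one checks that $\Sigma := f^{-1}(\Sigma^\prime)$ is a smooth spacelike Cauchy surface of $M$ and that the restriction $f|_\Sigma: \Sigma \to \Sigma^\prime$ is an isometric diffeomorphism.

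Applying Corollary \ref{corSCCauchyPbl} to $\Sigma$ and $\Sigma^\prime$ yields isomorphisms $\Conf^k_{\rm sc}(M;\bbZ) \simeq \dH^{k,m-k}_{\rm c}(\Sigma;\bbZ)$ and $\Conf^k_{\rm sc}(M^\prime;\bbZ) \simeq \dH^{k,m-k}_{\rm c}(\Sigma^\prime;\bbZ)$ via pullback along the Cauchy embeddings. The central technical point is to verify that the square
\[
\xymatrix{
\Conf^k_{\rm sc}(M;\bbZ) \ar[rr]^-{f_\ast} \ar[d]_-{\iota_\Sigma^\ast \times \iota_\Sigma^\ast} && \Conf^k_{\rm sc}(M^\prime;\bbZ) \ar[d]^-{\iota_{\Sigma^\prime}^\ast \times \iota_{\Sigma^\prime}^\ast} \\
\dH^{k,m-k}_{\rm c}(\Sigma;\bbZ) \ar[rr]_-{(f|_\Sigma)_\ast} && \dH^{k,m-k}_{\rm c}(\Sigma^\prime;\bbZ)
}
\]
commutes. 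This should follow by unpacking the construction of the pushforward on spacelike compact semi-classical gauge fields given in Lemma \ref{lemSCPushforward}, combined with the support-preserving nature of the Cauchy problem of Corollary \ref{corRelCauchyPblDyn}: propagating initial data from $\Sigma$ into $M$, pushing forward along $f$, and then restricting to $\Sigma^\prime$ coincides with pushing forward initial data along the diffeomorphism $f|_\Sigma$.

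Once the square commutes, the conclusion is immediate: since $f|_\Sigma$ is a diffeomorphism, $(f|_\Sigma)_\ast$ is an isomorphism of compactly supported differential cohomology groups, which together with the vertical isomorphisms forces $f_\ast: \Conf^k_{\rm sc}(M;\bbZ) \to \Conf^k_{\rm sc}(M^\prime;\bbZ)$ to be an isomorphism, and hence so is $f_\ast: \Obs^k(M;\bbZ) \to \Obs^k(M^\prime;\bbZ)$ by naturality of $\obs$. The main obstacle is precisely the verification of the commutative square above, which requires careful bookkeeping of the colimit-of-relative-differential-cohomology description of $\Conf^k_{\rm sc}$ and the explicit formula for $f_\ast$ supplied by Lemma \ref{lemSCPushforward}.
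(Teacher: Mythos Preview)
Your proposal is correct and follows essentially the same route as the paper: reduce via the natural isomorphism $\obs$ to the pushforward on $\Conf^k_{\rm sc}$, choose compatible Cauchy surfaces $\Sigma = f^{-1}(\Sigma')$, and conclude from the commutative square with vertical isomorphisms from Corollary~\ref{corSCCauchyPbl} and the bottom isomorphism induced by the diffeomorphism $f|_\Sigma$. The paper simply asserts the commutativity of the square without further comment, whereas you correctly flag it as the point requiring care via Lemma~\ref{lemSCPushforward}.
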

\begin{proof}
Take any smooth spacelike Cauchy surface $\Sigma^\prime \subseteq f(M)$ of $M^\prime$ 
and note that its preimage $\Sigma = f^{-1}(\Sigma^\prime\, )$ is a smooth spacelike Cauchy surface of $M$. 
The diagram of Abelian groups
\begin{flalign}
\xymatrix@C=1.5cm{
\Conf^k_{\rm sc}(M;\bbZ) \ar[r]^-{f_\ast} \ar[d]_-{\iota_\Sigma^\ast \times \iota_\Sigma^\ast}
					&	\Conf_{\rm sc}(M^\prime;\bbZ)
							\ar[d]^-{\iota_{\Sigma^\prime}^\ast \times \iota_{\Sigma^\prime}^\ast}		\\
\dH^{k,m-k}_{\rm c}(\Sigma;\bbZ) \ar[r]_-{f_{\Sigma \ast} \times f_{\Sigma \ast}}
					&	\dH^{k,m-k}_{\rm c}(\Sigma^\prime;\bbZ)
}
\end{flalign}
commutes, its vertical arrows are isomorphisms (cf.\  Corollary \ref{corSCCauchyPbl}) 
and its bottom horizontal arrow is an isomorphism since, by restriction, 
$f$ induces an orientation preserving isometry $f_\Sigma: \Sigma \to \Sigma^\prime$.
Hence $f_\ast$ is an isomorphism and, by using again the natural isomorphism 
$\obs: \Conf_{\rm sc}^k(-;\bbZ) \Rightarrow \Obs^k(-;\bbZ)$, we find 
that $f_\ast: \Obs^k(M;\bbZ) \to \Obs^k(M^\prime;\bbZ)$ is an isomorphism.
\end{proof}

\begin{propo}[Violation of the locality axiom]\label{prpLocalityViolation}
Let $f: M \to M^\prime$ be a morphism in $\Loc_m$.
Then $f_\ast:\Obs^k(M;\bbZ) \to \Obs^k(M^\prime;\bbZ)$ is injective 
if and only if $f_\ast: \H^{k-1,m-k-1}_{\rm sc}(M;\bbT)\to \H^{k-1,m-k-1}_{\rm sc}(M^\prime;\bbT)$ is injective.
For $m=2$ and $k=1$ the latter is always the case, while for $m \geq 3$ and $k \in \{1, \ldots, m-1\}$
there is at least one morphism in $\Loc_m$ violating injectivity. 
\end{propo}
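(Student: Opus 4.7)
The plan is to transport everything to $\Conf^k_{\rm sc}$ via the natural isomorphism $\obs:\Conf^k_{\rm sc}(-;\bbZ)\Rightarrow\Obs^k(-;\bbZ)$ of Proposition~\ref{prpSCConfObsIso}, and then to chase the naturality square of the short exact sequence
\begin{flalign*}
0\longrightarrow\H^{k-1,m-k-1}_{\rm sc}(M;\bbT)\longrightarrow\Conf^k_{\rm sc}(M;\bbZ)\longrightarrow\Omega^k_{{\rm sc},\bbZ}\cap\ast\Omega^{m-k}_{{\rm sc},\bbZ}(M)\longrightarrow 0
\end{flalign*}
from Remark~\ref{remSequencesSCnatural}. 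Since $f:M\to M^\prime$ is an open causal embedding, the rightmost vertical arrow is extension by zero on spacelike compact integral forms, which is manifestly injective. Applying the snake lemma to the naturality square then gives a canonical isomorphism $\ker(f_\ast\vert_{\H^{k-1,m-k-1}_{\rm sc}})\simeq\ker(f_\ast\vert_{\Conf^k_{\rm sc}})$, which together with $\obs$ yields the claimed equivalence between injectivity of $f_\ast$ on $\Obs^k$ and on $\H^{k-1,m-k-1}_{\rm sc}(-;\bbT)$.

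For $m=2$ and $k=1$ the relevant group is $\H^{0,0}_{\rm sc}(M;\bbT)$. I would pick smooth spacelike Cauchy surfaces $\Sigma\subseteq M$ and $\Sigma^\prime\subseteq M^\prime$ with $f(\Sigma)\subseteq\Sigma^\prime$ (always possible via a Bernal--S\'anchez extension as in the footnote surrounding \eqref{eqObsFunctor}) and use naturality of the leftmost column of \eqref{eqSCCauchyPblDiagram} to reduce the problem to injectivity of $(f\vert_\Sigma)_\ast$ on $\H^0_{\rm c}(-;\bbT)$ for the open embedding $f\vert_\Sigma:\Sigma\to\Sigma^\prime$ of oriented $1$-manifolds. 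Since $\H^0_{\rm c}(-;\bbT)$ of a $1$-manifold decomposes as a direct sum of copies of $\bbT$ indexed by its compact connected components ($S^1$-summands), and any compact connected open subset of a connected $1$-manifold must be an entire $S^1$-component, distinct compact components of $\Sigma$ map to distinct compact components of $\Sigma^\prime$, hence the map is injective.

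To exhibit the counterexample for $m\geq 3$ and $k\in\{1,\ldots,m-1\}$, set $j:=m-k$ if $k\geq 2$ and $j:=1$ if $k=1$, so that $j\in\{1,\ldots,m-2\}$ in all cases. Take $M^\prime=\bbR^{1,m-1}$ to be Minkowski spacetime with the canonical time function, pick a smooth embedded sphere $S^j\subseteq\{t=0\}\simeq\bbR^{m-1}$, and let $\Sigma\subseteq\{t=0\}$ be an open tubular neighborhood of it, so that $\Sigma\simeq S^j$. Define $M:=D_{M^\prime}(\Sigma)$ to be the Cauchy development of $\Sigma$ in $M^\prime$; by standard properties of Cauchy developments of open acausal spacelike submanifolds in globally hyperbolic spacetimes, $M$ is an open causally convex globally hyperbolic subset of $M^\prime$ with Cauchy surface $\Sigma$, and the inclusion $f:M\hookrightarrow M^\prime$ is a morphism in $\Loc_m$. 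Taking $\Sigma^\prime=\{t=0\}$ as Cauchy surface of $M^\prime$, so $\Sigma\subseteq\Sigma^\prime$, naturality reduces $f_\ast$ on $\H^{k-1,m-k-1}_{\rm sc}$ to the map $\iota_{\Sigma\,\ast}$ on $\H^{k-1,m-k-1}_{\rm c}(-;\bbT)$ induced by the open inclusion $\iota_\Sigma:\Sigma\hookrightarrow\Sigma^\prime$. Poincar\'e duality $\H^p_{\rm c}(\Sigma;\bbT)\simeq\H_{m-1-p}(\Sigma;\bbT)\simeq\H_{m-1-p}(S^j;\bbT)$ together with the choice of $j$ produces a nontrivial $\bbT$-summand in $\H^{k-1}_{\rm c}(\Sigma;\bbT)$ if $k\geq 2$ and in $\H^{m-k-1}_{\rm c}(\Sigma;\bbT)$ if $k=1$, whereas $\H^p_{\rm c}(\Sigma^\prime;\bbT)\simeq\H_{m-1-p}(\bbR^{m-1};\bbT)=0$ for $p<m-1$, so $\iota_{\Sigma\,\ast}$ has nontrivial kernel. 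The step I expect to be most technical is confirming that the Cauchy development $M=D_{M^\prime}(\Sigma)$ is indeed globally hyperbolic with Cauchy surface $\Sigma$ and that its inclusion into $M^\prime$ is causally compatible; this is standard but deserves a precise citation.
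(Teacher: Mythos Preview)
Your reduction of the first claim to $\Conf^k_{\rm sc}$ via $\obs$ and the naturality square of Remark~\ref{remSequencesSCnatural}, together with injectivity of $f_\ast$ on the form part, is exactly the paper's argument. (A small wording issue: $f_\ast$ on $\Omega^k_{{\rm sc},\bbZ}\cap\ast\Omega^{m-k}_{{\rm sc},\bbZ}$ is not literally extension by zero of forms on $M$, since $f(M)$ need not be closed in $M'$; it is defined via the colimit/Cauchy-data construction, and its injectivity follows e.g.\ from $f^\ast\circ I\circ f_\ast=I$ as in Lemma~\ref{lemIproperties}.)

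For the second claim the paper takes a different route. It first establishes the natural chain
\[
\H^{k-1,m-k-1}_{\rm sc}(M;\bbT)\ \simeq\ \H^{k-1,m-k-1}_{\rm c}(\Sigma;\bbT)\ \simeq\ \H^{m-k,k}(\Sigma;\bbZ)^\star\ \simeq\ \H^{m-k,k}(M;\bbZ)^\star
\]
(using \eqref{eqSCCauchyPblDiagram}, Pontryagin duality from \cite[Remark~5.7]{BBSS}, and homotopy invariance), so that injectivity of $f_\ast$ on $\H_{\rm sc}$ becomes surjectivity of the contravariant $f^\ast$ on ordinary integral cohomology; both the positive $m=2$ case and the $m\geq 3$ counterexamples are then read off from \cite[Example~6.9 and the discussion before Proposition~6.11]{Becker:2014tla}. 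The advantage of this route is that it avoids having to choose compatible Cauchy surfaces in $M$ and $M'$ and to verify the compatibility square between $f_\ast$ on $\H_{\rm sc}$ and $(f\vert_\Sigma)_\ast$ on $\H_{\rm c}$.

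Your direct approach via $\H^{\bullet}_{\rm c}(\Sigma;\bbT)$ is correct and more self-contained, but two points need tightening. First, the Bernal--S\'anchez extension you invoke (the footnote near \eqref{eqObsFunctor}) only applies to compact acausal spacelike pieces, so the blanket claim ``always possible'' that $f(\Sigma)\subseteq\Sigma'$ is not justified for noncompact $\Sigma$; in the $m=2$ case this is harmless (split into $\Sigma\simeq\bbR$, where $\H^0_{\rm c}=0$, and $\Sigma\simeq S^1$, where $f(\Sigma)$ is a compact acausal spacelike hypersurface without boundary and hence already a Cauchy surface of $M'$), and in your counterexample you built $\Sigma\subset\Sigma'$ by hand. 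Second, the commutative square identifying $f_\ast$ on $\H^{k-1,m-k-1}_{\rm sc}$ with $(f\vert_\Sigma)_\ast$ on $\H^{k-1,m-k-1}_{\rm c}$ is not stated anywhere in the paper; it does follow by the same mechanism as Lemma~\ref{lemSCPushforward} (diagram \eqref{eqAux4} with $\dH$ replaced by $\H(-;\bbT)$), but you should say so. Your $m\geq 3$ counterexample is fine: $\Sigma$ is only homotopy equivalent to $S^j$ (not diffeomorphic), but that is all Poincar\'e duality $\H^p_{\rm c}(\Sigma;\bbT)\simeq\H_{m-1-p}(\Sigma;\bbT)$ needs, and the Cauchy-development facts you flag are standard.
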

\begin{proof}
Using again the natural isomorphism
$\obs: \Conf^k_{\rm sc}(-;\bbZ) \Rightarrow \Obs^k(-;\bbZ)$,
we can replace $f_\ast:\Obs^k(M;\bbZ) \to \Obs^k(M^\prime;\bbZ)$ in the statement by
$f_\ast:\Conf_{\rm sc}^k(M;\bbZ) \to \Conf_{\rm sc}^k(M^\prime;\bbZ)$.
The proof of the first part follows easily from 
the commutative diagram of short exact sequences
given in Remark \ref{remSequencesSCnatural} and the fact that 
$f_\ast: \Omega^k_{{\rm sc},\bbZ} \cap \ast \Omega^{m-k}_{{\rm sc},\bbZ}(M)\to 
\Omega^k_{{\rm sc},\bbZ} \cap \ast \Omega^{m-k}_{{\rm sc},\bbZ}(M^\prime\, )$
is always injective.
\sk

To prove the second part, we notice that there is a chain of isomorphisms
\begin{flalign}
\H^{k-1,m-k-1}_{\rm sc}(M;\bbT)\simeq \H^{k-1,m-k-1}_{\rm c}(\Sigma;\bbT) \simeq
\H^{m-k,k}(\Sigma;\bbZ)^\star \simeq \H^{m-k,k}(M;\bbZ)^\star~,
\end{flalign}
where $^\star$ denotes Pontryagin duality. The first isomorphism
is from \eqref{eqSCCauchyPblDiagram}, the second is
presented in \cite[Remark~5.7]{BBSS} and
the third simply follows from homotopy invariance of cohomology and $M\simeq \bbR\times \Sigma$.
Hence the counterexamples to injectivity provided
in \cite[Example~6.9]{Becker:2014tla} can be used to prove the present claims.
For the case $m=2$ and $k=1$, see the argument preceding \cite[Proposition~6.11]{Becker:2014tla}.
\end{proof}

The next theorem summarizes the results obtained in this section
in view of the standard axioms of locally covariant field theory \cite{Brunetti:2001dx}. 
In particular, we stress that the locality axiom, 
which requires $f_\ast: \Obs^k(M;\bbZ) \to \Obs^k(M^\prime;\bbZ)$ 
to be injective for all $\Loc_m$-morphisms $f: M \to M^\prime$, does not hold in general. 
\begin{theo}\label{theo:Functorproperties}
The functor $(\Obs^k(-;\bbZ),\tau): \Loc_m \to \PSAb$ 
satisfies the causality and time-slice axioms of locally covariant field theory, 
however the locality axiom is satisfied only in the case $m=2$ and $k=1$, 
while it is violated for $m \geq 3$ and $k \in \{1, \ldots, m-1\}$. 
\end{theo}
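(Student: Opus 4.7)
The plan is essentially assembly: the statement is a compact summary of the three propositions immediately preceding it, so my proof would not introduce any new argument but rather organize them into a clean conclusion and, where convenient, comment on what each axiom really tests.

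First I would invoke Proposition \ref{prpCausality} to dispose of causality: for any pair of $\Loc_m$-morphisms $f_1: M_1 \to M$ and $f_2: M_2 \to M$ with causally disjoint images, the presymplectic form $\tau$ already vanishes on ${f_1}_\ast \Obs^k(M_1;\bbZ)\times {f_2}_\ast \Obs^k(M_2;\bbZ)$; this is precisely the causality axiom of \cite{Brunetti:2001dx} in the presymplectic setting. Next, for the time-slice axiom, I would quote Proposition \ref{prpTimeSlice}: any Cauchy morphism $f: M \to M^\prime$ induces an isomorphism $f_\ast: \Obs^k(M;\bbZ) \to \Obs^k(M^\prime;\bbZ)$, which is exactly the content of the axiom.

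For the locality part I would split into the two cases singled out in Proposition \ref{prpLocalityViolation}. In dimension $m=2$ with $k=1$, the proposition asserts that $f_\ast$ on $\H^{k-1,m-k-1}_{\rm sc}(-;\bbT)$ is always injective, hence so is $f_\ast$ on $\Obs^k(-;\bbZ)$, and locality holds. For $m \geq 3$ and $k \in \{1,\dots,m-1\}$, the same proposition produces, via the counterexamples inherited from \cite[Example 6.9]{Becker:2014tla}, at least one $\Loc_m$-morphism $f$ for which the induced map on $\H^{k-1,m-k-1}_{\rm sc}(-;\bbT)$ fails to be injective, and the first part of that proposition then transports the failure to $\Obs^k(-;\bbZ)$, violating locality.

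Since each of these three propositions is already fully proved, no genuine obstacle remains; the only thing worth stressing in the write-up is that the failure of locality is truly a topological artifact, appearing through $\H^{k-1,m-k-1}_{\rm sc}(-;\bbT)$ in the short exact sequence of Remark \ref{remSequencesSCnatural}, and not through the differential-form part $\Omega^k_{{\rm sc},\bbZ}\cap \ast\Omega^{m-k}_{{\rm sc},\bbZ}(-)$ on which pushforwards are always injective. With that remark, the proof is complete by direct appeal to Propositions \ref{prpCausality}, \ref{prpTimeSlice} and \ref{prpLocalityViolation}.
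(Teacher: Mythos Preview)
Your proposal is correct and matches the paper's approach exactly: the theorem is stated as a summary of Propositions \ref{prpCausality}, \ref{prpTimeSlice} and \ref{prpLocalityViolation}, and the paper does not even supply a separate proof environment for it. Your write-up simply makes explicit the assembly that the paper leaves implicit.
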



\section{\label{secQuantization}Quantization}
The quantization of the semi-classical gauge theory $(\Obs^k(-;\bbZ),\tau): \Loc_m \to \PSAb$
can be easily performed by using the well-established techniques
of CCR-algebras, see \cite{Manuceau:1973yn} and also \cite[Appendix A]{Benini:2013ita}
for details. Loosely speaking, given any $m$-dimensional spacetime $M$,
we assign to the presymplectic Abelian group $(\Obs^k(M;\bbZ),\tau)$ the 
$C^\ast$-algebra $\mathfrak{CCR}(\Obs^k(M;\bbZ),\tau)$
that is generated by Weyl symbols $\mathrm{W}(\varphi)$, for all $\varphi\in \Obs^k(M;\bbZ)$,
which satisfy the Weyl relations
\begin{flalign}
\mathrm{W}(\varphi)\,\mathrm{W}(\varphi'\, ) = \exp\big( 2\pi \, {\rm
  i} \, \tau(\varphi,\varphi'\, )\big)\ \mathrm{W}(\varphi+\varphi'\, )~~,\qquad 
\mathrm{W}(\varphi)^\ast = \mathrm{W}(-\varphi)~~,
\end{flalign}
for all $\varphi,\varphi' \in \Obs^k(M;\bbZ)$; by $\exp( 2\pi \, {\rm i} \,(\, \cdot\,)) : \bbT=\bbR/\bbZ \to \bbC$
we denote the embedding of the circle group into the complex numbers.
\sk

More precisely, the CCR-functor $\mathfrak{CCR} : \PSAb \to \CAlg$
from the category of presymplectic Abelian groups to the category of $C^\ast$-algebras
is constructed in detail in \cite[Appendix~A]{Benini:2013ita}. Composing the 
functor $(\Obs^k(-;\bbZ),\tau): \Loc_m \to \PSAb$ with the CCR-functor, we obtain 
a functor from $\Loc_m$ to $\CAlg$ which, according to \cite{Brunetti:2001dx},
should be interpreted as a quantum field theory. We can thereby define a family 
of quantum field theories by setting
\begin{flalign}\label{eqn:QFTfunctors}
\mathfrak{A}^k := \mathfrak{CCR}\circ (\Obs^k(-;\bbZ),\tau) : \mathsf{Loc}_m\longrightarrow \CAlg~,
\end{flalign}
which depend on the degree $k\in \{1,\dots, m-1\}$ of the gauge theory.
The properties of the semi-classical gauge theory from Theorem \ref{theo:Functorproperties}
are preserved by quantization (see e.g.\ the arguments in \cite{Benini:2012vi,Becker:2014tla}), 
which leads us to
\begin{theo}\label{theo:QFTproperties}
The functor $\mathfrak{A}^k : \Loc_m \to \CAlg$ enjoys the following properties:
\begin{itemize}
\item Quantum causality axiom: Let $M_1 \stackrel{f_1}{\longrightarrow} M \stackrel{f_2}{\longleftarrow} M_2$ be
a diagram in $\Loc_m$ such that the images of $f_1$ and $f_2$ are causally disjoint.
Then the subalgebras  ${f_1}_\ast (\mathfrak{A}^k(M_1)) $ and ${f_2}_\ast (\mathfrak{A}^k(M_2)) $
of $\mathfrak{A}^k(M) $ commute.

\item Quantum time-slice axiom: Let $f: M \to M^\prime$ be a Cauchy morphism.
Then $f_\ast: \mathfrak{A}^k(M) \to \mathfrak{A}^k(M^\prime\, )$ is an isomorphism. 

\item Violation of the quantum locality axiom:  Let $f: M \to M^\prime$ be a morphism in $\Loc_m$.
Then $f_\ast:\mathfrak{A}^k(M) \to \mathfrak{A}^k(M^\prime\, )$ is injective 
if and only if $f_\ast: \H^{k-1,m-k-1}_{\rm sc}(M;\bbT)\to \H^{k-1,m-k-1}_{\rm sc}(M^\prime;\bbT)$ is injective.
For $m=2$ and $k=1$ the latter is always the case,
while for $m \geq 3$ and $k \in \{1, \ldots, m-1\}$
there is at least one morphism in $\Loc_m$ violating injectivity. 
\end{itemize}
\end{theo}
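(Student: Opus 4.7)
The strategy is to lift each of the three statements of Theorem \ref{theo:Functorproperties} through the CCR-functor $\mathfrak{CCR}: \PSAb \to \CAlg$, using the standard properties of CCR-quantization documented in \cite[Appendix A]{Benini:2013ita}. Since $\mathfrak{A}^k = \mathfrak{CCR}\circ (\Obs^k(-;\bbZ),\tau)$ is a composition of functors, naturality and functoriality do most of the work; the three items then reduce to essentially three general properties of $\mathfrak{CCR}$.

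For the quantum causality axiom, the plan is to use the Weyl relations directly. Given a causally disjoint diagram $M_1 \to M \leftarrow M_2$, Proposition \ref{prpCausality} ensures that $\tau$ vanishes on ${f_1}_\ast\Obs^k(M_1;\bbZ) \times {f_2}_\ast\Obs^k(M_2;\bbZ)$. Then for any $\varphi \in \Obs^k(M_1;\bbZ)$ and $\varphi' \in \Obs^k(M_2;\bbZ)$, the relation $\mathrm{W}({f_1}_\ast\varphi)\,\mathrm{W}({f_2}_\ast\varphi'\,) = \exp(2\pi\, {\rm i}\, \tau({f_1}_\ast\varphi, {f_2}_\ast\varphi'\,))\, \mathrm{W}({f_1}_\ast\varphi + {f_2}_\ast\varphi'\,) = \mathrm{W}({f_1}_\ast\varphi + {f_2}_\ast\varphi'\,)$, and the analogous identity with the order reversed, show that the Weyl generators of the two subalgebras commute. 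Since these generators are dense in the $C^*$-sense, the subalgebras themselves commute.

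For the quantum time-slice axiom, Proposition \ref{prpTimeSlice} gives that $f_\ast: \Obs^k(M;\bbZ) \to \Obs^k(M';\bbZ)$ is an isomorphism of presymplectic Abelian groups. Applying the functor $\mathfrak{CCR}$ then immediately yields an isomorphism of the associated $C^\ast$-algebras. This is a direct consequence of functoriality and requires no further argument.

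For the violation of the locality axiom, the plan is to invoke the standard fact that $\mathfrak{CCR}$ reflects and preserves injectivity of morphisms of presymplectic Abelian groups: $f_\ast$ is injective in $\PSAb$ if and only if $\mathfrak{CCR}(f_\ast)$ is injective in $\CAlg$. This is the most delicate ingredient (the analogous statement for non-Abelian presymplectic vector spaces is where subtleties arise in standard treatments), but for CCR-algebras over Abelian groups with $\bbT$-valued presymplectic form it is established in \cite[Appendix A]{Benini:2013ita}; the key point is that the Weyl symbols $\{\mathrm{W}(\varphi)\}_{\varphi \in \Obs^k(M;\bbZ)}$ form a linearly independent set in $\mathfrak{A}^k(M)$, so that injectivity at the level of Weyl generators lifts to injectivity at the level of the full $C^*$-algebra. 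Combined with Proposition \ref{prpLocalityViolation}, which characterizes injectivity of $f_\ast: \Obs^k(M;\bbZ) \to \Obs^k(M';\bbZ)$ in terms of injectivity of $f_\ast: \H^{k-1,m-k-1}_{\rm sc}(M;\bbT) \to \H^{k-1,m-k-1}_{\rm sc}(M';\bbT)$ and provides counterexamples for $m \geq 3$, $k \in \{1,\dots,m-1\}$ together with the affirmative result for $m=2$, $k=1$, the claim follows.

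The main obstacle, such as it is, lies in the third item: one must be careful to cite (rather than re-prove) the preservation and reflection of injectivity by $\mathfrak{CCR}$, as this is where the essential content of the statement beyond pure functoriality resides. The first two items are essentially immediate translations of the presymplectic statements through the CCR-functor.
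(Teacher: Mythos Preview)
Your proposal is correct and follows essentially the same approach as the paper: lift the three statements of Theorem~\ref{theo:Functorproperties} through the CCR-functor using its standard properties. In fact, the paper's own proof is even terser than yours---it merely states that ``the properties of the semi-classical gauge theory from Theorem~\ref{theo:Functorproperties} are preserved by quantization (see e.g.\ the arguments in \cite{Benini:2012vi,Becker:2014tla})'' and then records the theorem, so your explicit unpacking of each item (Weyl relations for causality, functoriality for time-slice, preservation/reflection of injectivity for locality) is a faithful elaboration of what the paper leaves to external references.
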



\section{\label{secAbeliandualities}Quantum duality}
In this section we show that there exist dualities between the quantum field theories
defined in \eqref{eqn:QFTfunctors}. These dualities will be described at the functorial level 
and therefore hold true for all spacetimes $M$ in a coherent (natural) way.
In order to motivate our definition of duality given below,
let us recall that a quantum field theory is a functor $\mathfrak{A} : \Loc_m\to \CAlg$
from the category of $m$-dimensional spacetimes to the category of $C^\ast$-algebras.
The collection of {\em all} $m$-dimensional quantum field theories
is therefore described by the functor category $[\Loc_m,\CAlg]$;
objects in this category are functors $\mathfrak{A} : \Loc_m\to \CAlg$
and morphisms are natural transformations $\eta : \mathfrak{A}\Rightarrow \mathfrak{A}'$.
In physics one calls the functor category $[\Loc_m,\CAlg]$ the ``theory space''
of $m$-dimensional quantum field theories which, being a category,
comes with a natural
notion of equivalence of theories.
\begin{defi}
A duality between two quantum field theories $\mathfrak{A}, \mathfrak{A}' : \Loc_m\to \CAlg$
is a natural isomorphism $\eta : \mathfrak{A}\Rightarrow \mathfrak{A}'$.
\end{defi}

We shall now construct explicit dualities between the quantum field theories
$\mathfrak{A}^k$ and $\mathfrak{A}^{m-k}$ given in \eqref{eqn:QFTfunctors}, for all
$m\geq 2$ and $k \in\{1,\dots,m-1\}$. Our strategy is to define first the dualities
at the level of the semi-classical configuration spaces \eqref{eqConf},
and then lift them to the presymplectic Abelian groups and ultimately to the corresponding 
quantum field theories. For any object $M$ in $\Loc_m$ we define a group homomorphism
\begin{flalign}\label{eqzetamap}
\zeta : \Conf^{m-k}(M;\bbZ)\longrightarrow \Conf^k(M;\bbZ)~, \qquad (h,\tilde h) 
\longmapsto \big(\tilde h, -(-1)^{k\, (m-k)}\, h\big)~,
\end{flalign}
which interchanges (up to a sign) the roles of $h\in \dH^{m-k}(M;\bbZ)$ and $\tilde h\in \dH^{k}(M;\bbZ)$.
We interpret the mapping \eqref{eqzetamap} physically as exchanging the `electric' and `magnetic' sectors 
of the Abelian gauge theory. The map \eqref{eqzetamap} defines a natural isomorphism 
$\zeta: \Conf^{m-k}(-;\bbZ) \Rightarrow \Conf^{k}(-;\bbZ)$ because its components are clearly isomorphisms 
and for any morphism $f : M\to M^\prime$ in $\Loc_m$ the diagram of Abelian groups
\begin{flalign}
\xymatrix{
\ar[d]_-{f^\ast} \Conf^{m-k}(M^\prime;\bbZ) \ar[r]^-{\zeta} & \Conf^{k}(M^\prime;\bbZ) \ar[d]^-{f^\ast}\\
\Conf^{m-k}(M;\bbZ) \ar[r]_-{\zeta} & \Conf^{k}(M;\bbZ) 
}
\end{flalign}
commutes.
\sk

We now dualize \eqref{eqzetamap} with respect to the weakly non-degenerate pairing 
\eqref{eqPairingConf}: Define a group homomorphism $\zeta^\star : \Conf^{k}_{\rm sc}(M;\bbZ)\to \Conf^{m-k}_{\rm sc}(M;\bbZ)$
by the condition
\begin{flalign}
\ips{(h,\tilde h)\, }{\, \zeta^\star(h',\tilde h'\,)} := \ips{\zeta(h,\tilde h)\, }{\, (h',\tilde h'\, )}~,
\end{flalign}
for all $(h',\tilde h'\, )\in \Conf^{k}_{\rm sc}(M;\bbZ)$ and $(h,\tilde h)\in \Conf^{m-k}(M;\bbZ)$.
A quick calculation shows that
\begin{flalign}\label{eqzetaastmap}
\zeta^\star : \Conf^{k}_{\rm sc}(M;\bbZ)\longrightarrow  \Conf^{m-k}_{\rm sc}(M;\bbZ)~, \qquad (h',\tilde h'\, )\longmapsto 
\big( -(-1)^{k\, (m-k)}\, \tilde h',h'\, \big)~.
\end{flalign}
The mapping \eqref{eqzetaastmap} defines a natural isomorphism 
$\zeta^\star : \Conf^{k}_{\rm sc}(-;\bbZ)\Rightarrow \Conf^{m-k}_{\rm sc}(-;\bbZ)$, i.e.\ 
for any morphism $f: M\to M^\prime$ in $\Loc_m$
the diagram of Abelian groups 
\begin{flalign}
\xymatrix{
\ar[d]_-{f_\ast}\Conf^{k}_{\rm sc}(M;\bbZ)\ar[r]^-{\zeta^\star} & \Conf^{m-k}_{\rm sc}(M;\bbZ)\ar[d]^-{f_\ast}\\
\Conf^{k}_{\rm sc}(M^\prime;\bbZ) \ar[r]_-{\zeta^\star} & \Conf^{m-k}_{\rm sc}(M^\prime;\bbZ)
}
\end{flalign}
commutes. We next observe that \eqref{eqzetaastmap} preserves the presymplectic structure
\eqref{eqPreSymplSCConf}: A quick calculation shows that
\begin{flalign}
\sigma\big(\zeta^\star(h,\tilde h)\,,\, \zeta^\star(h',\tilde h'\, )\big) = \sigma\big((h,\tilde h) \,,\, (h', \tilde h'\, )\big)~,
\end{flalign}
for all $(h,\tilde h), (h',\tilde h'\, )\in \Conf^{k}_{\rm sc}(M;\bbZ)$.
Using also the natural isomorphisms $\obs : \Conf^p_{\rm sc}(-;\bbZ) \Rightarrow \Obs^p(-;\bbZ)$ 
given in Proposition \ref{prpSCConfObsIso},
for $p=k$ and $p=m{-}k$,
we find that $\zeta^\star$ defines a natural isomorphism (denoted by the same symbol)
\begin{flalign}
\zeta^\star : \big(\Obs^{k}(-;\bbZ),\tau\big) \Longrightarrow \big(\Obs^{m-k}(-;\bbZ),\tau\big)~
\end{flalign}
between functors from $\Loc_m$ to $\PSAb$.
\sk

We can now state the main result of this section.
\begin{theo}\label{theo:duality}
The $C^\ast$-algebra homomorphism
\begin{flalign}
\eta := \mathfrak{CCR}(\zeta^\star) : \mathfrak{A}^k(M) \longrightarrow \mathfrak{A}^{m-k}(M)
\end{flalign}
defines
a duality between the two quantum field theories $\mathfrak{A}^k,\mathfrak{A}^{m-k} : \Loc_m\to \CAlg$.
\end{theo}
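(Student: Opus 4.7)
The plan is to observe that essentially all the work has already been done in the paragraphs preceding the theorem, and what remains is to transport the natural isomorphism $\zeta^\star$ of presymplectic Abelian groups through the CCR-functor. More precisely, the previous discussion establishes that $\zeta^\star$ is a natural isomorphism $\zeta^\star : (\Obs^k(-;\bbZ),\tau) \Rightarrow (\Obs^{m-k}(-;\bbZ),\tau)$ of functors $\Loc_m \to \PSAb$, preserving the presymplectic structure component-wise. The result should then follow by post-composition with the CCR-functor.

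First, I would recall the definitions: $\mathfrak{A}^k = \mathfrak{CCR}\circ (\Obs^k(-;\bbZ),\tau)$ and similarly $\mathfrak{A}^{m-k} = \mathfrak{CCR}\circ (\Obs^{m-k}(-;\bbZ),\tau)$. Whiskering the natural transformation $\zeta^\star$ with the functor $\mathfrak{CCR}$ yields a natural transformation $\mathfrak{CCR}\,\zeta^\star : \mathfrak{A}^k \Rightarrow \mathfrak{A}^{m-k}$, whose component at any $M \in \Loc_m$ is precisely the morphism $\eta_M = \mathfrak{CCR}(\zeta^\star_M) : \mathfrak{A}^k(M) \to \mathfrak{A}^{m-k}(M)$ appearing in the statement.

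Next, I would argue that each $\eta_M$ is a $C^\ast$-algebra isomorphism. Since every functor sends isomorphisms to isomorphisms, it suffices to check that $\zeta^\star_M : (\Obs^k(M;\bbZ),\tau) \to (\Obs^{m-k}(M;\bbZ),\tau)$ is an isomorphism in $\PSAb$, which is a restatement of the naturality and invertibility already established from the explicit formula \eqref{eqzetaastmap} (the inverse is given by swapping the roles of $k$ and $m-k$ and reversing the sign convention, as one sees directly from the definition, and the preservation $\sigma \circ (\zeta^\star \times \zeta^\star) = \sigma$ already recorded implies that $\zeta^\star$ is a $\PSAb$-morphism in both directions). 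Naturality of $\mathfrak{CCR}\,\zeta^\star$ is then automatic from functoriality, since for any $\Loc_m$-morphism $f : M \to M^\prime$ the outer square in
\[
\xymatrix{
\mathfrak{A}^k(M) \ar[r]^-{\eta_M} \ar[d]_-{f_\ast} & \mathfrak{A}^{m-k}(M) \ar[d]^-{f_\ast}\\
\mathfrak{A}^k(M^\prime) \ar[r]_-{\eta_{M^\prime}} & \mathfrak{A}^{m-k}(M^\prime)
}
\]
is obtained by applying $\mathfrak{CCR}$ to the corresponding naturality square for $\zeta^\star$ in $\PSAb$.

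I do not expect any serious obstacles here: the only non-trivial inputs are (i) the explicit verification that $\zeta^\star$ preserves the presymplectic structure and is natural, both of which are already in hand, and (ii) the fact that $\mathfrak{CCR}$ is a well-defined functor $\PSAb \to \CAlg$, which is standard (see \cite{Manuceau:1973yn} and \cite[Appendix~A]{Benini:2013ita}). Thus the proof should amount to little more than invoking these facts in sequence.
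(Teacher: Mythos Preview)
Your proposal is correct and matches the paper's own proof essentially line for line: both argue that naturality of $\eta$ follows from naturality of $\zeta^\star$ together with functoriality of $\mathfrak{CCR}$, and that each component $\eta_M$ is an isomorphism because functors preserve isomorphisms. Your version is simply a more verbose unpacking of the same two-sentence argument.
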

\begin{proof}
We need to show that $\eta$ defines a natural isomorphism 
$\eta : \mathfrak{A}^k\Rightarrow \mathfrak{A}^{m-k}$. Naturality of 
$\eta$ is a direct consequence of naturality of $\zeta^\star$ and the fact that $\mathfrak{CCR}$ is a functor, 
in particular it preserves compositions. As functors preserve isomorphisms it then follows
that $\eta$ is a natural isomorphism.
\end{proof}
\begin{cor}\label{cor:automorphisms}
For $m=2k$ the duality of Theorem \ref{theo:duality} becomes a self-duality, i.e.\ a natural automorphism
$\eta : \mathfrak{A}^k \Rightarrow \mathfrak{A}^{k}$. 
\end{cor}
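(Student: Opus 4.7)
The plan is simply to specialize Theorem \ref{theo:duality} to the self-dual dimension/degree $m = 2k$. In this case the codomain index $m-k$ equals $k$, so the two functors $\mathfrak{A}^k, \mathfrak{A}^{m-k} : \Loc_m \to \CAlg$ appearing in Theorem \ref{theo:duality} are literally the same functor. The natural isomorphism $\eta : \mathfrak{A}^k \Rightarrow \mathfrak{A}^{m-k}$ provided by that theorem is therefore a natural isomorphism from $\mathfrak{A}^k$ to itself, which is by definition a natural automorphism. No additional step is required beyond observing this substitution.

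If one wishes to be slightly more explicit, one can trace the specialization back through the construction: in the formula \eqref{eqzetaastmap} the components of $\zeta^\star$ take the form
\begin{flalign}
\zeta^\star_M : \Conf^k_{\rm sc}(M;\bbZ) \longrightarrow \Conf^k_{\rm sc}(M;\bbZ)~, \qquad (h',\tilde h'\, ) \longmapsto \big(-(-1)^{k^2}\, \tilde h',\, h'\, \big)~,
\end{flalign}
which, since $(-1)^{k^2} = (-1)^k$, is clearly a well-defined automorphism of the source Abelian group. Applying the $\mathfrak{CCR}$ functor then yields the claimed natural automorphism $\eta = \mathfrak{CCR}(\zeta^\star) : \mathfrak{A}^k \Rightarrow \mathfrak{A}^k$.

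There is no genuine obstacle here: the statement is purely a relabelling of indices in the already-established Theorem \ref{theo:duality}. The only remark worth making is that $\zeta^\star$ is not in general an involution; a direct computation gives $(\zeta^\star)^2 = (-1)^{k+1}\, \mathrm{id}$ as an automorphism of $\Conf^k_{\rm sc}(M;\bbZ)$, so $\eta$ has order two for odd $k$ and order four for even $k$. This refinement is not needed for the corollary itself and can be left as an observation.
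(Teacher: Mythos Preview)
Your proof is correct and follows the same (and only reasonable) approach as the paper: the corollary is stated without proof there because it is an immediate specialization of Theorem~\ref{theo:duality} to $m=2k$, exactly as you describe. Your additional remarks on the explicit form of $\zeta^\star$ and its order are correct but go beyond what is needed.
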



\section{\label{secSelfdual}Self-dual Abelian gauge theory}
In dimension $m=2k$ it makes sense to demand the self-duality condition
\begin{flalign}\label{eqn:selfduality}
\cu\, h = \dcu\, h
\end{flalign}
for a differential character $h\in\dH^k(M;\bbZ)$. Applying the Hodge operator $\ast$
to both sides of \eqref{eqn:selfduality} we obtain
\begin{flalign}
\dcu\, h = -(-1)^{k^2}\, \cu\,h = -(-1)^{k^2}\, \dcu\,h~,
\end{flalign}
which implies that for $k$ even the only solutions to \eqref{eqn:selfduality} are flat fields $h = \kappa(t)$, for 
$t\in \H^{k-1}(M;\bbT)$. In the following we shall focus on the physically much richer and interesting 
case where $k\in 2\bbZ_{\geq0}+1$ is odd.
\sk

The Abelian group of solutions to the self-duality equation \eqref{eqn:selfduality} is denoted
\begin{flalign}
\sdConf^k(M;\bbZ) := \big\{h\in \dH^k(M;\bbZ)\,:\, \cu\,h = \dcu\,h\big\}~.
\end{flalign}
There is a monomorphism 
\begin{flalign}\label{eqn:diagembed}
\mathrm{diag} : \sdConf^k(M;\bbZ) \longrightarrow \Conf^k(M;\bbZ) ~, \qquad h\longmapsto (h,h)
\end{flalign}
to the semi-classical configuration space introduced in \eqref{eqConf}.
Given any smooth spacelike Cauchy surface $\Sigma$ of $M$ with embedding $\iota_{\Sigma} : \Sigma \to M$,
we compose \eqref{eqn:diagembed} with the isomorphism of Theorem \ref{thmCauchyPbl}
and obtain a monomorphism
\begin{flalign}
(\iota^\ast_{\Sigma} \times\iota^\ast_{\Sigma})\circ \mathrm{diag} : \sdConf^k(M;\bbZ) 
\longrightarrow \dH^{k,k}(\Sigma;\bbZ)~, \qquad h\longmapsto (\iota^\ast_{\Sigma} h , \iota^\ast_{\Sigma} h)~,
\end{flalign}
whose image is given by the diagonal in $\dH^{k,k}(\Sigma;\bbZ)$: Given any 
$(h_{\Sigma},h_{\Sigma})\in \dH^{k,k}(\Sigma;\bbZ)$ in the diagonal,
consider the unique solution $(h,\tilde h)\in \Conf^k(M;\bbZ)$
of $\cu\,h =\dcu\,\tilde h$ with initial data $\iota_{\Sigma}^\ast h= h_{\Sigma}$ and 
$\iota^\ast_{\Sigma}\tilde h = h_{\Sigma}$, cf.\ Theorem \ref{thmCauchyPbl}.
Then $(h-\tilde h, \tilde h-h)\in \Conf^k(M;\bbZ)$ satisfy $\iota_\Sigma^\ast(h-\tilde h)=0$
and  $\iota_\Sigma^\ast(\tilde h-h)=0$, hence $\tilde h=h$ by using again Theorem \ref{thmCauchyPbl}.
We have thereby obtained an isomorphism
of Abelian groups
\begin{flalign}\label{eqn:selfdualivp}
\iota^\ast_{\Sigma} : \sdConf^k(M;\bbZ)\longrightarrow \dH^k(\Sigma;\bbZ)~,
\end{flalign}
which we may interpret as in \eqref{eqCauchyPbl} as establishing the well-posedness of the initial value problem
for $h\in\dH^k(M;\bbZ)$ given by 
\begin{flalign}\label{eqn:CauchyPblselfdual}
\cu\,h = \dcu\,h \ ,\qquad \iota^\ast_\Sigma h = h_{\Sigma}~,
\end{flalign}
with initial datum $h_{\Sigma}\in\dH^k(\Sigma;\bbZ)$.
\sk

Similar statements hold true for the Abelian group of solutions of spacelike compact support
to the self-duality equation \eqref{eqn:selfduality}, denoted by
\begin{flalign}
\sdConf^k_{\rm sc}(M;\bbZ) := \big\{h \in\dH^k_{\rm sc}(M;\bbZ)\,:\, \cu \,h = \dcu\,h\big\}~.
\end{flalign}
In particular, there is a monomorphism
\begin{flalign}\label{eqn:scdiagembed}
\mathrm{diag} : \sdConf^k_{\rm sc}(M;\bbZ) \longrightarrow \Conf_{\rm sc}^k(M;\bbZ) ~, \qquad h\longmapsto (h,h)
\end{flalign}
to the Abelian group of semi-classical gauge fields of spacelike compact support introduced in \eqref{eqSCConf}.
Using Corollary \ref{corSCCauchyPbl}, one easily shows that
\begin{flalign}\label{eqn:scselfdualivp}
\iota^\ast_{\Sigma} : \sdConf^k_{\rm sc}(M;\bbZ)\longrightarrow \dH^k_{\rm c}(\Sigma;\bbZ)
\end{flalign}
is an isomorphism, which we may interpret as establishing 
the well-posedness of the initial value problem \eqref{eqn:CauchyPblselfdual}
for $h\in\dH_{\rm sc}^k(M;\bbZ)$ of spacelike compact support 
and initial datum $h_{\Sigma}\in\dH_{\rm c}^k(\Sigma;\bbZ)$ of compact support.
\sk

Similarly to \eqref{eqPairingConf}, there is a weakly non-degenerate $\bbT$-valued pairing
\begin{flalign}
\ips{\cdot}{\cdot}_{\mathfrak{sd}} : \sdConf^k(M;\bbZ)\times \sdConf^k_{\rm sc}(M;\bbZ) 
\longrightarrow \bbT~, \qquad (h,h'\, )\longmapsto
\ips{\iota^\ast_{\Sigma} h}{\iota^\ast_{\Sigma}h'\, }_{\rm c}^{}~,
\end{flalign}
which is independent of the choice of Cauchy surface $\Sigma$ of $M$.\footnote{This is 
demonstrated by a proof similar to that of Lemma \ref{lemPairingSigmaindep}.}
Thus there is a monomorphism
\begin{flalign}\label{eqn:selfdualobsiso}
\obs_{\mathfrak{sd}}  :  \sdConf^k_{\rm sc}(M;\bbZ)\longrightarrow \sdConf^k(M;\bbZ)^\star ~, \qquad 
h'\longmapsto \ips{\cdot}{h'\, }_{\mathfrak{sd}}
\end{flalign}
to the character group of $\sdConf^k(M;\bbZ)$, 
whose image is denoted $\sdObs^k(M;\bbZ)$ and called
the Abelian group of semi-classical observables on $\sdConf^k(M;\bbZ)$.
\sk

Analogously to Proposition \ref{prpPreSympl} we define a $\bbT$-valued presymplectic structure
\begin{flalign}
\tau_{\mathfrak{sd}} : \sdObs^k(M;\bbZ)\times \sdObs^k(M;\bbZ)\longrightarrow \bbT~, \qquad
(\varphi,\varphi'\, )\longmapsto \ips{I(\obs_{\mathfrak{sd}}^{-1}\varphi)}{\obs_{\mathfrak{sd}}^{-1}\varphi'\, }_{\mathfrak{sd}}~.
\end{flalign}
Up to the isomorphism $\obs_{\mathfrak{sd}}^{-1} : \sdObs^k(M;\bbZ)\to  \sdConf^k_{\rm sc}(M;\bbZ)$ 
induced by \eqref{eqn:selfdualobsiso}, the presymplectic structure reads as
\begin{flalign}\label{eqn:sigmasd}
\sigma_{\mathfrak{sd}} :  \sdConf^k_{\rm sc}(M;\bbZ)\times  \sdConf^k_{\rm sc}(M;\bbZ) \longrightarrow \bbT~, \qquad
(h,h'\, ) \longmapsto \ips{I(h)}{h'\, }_{\mathfrak{sd}}~.
\end{flalign}
The radical of $\sigma_{\mathfrak{sd}}$ coincides with the kernel
of $I : \sdConf^k_{\rm sc}(M;\bbZ) \to \sdConf^k(M;\bbZ) $, hence the radical of $\tau_{\mathfrak{sd}} $
is $\obs_{\mathfrak{sd}}(\mathrm{ker} \,I)$.
\sk

Using arguments similar to those of Section \ref{secObs}, one can show that the presymplectic Abelian groups
$(\sdObs^k(M;\bbZ),\tau_{\mathfrak{sd}})$ for the self-dual field theory are functorial, i.e.\
we have constructed a functor 
\begin{flalign}
(\sdObs^k(-;\bbZ),\tau_{\mathfrak{sd}}) : \Loc_{2k} \longrightarrow \PSAb~.
\end{flalign}
Composing with the CCR-functor from Section \ref{secQuantization} we obtain quantum field theories
\begin{flalign}
\mathfrak{sdA}^k := \mathfrak{CCR}\circ (\sdObs^k(-;\bbZ),\tau_{\mathfrak{sd}}) : \Loc_{2k}\longrightarrow \CAlg~,
\end{flalign}
for all $k\in 2\bbZ_{\geq0} +1$, which quantize the self-duality equation \eqref{eqn:selfduality}.
Using similar arguments as those of Section \ref{subPropPSAb},
one can show that these quantum field theories satisfy the same
properties as those listed in
Theorem \ref{theo:QFTproperties}.
\begin{theo}\label{theo:selfdualQFTproperties}
The functor $\mathfrak{sdA}^k  : \Loc_{2k} \to \CAlg$ enjoys the following properties:
\begin{itemize}
\item Quantum causality axiom: Let $M_1 \stackrel{f_1}{\longrightarrow} M \stackrel{f_2}{\longleftarrow} M_2$ be
a diagram in $\Loc_{2k}$ such that the images of $f_1$ and $f_2$ are causally disjoint.
Then the subalgebras  ${f_1}_\ast (\mathfrak{sdA}^k(M_1)) $ and ${f_2}_\ast (\mathfrak{sdA}^k(M_2)) $
of $\mathfrak{sdA}^k(M) $ commute.

\item Quantum time-slice axiom: Let $f: M \to M^\prime$ be a Cauchy morphism.
Then $f_\ast: \mathfrak{sdA}^k(M) \to \mathfrak{sdA}^k(M^\prime\, )$ is an isomorphism. 

\item Violation of the quantum locality axiom: Let $f: M \to M^\prime$ be a morphism in $\Loc_{2k}$.
Then $f_\ast:\mathfrak{sdA}^k(M) \to \mathfrak{sdA}^k(M^\prime\, )$ is injective 
if and only if $f_\ast: \H^{k-1}_{\rm sc}(M;\bbT)\to \H^{k-1}_{\rm sc}(M^\prime;\bbT)$ is injective.
For $k=1$ the latter is always the case, while for $k \in 2\bbZ_{\geq0}+3 $
there is at least one morphism in $\Loc_{2k}$ violating injectivity. 
\end{itemize}
\end{theo}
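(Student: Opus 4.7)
The plan is to mimic the strategy used for Theorem \ref{theo:QFTproperties}: first establish the three properties at the presymplectic level for $(\sdObs^k(-;\bbZ),\tau_{\mathfrak{sd}}) : \Loc_{2k} \to \PSAb$, and then transport them through the CCR-functor, exploiting that $\mathfrak{CCR}$ sends presymplectic causality, isomorphisms and injections in $\PSAb$ to the corresponding quantum statements (as in \cite{Benini:2012vi, Becker:2014tla}). Throughout I would replace $\sdObs^k(M;\bbZ)$ by $\sdConf^k_{\rm sc}(M;\bbZ)$ via the natural isomorphism $\obs_{\mathfrak{sd}}$ coming from \eqref{eqn:selfdualobsiso}, and work with $\sigma_{\mathfrak{sd}}$ of \eqref{eqn:sigmasd} instead of $\tau_{\mathfrak{sd}}$.

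For the quantum causality axiom I would argue as in Proposition \ref{prpCausality}: given $h \in \sdConf^k_{\rm sc}(M_1;\bbZ)$ and $h' \in \sdConf^k_{\rm sc}(M_2;\bbZ)$ with $J(f_1(M_1)) \cap f_2(M_2) = \emptyset$, naturality of the pairing $\ips{\cdot}{\cdot}_{\mathfrak{sd}}$ (analogous to Lemma \ref{lemPairingSigmaindep}, derived by picking a Cauchy surface compatible with both embeddings) gives
\begin{flalign*}
\sigma_{\mathfrak{sd}}\big({f_1}_\ast h\,,\,{f_2}_\ast h'\, \big) = \ips{f_2^\ast \, I \, {f_1}_\ast h}{h'\, }_{\mathfrak{sd}}~,
\end{flalign*}
which vanishes by the causal-support property $f_2^\ast \circ I \circ {f_1}_\ast = 0$ of Lemma \ref{lemIproperties}. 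For the quantum time-slice axiom I would imitate Proposition \ref{prpTimeSlice}: given a Cauchy morphism $f : M \to M^\prime$, choose a Cauchy surface $\Sigma^\prime \subseteq f(M)$ with preimage $\Sigma = f^{-1}(\Sigma^\prime\,)$, and use the commutative square
\begin{flalign*}
\xymatrix@C=1.5cm{
\sdConf^k_{\rm sc}(M;\bbZ) \ar[r]^-{f_\ast} \ar[d]_-{\iota_\Sigma^\ast} & \sdConf^k_{\rm sc}(M^\prime;\bbZ) \ar[d]^-{\iota_{\Sigma^\prime}^\ast} \\
\dH^k_{\rm c}(\Sigma;\bbZ) \ar[r]_-{f_{\Sigma \ast}} & \dH^k_{\rm c}(\Sigma^\prime;\bbZ)
}
\end{flalign*}
whose verticals are isomorphisms by \eqref{eqn:scselfdualivp} and whose bottom is an isomorphism because $f$ restricts to an orientation-preserving isometry $f_\Sigma : \Sigma \to \Sigma^\prime$; this forces $f_\ast$ to be an isomorphism, and it clearly preserves $\sigma_{\mathfrak{sd}}$.

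For locality the plan is to first manufacture the self-dual analogue of the short exact sequence in the top row of \eqref{eqSCCauchyPblDiagram}, i.e.
\begin{flalign*}
0 \longrightarrow \H^{k-1}_{\rm sc}(M;\bbT) \xrightarrow{\ \kappa\ } \sdConf^k_{\rm sc}(M;\bbZ) \xrightarrow{\ \cu\ } \Omega^k_{{\rm sc},\bbZ}(M) \cap \{\omega = \ast \omega\} \longrightarrow 0~,
\end{flalign*}
by transporting the standard short exact sequence for $\dH^k_{\rm c}(\Sigma;\bbZ)$ along the Cauchy isomorphism \eqref{eqn:scselfdualivp} and checking naturality in $f$ as in Remark \ref{remSequencesSCnatural}. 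Since pushforward on self-dual forms of spacelike compact support is manifestly injective (no topological obstruction), the five lemma then reduces injectivity of $f_\ast$ on $\sdObs^k$ to injectivity of $f_\ast : \H^{k-1}_{\rm sc}(M;\bbT) \to \H^{k-1}_{\rm sc}(M^\prime;\bbT)$. The equivalence with the stated cohomological condition then follows exactly as in Proposition \ref{prpLocalityViolation}.

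The main obstacle I anticipate is the very last claim — producing a counterexample for every odd $k \geq 3$ while certifying that the case $k = 1$ is always injective. For the positive statement at $k=1$ I would argue as in the discussion preceding \cite[Proposition~6.11]{Becker:2014tla}: the relevant group $\H^0_{\rm sc}(M;\bbT)$ vanishes for $M$ connected and noncompact, or is handled directly when $M$ is compact, so the condition is automatic. For $k \geq 3$ odd I would import the geometric counterexamples of \cite[Example~6.9]{Becker:2014tla} in dimension $m=2k$ and degree $k-1$; since $\H^{k-1}_{\rm sc}(M;\bbT) \simeq \H^{k-1}_{\rm c}(\Sigma;\bbT) \simeq \H^k(\Sigma;\bbZ)^\star \simeq \H^k(M;\bbZ)^\star$ by the chain used in Proposition \ref{prpLocalityViolation}, these counterexamples specialize to the self-dual setting. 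The delicate point is ensuring that the counterexample embeddings $f : M \to M^\prime$ used in \cite{Becker:2014tla} still sit inside $\Loc_{2k}$ for odd $k$; this is a matter of choosing $M,M^\prime$ appropriately so that the purely topological failure of injectivity persists after dimensional restriction.
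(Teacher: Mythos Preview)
Your proposal is correct and follows precisely the approach the paper intends: the paper does not write out a proof of Theorem~\ref{theo:selfdualQFTproperties} at all, but simply states that ``using similar arguments as those of Section~\ref{subPropPSAb}'' the properties of Theorem~\ref{theo:QFTproperties} carry over, and your outline reproduces exactly those arguments (Propositions~\ref{prpCausality}, \ref{prpTimeSlice}, \ref{prpLocalityViolation}) in the self-dual setting. Your concern about the counterexamples in odd degree $k\geq 3$ is reasonable caution but not a gap: the constructions in \cite[Example~6.9]{Becker:2014tla} are already available in every dimension $m\geq 3$ and every intermediate degree, so specializing to $m=2k$ and the single factor $\H^{k-1}_{\rm sc}(-;\bbT)\simeq \H^k(-;\bbZ)^\star$ is immediate.
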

\begin{rem}
We address the question how the self-dual quantum field theories $\mathfrak{sdA}^k$, which quantize
the self-duality equation \eqref{eqn:selfduality}, are related to the self-dualities of
the quantum field theories $\mathfrak{A}^k$ established in Corollary \ref{cor:automorphisms}. 
Let $k\in 2\bbZ_{\geq0}+1$ and consider any object $M$ in $\Loc_{2k}$. 
The self-duality  \eqref{eqzetaastmap} on $\Conf^k_{\rm sc}(M;\bbZ)$ then reduces
to $\zeta^\star(h',\tilde h'\, ) = (\tilde h',h'\, )$, i.e.\ it simply interchanges $h'$ and $\tilde h'$.
The Abelian group of invariants under this self-duality is given by the diagonal
\begin{flalign}
\Conf^k_{\rm sc}(M;\bbZ)^{\mathrm{inv}} := 
\big\{(h',\tilde h'\,)\in \Conf^k_{\rm sc}(M;\bbZ) \, :\, \zeta^\star(h',\tilde h'\, ) = (h',\tilde h'\, ) \big\}
=\big\{(h',h'\, )\in \Conf^k_{\rm sc}(M;\bbZ)\big\}~,
\end{flalign}
which by \eqref{eqn:scdiagembed} is isomorphic to $\sdConf^k_{\rm sc}(M;\bbZ)$.
Restricting the presymplectic structure \eqref{eqPreSymplSCConf} to 
the invariants $\Conf^k_{\rm sc}(M;\bbZ)^{\mathrm{inv}}$ then yields
\begin{flalign}
\sigma : \Conf^k_{\rm sc}(M;\bbZ)^{\mathrm{inv}}\times \Conf^k_{\rm sc}(M;\bbZ)^{\mathrm{inv}}\longrightarrow \bbT~, \qquad
\big((h,h)\,,\,(h',h'\, )\big)\longmapsto 2\,\sigma_{\mathfrak{sd}}(h,h'\, )~,
\end{flalign}
where $\sigma_{\mathfrak{sd}}$ is the presymplectic
structure on $\sdConf^k_{\rm sc}(M;\bbZ)$ given in \eqref{eqn:sigmasd}. 
Due to the prefactor~$2$, it follows that $(\sdConf^k_{\rm sc}(M;\bbZ),\sigma_{\mathfrak{sd}})$
and $(\Conf^k_{\rm sc}(M;\bbZ)^{\mathrm{inv}},\sigma)$ are {\em not} 
isomorphic as presymplectic Abelian groups, but only as Abelian groups.
Moreover, the $C^\ast$-algebras $\mathfrak{sdA}^k(M)$ and $\mathfrak{A}^k(M)^{\mathrm{inv}}$
(i.e.\ the $C^\ast$-subalgebra of $\mathfrak{A}^k(M)$ which is generated by the invariant Weyl symbols
$\mathrm{W}(\obs(h',h'\, ))$, for all $(h',h'\, )\in\Conf^k_{\rm sc}(M;\bbZ)^{\mathrm{inv}} $) 
are in general not isomorphic. 
Thus even though the quantum field theories $\mathfrak{sdA}^k : \Loc_{2k}\to \CAlg$
and $\mathfrak{A}^k(-)^{\mathrm{inv}} : \Loc_{2k}\to \CAlg$ are similar, 
they are strictly speaking not isomorphic. In particular, due to effects which are 
caused by $\bbZ_2$-torsion elements in the cohomology groups $\H^k(M;\bbZ)$, the latter theory typically
has a bigger center than the former theory. An explicit example of this fact is illustrated below. 
\end{rem}

\begin{ex}
Fix any $k \in 2\bbZ_{\geq 0}+3$ and consider the lens space $L = \mathbb{S}^{2k-3}/\bbZ_2$ 
obtained as the quotient of the $2k{-}3$-sphere $\mathbb{S}^{2k-3}$ by the antipodal $\bbZ_2$-action.
Take any object $M$ in $\Loc_{2k}$ which admits a smooth spacelike Cauchy surface $\Sigma$ diffeomorphic
to $\bbT^2 \times L$, where $\bbT^2$ is the $2$-torus. Since the Cauchy surface
$\Sigma$ is compact, the notion of spacelike compact support becomes irrelevant for this spacetime
$M$ and in particular the homomorphism $I : \sdConf^k_{\rm sc}(M;\bbZ) \to \sdConf^k(M;\bbZ) $ 
reduces to the identity. Using standard results on the homology groups of lens spaces, 
see e.g.\ \cite[Chapter 2, Example 2.43]{Hatcher}, and the universal coefficient theorem 
for cohomology, one shows that $\H^{k-1}(L;\bbZ) \simeq \bbZ_2$. Using the K\"unneth theorem
we find that $\H^k(\Sigma;\bbZ)$ has a direct summand $(\bbZ_2)^2$. 
In particular, there exists $t \in \H^k(\Sigma;\bbZ)$ such that $t \neq 0$, but $2 t = 0$. 
Recalling that $\ch$ is surjective (cf.\ \eqref{eqDiffCohoDiagram}), 
we find $f \in \dH^k(\Sigma;\bbZ)$ such that $\ch\, f = t$. 
It follows that there exists $A \in \Omega^{k-1}(\Sigma)$ such that $\iota[A] = 2 f$. 
Introducing $h_\Sigma = f - \iota[A/2] \in \dH^k(\Sigma;\bbZ)$, 
by construction we obtain $h_\Sigma \neq 0$ (otherwise $t$ would be trivial) and $2 h_\Sigma = 0$. 
Solving the initial value problem \eqref{eqn:CauchyPblselfdual} provides $h \in \mathfrak{sdC}^k(M;\bbZ)$ 
with $h \neq 0$, but $2 h = 0$. 
In fact, $2 h \in \mathfrak{sdC}^k(M;\bbZ)$ solves \eqref{eqn:CauchyPblselfdual} 
with initial datum $2 h_\Sigma = 0$. 
Since $\Sigma$ is compact, the presymplectic structure \eqref{eqn:sigmasd} is weakly non-degenerate. 
In particular, being non-zero, $h \in \mathfrak{sdC}^k(M)$ is not in the radical. 
Conversely, taking into account also $(h,h) \in \Conf^k(M;\bbZ)^\mathrm{inv}$, 
we find $\sigma((h,h),(h',h'\, )) = 2 \sigma_\mathfrak{sd}(h,h'\, ) =
\sigma_\mathfrak{sd}(2 h,h'\, ) = 0$ 
for all $(h',h'\, ) \in \Conf^k(M)^{\mathrm{inv}}$. This shows that the center of $\mathfrak{A}^k(M)^{\mathrm{inv}}$
is bigger than that of $\mathfrak{sdA}^k (M)$ for this particular spacetime $M$.
\end{ex}


\section*{Acknowledgments}
We thank Ulrich Bunke for helpful discussions and the anonymous referees for their comments and remarks.
This work was supported in part by the Action MP1405 QSPACE from the 
European Cooperation in Science and Technology (COST).  
The work of C.B.\ is partially supported by the Collaborative Research Center (SFB) 
``Raum Zeit Materie'', funded by the Deutsche Forschungsgemeinschaft (DFG, Germany).
The work of M.B.\ is supported partly by a Research Fellowship of the Della Riccia Foundation (Italy) 
and partly by a Postdoctoral Fellowship of the Alexander von Humboldt Foundation (Germany). 
The work of A.S.\ is supported by a Research Fellowship of the Deutsche Forschungsgemeinschaft (DFG, Germany). 
The work of R.J.S.\ is partially supported by the Consolidated Grant ST/L000334/1 
from the UK Science and Technology Facilities Council.


\appendix

\section{\label{secTechLemmas}Technical lemmas}
In this appendix we prove five lemmas which are used in the main text.
Some of these proofs are rather technical and also make use of results in the companion 
paper \cite{BBSS}, in which case we give precise references.

\begin{lem}\label{lemChainHomotopy}
Let $M$ be a time-oriented $m$-dimensional globally hyperbolic Lorentzian manifold
and $\Sigma$ a smooth spacelike Cauchy surface of $M$. 
Consider the embedding $\iota_{\Sigma} : \Sigma \to M$ of $\Sigma$ into $M$,
and the projection $\pi_{\Sigma} : M\to \Sigma$ of $M$ onto $\Sigma$ which is induced by a
choice of diffeomorphism $M\simeq \bbR\times \Sigma$ such that $\iota_{\Sigma}(\Sigma) \simeq \{0\}\times \Sigma$,
cf.\ \cite{Bernal:2004gm,Bernal:2005qf}. Then:
\begin{itemize}
\item[(i)] Inducing $\iota_{\Sigma}$ and $\pi_{\Sigma}$ to smooth singular chains, i.e.\ 
$\iota_{\Sigma\,\ast} : C_{\sharp}(\Sigma) \to C_{\sharp}(M)$ and $\pi_{\Sigma\,\ast} : C_{\sharp}(M)\to C_{\sharp}(\Sigma)$,
we have $\pi_{\Sigma\,\ast}\circ \iota_{\Sigma\,\ast} = \id$
and $\iota_{\Sigma\,\ast}\circ \pi_{\Sigma\,\ast} - \id = \del\circ  h_{\Sigma} + h_{\Sigma}\circ\del$,
for a chain homotopy $h_{\Sigma} : C_{\sharp} (M)\to C_{\sharp+1}(M)$. In particular, $\iota_{\Sigma}$ and $\pi_{\Sigma}$
induce isomorphisms on smooth singular homology:
\begin{flalign}
\xymatrix{
\H_\sharp(M) \ar@<3pt>[r]^-{\pi_{\Sigma\,\ast}}		&	\H_\sharp(\Sigma) \ar@<1pt>[l]^-{\iota_{\Sigma\,\ast}}~.
}
\end{flalign}
\item[(ii)] Let $G$ be an Abelian group. Inducing $\iota_{\Sigma}$ and $\pi_{\Sigma}$ to smooth singular $G$-valued cochains, 
i.e.\ $\iota_{\Sigma}^\ast : C^{\sharp}(M;G) \to C^{\sharp}(\Sigma;G)$ and 
$\pi_{\Sigma}^{\ast} : C^{\sharp}(\Sigma;G)\to C^{\sharp}(M;G)$,
we have $\iota_{\Sigma}^\ast \circ \pi_{\Sigma}^\ast = \id$
and $\pi_{\Sigma}^\ast\circ \iota_{\Sigma}^\ast - \id = \cdel\circ  h_{\Sigma}^\star + h_{\Sigma}^\star\circ\cdel$,
for a cochain homotopy $h_{\Sigma}^\star : C^{\sharp} (M;G)\to C^{\sharp-1}(M;G)$. 
In particular, $\iota_{\Sigma}$ and $\pi_{\Sigma}$
induce isomorphisms on smooth singular cohomology with coefficients in~$G$:
\begin{flalign}
\xymatrix{
\H^\sharp(M;G) \ar@<-1pt>[r]_-{\iota_\Sigma^\ast}	&	\H^\sharp(\Sigma;G) \ar@<-3pt>[l]_-{\pi_\Sigma^\ast}~.
}
\end{flalign}
\end{itemize}
\end{lem}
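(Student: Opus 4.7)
The plan is to reduce both parts to the standard fact that a smooth homotopy between smooth maps induces a (co)chain homotopy between the induced maps on smooth singular (co)chains, via the classical prism operator construction.

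For part (i), the identity $\pi_{\Sigma}\circ \iota_{\Sigma} = \id_{\Sigma}$ is immediate from the explicit description $\iota_{\Sigma}(p) = (0,p)$ and $\pi_{\Sigma}(t,p) = p$ arising from the diffeomorphism $M\simeq \bbR\times \Sigma$; functoriality of smooth singular chains then yields $\pi_{\Sigma\,\ast}\circ \iota_{\Sigma\,\ast} = \id$. For the chain homotopy, I would introduce the smooth map
\begin{equation*}
H : M\times [0,1] \longrightarrow M~, \qquad ((t,p),s)\longmapsto ((1-s)\,t,p)~,
\end{equation*}
which satisfies $H(\cdot,0)=\id_M$ and $H(\cdot,1) = \iota_{\Sigma}\circ\pi_{\Sigma}$. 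For each smooth singular $p$-simplex $\gamma : \Delta^p \to M$, the composition $H\circ(\gamma\times \id_{[0,1]})$ is smooth, and triangulating the prism $\Delta^p\times [0,1]$ into $p+1$ affine $(p+1)$-simplices in the standard way produces $p+1$ smooth singular $(p+1)$-simplices in $M$; their signed sum defines $h_{\Sigma}(\gamma) \in C_{p+1}(M)$. The usual boundary computation on the prism, splitting $\del(\Delta^p\times [0,1])$ into top face, bottom face and lateral faces, then yields the chain homotopy identity $\del\circ h_{\Sigma} + h_{\Sigma}\circ \del = \iota_{\Sigma\,\ast}\circ \pi_{\Sigma\,\ast} - \id$, and the asserted isomorphisms on smooth singular homology follow at once.

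For part (ii), I would simply dualize. Setting $h_{\Sigma}^\star := \Hom_{\bbZ}(h_{\Sigma},G)$, i.e.\ $(h_{\Sigma}^\star\phi)(\gamma) := \phi(h_{\Sigma}\gamma)$ for $\phi \in C^p(M;G)$ and $\gamma\in C_{p-1}(M)$, and applying $\Hom_{\bbZ}(-,G)$ to the chain homotopy identity of (i) and to the relation $\pi_{\Sigma\,\ast}\circ\iota_{\Sigma\,\ast} = \id$, one obtains $\iota_{\Sigma}^\ast\circ \pi_{\Sigma}^\ast = \id$ together with the cochain homotopy identity $\pi_{\Sigma}^\ast\circ\iota_{\Sigma}^\ast - \id = \cdel\circ h_{\Sigma}^\star + h_{\Sigma}^\star\circ \cdel$, from which the isomorphisms on smooth singular cohomology with coefficients in $G$ follow immediately.

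The main obstacle is essentially bookkeeping: one must verify that the affine triangulation of $\Delta^p\times [0,1]$ is realized by smooth affine maps $\Delta^{p+1}\to \Delta^p\times [0,1]$ so that composition with the smooth map $H\circ(\gamma\times\id_{[0,1]})$ produces bona fide smooth singular chains in $M$. This is standard, so the content of the lemma is really a smooth refinement of the classical homotopy-induces-chain-homotopy argument, specialised to the present globally hyperbolic setting, where the diffeomorphism $M\simeq \bbR\times \Sigma$ supplies the required smooth deformation retract.
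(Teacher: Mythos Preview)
Your proof is correct and follows essentially the same approach as the paper: the paper also uses the explicit homotopy $H_\Sigma(s,(t,x)) = (s\,t,x)$ (equivalent to yours up to the reparametrization $s\mapsto 1-s$), invokes the standard prism-operator construction (citing Hatcher, Theorem~2.10) for the chain homotopy, and then obtains (ii) by setting $h_\Sigma^\star = \Hom(h_\Sigma,G)$. The only difference is that you spell out the prism triangulation explicitly where the paper simply cites the reference.
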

\begin{proof}
We shall denote points by $x\in\Sigma$ and $(t,x)\in M\simeq \bbR\times \Sigma$.
By construction we have $\pi_{\Sigma}\circ \iota_\Sigma = \id_{\Sigma}$.
Notice further that $\iota_{\Sigma}\circ \pi_{\Sigma}$ and the identity $\id_M$
are homotopic via
\begin{flalign}\label{eqn:tmpHomotopy}
H_{\Sigma} : [0,1]\times M\longrightarrow M~, \qquad \big(s\,,\,(t,x) \big)\longmapsto (s\,t,x)~.
\end{flalign}
As usual, see for example the proof of \cite[Theorem 2.10]{Hatcher}, this homotopy induces
the desired chain homotopy $h_{\Sigma} : C_{\sharp} (M)\to C_{\sharp+1}(M)$, which proves item (i).
Item (ii) then follows by defining $h_\Sigma^\star = \Hom(h_\Sigma,G): C^{\sharp} (M;G)\to C^{\sharp-1}(M;G)$.
\end{proof}

\begin{lem}\label{lemRelChainHomotopy}
Under the same hypotheses as in Lemma \ref{lemChainHomotopy}, let $K\subseteq \Sigma$
be a compact subset. Then:
\begin{itemize}
\item[(i)] Inducing $\iota_{\Sigma}$ and $\pi_{\Sigma}$ to relative smooth singular chains, i.e.\ 
$\iota_{\Sigma\,\ast} : C_{\sharp}(\Sigma,\Sigma\setminus K) \to C_{\sharp}(M,M\setminus J(K))$ 
and $\pi_{\Sigma\,\ast} : C_{\sharp}(M,M\setminus J(K))\to C_{\sharp}(\Sigma,\Sigma\setminus K)$,
we have $\pi_{\Sigma\,\ast}\circ \iota_{\Sigma\,\ast} = \id$
and $\iota_{\Sigma\,\ast}\circ \pi_{\Sigma\,\ast} - \id = 
\del\circ  h_{\Sigma} + h_{\Sigma}\circ\del$, for a chain homotopy $h_{\Sigma} : C_{\sharp} (M,M\setminus J(K))
\to C_{\sharp+1}(M,M\setminus J(K))$. In particular, $\iota_{\Sigma}$ and $\pi_{\Sigma}$
induce an isomorphism on relative smooth singular homology:
\begin{flalign}
\xymatrix{
\H_\sharp(M,M\setminus J(K)) \ar@<3pt>[r]^-{\pi_{\Sigma\,\ast}}		&	
\H_\sharp(\Sigma,\Sigma\setminus K) \ar@<1pt>[l]^-{\iota_{\Sigma\,\ast}}~~.
}
\end{flalign}
\item[(ii)] Let $G$ be an Abelian group. Inducing $\iota_{\Sigma}$ and $\pi_{\Sigma}$ to relative 
smooth singular $G$-valued cochains, 
i.e.\ $\iota_{\Sigma}^\ast : C^{\sharp}(M,M\setminus J(K);G) \to C^{\sharp}(\Sigma,\Sigma\setminus K;G)$ and 
$\pi_{\Sigma}^{\ast} : C^{\sharp}(\Sigma,\Sigma\setminus K ;G)\to C^{\sharp}(M,M\setminus J(K);G)$,
we have $\iota_{\Sigma}^\ast \circ \pi_{\Sigma}^\ast = \id$
and $\pi_{\Sigma}^\ast\circ \iota_{\Sigma}^\ast - \id
= \cdel\circ  h_{\Sigma}^\star + h_{\Sigma}^\star\circ\cdel$,
for a cochain homotopy $h_{\Sigma}^\star : C^{\sharp} (M,M\setminus J(K);G)\to C^{\sharp-1}(M,M\setminus J(K);G)$. 
In particular, $\iota_{\Sigma}$ and $\pi_{\Sigma}$
induce an isomorphism on relative smooth singular cohomology with coefficients in $G$:
\begin{flalign}
\xymatrix{
\H^\sharp(M,M\setminus J(K);G) \ar@<-1pt>[r]_-{\iota_\Sigma^\ast}	&	
\H^\sharp(\Sigma,\Sigma\setminus K;G) \ar@<-3pt>[l]_-{\pi_\Sigma^\ast}~~.
}
\end{flalign}
\end{itemize}
\end{lem}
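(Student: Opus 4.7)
The plan is to reduce the relative statement to Lemma \ref{lemChainHomotopy} by showing that the maps $\iota_\Sigma$, $\pi_\Sigma$, and the homotopy $H_\Sigma$ from \eqref{eqn:tmpHomotopy} are \emph{morphisms of pairs} between $(M, M\setminus J(K))$ and $(\Sigma, \Sigma\setminus K)$. Once this is established, the chain homotopy $h_\Sigma$ constructed in the proof of Lemma \ref{lemChainHomotopy} via the standard prism decomposition of $[0,1]\times\Delta^n$ automatically sends chains supported in $M\setminus J(K)$ to chains supported in $M\setminus J(K)$, hence descends to relative chains, and the identities of item (i) transfer verbatim. Item (ii) then follows by applying $\Hom(-,G)$, which turns the chain homotopy from (i) into the required cochain homotopy on relative cochains.

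The compatibility $\iota_\Sigma(\Sigma\setminus K)\subseteq M\setminus J(K)$ is the standard causal-geometric identity $J(K)\cap\Sigma=K$: if $(0,y)\in J(K)$, then any causal curve from some $k\in K\subseteq\Sigma$ to $(0,y)$ must preserve the Cauchy time function, hence is constant by achronality of $\Sigma$, forcing $y\in K$. Dually, $\pi_\Sigma(M\setminus J(K))\subseteq \Sigma\setminus K$ is seen by contrapositive: in the Bernal--S\'anchez splitting, the vertical curve $r\mapsto (r,y)$ is timelike, so $y\in K$ implies $(t,y)\in J(\{(0,y)\})\subseteq J(K)$ for every $t\in\bbR$.

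The nontrivial step is the homotopy compatibility $H_\Sigma([0,1]\times(M\setminus J(K)))\subseteq M\setminus J(K)$. Given $(t,x)\notin J(K)$ and $s\in[0,1]$, I may assume $t\geq 0$ by time-reversal symmetry; then $st\in[0,t]$ and $(t,x)\in J^+(\{(st,x)\})$ via the vertical timelike segment. Suppose $(st,x)\in J(K)$. If $(st,x)\in J^+(K)$, transitivity of $J^+$ yields $(t,x)\in J^+(K)$, a contradiction. If instead $(st,x)\in J^-(K)$, then monotonicity of the Cauchy time along future-directed causal curves forces $st\leq 0$, hence $st=0$; the remaining boundary cases $t=0$ or $s=0$ reduce via the previous paragraph to $x\in K$ and hence $(t,x)\in J(K)$, again a contradiction.

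With the three pair-morphism properties in hand, the chain homotopy $h_\Sigma$ from Lemma \ref{lemChainHomotopy} descends to $h_\Sigma\colon C_\sharp(M,M\setminus J(K))\to C_{\sharp+1}(M,M\setminus J(K))$ satisfying $\iota_{\Sigma\,\ast}\circ\pi_{\Sigma\,\ast}-\id=\partial\circ h_\Sigma+h_\Sigma\circ\partial$, while $\pi_{\Sigma\,\ast}\circ\iota_{\Sigma\,\ast}=\id$ already holds on $\Sigma$; passage to relative homology completes (i), and dualizing yields (ii). The main obstacle is precisely the homotopy compatibility verified above: unlike the point-set statements for $\iota_\Sigma$ and $\pi_\Sigma$, this is a global claim about the deformation retraction of $M$ onto $\Sigma$ being compatible with the causal hull of $K$, and it depends essentially on the Bernal--S\'anchez splitting having vertical lines that are timelike and $t$ that is a Cauchy time function.
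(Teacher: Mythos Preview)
Your proposal is correct and follows essentially the same approach as the paper's own proof: both arguments reduce the relative statement to Lemma~\ref{lemChainHomotopy} by verifying that $\iota_\Sigma$, $\pi_\Sigma$, and the homotopy $H_\Sigma$ from \eqref{eqn:tmpHomotopy} are morphisms of pairs, after which the chain homotopy descends to relative chains and (ii) follows by dualizing. The only difference is that you spell out the verification of $H_\Sigma([0,1]\times(M\setminus J(K)))\subseteq M\setminus J(K)$ in detail (via transitivity of $J^+$ and monotonicity of the Cauchy time), whereas the paper simply asserts this with the word ``Similarly''.
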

\begin{proof}
Notice that $\iota_{\Sigma} : \Sigma \to M$ maps $\Sigma\setminus K$ to $M\setminus J(K)$ because
$\Sigma$ is by assumption spacelike.
Moreover, $\pi_{\Sigma} : M \to \Sigma$ maps $M\setminus J(K)$ to $\Sigma\setminus K$: 
Assume there exists $(t,x)\in M\setminus J(K)$ such that $\pi_{\Sigma}(t,x) = x\in K$;
then the curve $\gamma : [0,1] \to M\,,~s\mapsto (s\,t,x)$ connecting $(0,x)$ with $(t,x)$
is timelike, hence $(t,x)\in J(K)$ which is a contradiction.
Similarly, the homotopy \eqref{eqn:tmpHomotopy} restricts
to $H_{\Sigma} : [0,1]\times \big(M\setminus J(K)\big) \to M\setminus J(K)$.
The rest of the proof then follows that of Lemma \ref{lemChainHomotopy}.
\end{proof}

\begin{lem}\label{lemSCPushforward}
Let $f: M \to M^\prime$ be a morphism in $\Loc_m$ 
and denote by $\mathcal{K}_M$ the directed set of compact subsets of $M$. 
Consider the natural transformation
\begin{flalign}\label{eqPullbackRelConf}
f^\ast: \Conf^k(M^\prime,M^\prime \setminus J(f(-));\bbZ) \Longrightarrow \Conf^k(M,M \setminus J(-);\bbZ)~
\end{flalign}
between functors from $\mathcal{K}_M$ to $\Ab$ induced by $f$.
Then, for each smooth spacelike Cauchy surface $\Sigma$ of $M$, 
the restriction of $f^\ast$ to the directed set $\mathcal{K}_\Sigma \subseteq \mathcal{K}_M$ 
of compact subsets of $\Sigma$ is a natural isomorphism. 
In particular, we can consider the natural transformation 
\begin{flalign}
(f^\ast)^{-1}: \Conf^k(M,M \setminus J(-);\bbZ) \Longrightarrow  \Conf^k(M^\prime,M^\prime \setminus J(f(-));\bbZ)~
\end{flalign}
between functors from $\mathcal{K}_\Sigma$ to $\Ab$.
Then the pushforward for semi-classical gauge fields of spacelike compact support
\begin{equation}\label{eqPushforwardSCConf}
f_\ast: \Conf^k_{\rm sc}(M;\bbZ) \longrightarrow \Conf^k_{\rm sc}(M^\prime;\bbZ)
\end{equation}
is canonically induced by the colimit prescription in \eqref{eqSCConf} restricted to $\mathcal{K}_\Sigma$, 
see also Remark~\ref{remCofinal}, and by the universal property of colimits. 
\end{lem}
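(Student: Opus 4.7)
My plan is to verify the componentwise isomorphism claim for $f^\ast$ on $\mathcal{K}_\Sigma$ by reducing it, via the relative Cauchy problem (Theorem~\ref{thmRelCauchyPbl}), to a statement about initial data, and then to assemble $f_\ast$ using the universal property of the colimit defining $\Conf^k_{\rm sc}$.

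The core point is to show that for each $K \in \mathcal{K}_\Sigma$ the component $f_K^\ast: \Conf^k(M', M' \setminus J(f(K));\bbZ) \to \Conf^k(M, M \setminus J(K);\bbZ)$ is an isomorphism. Causal compatibility of $f$ gives $f^{-1}(J_{M'}(f(K))) = J_M(K)$, so $f$ restricts to a morphism of pairs $(M, M \setminus J_M(K)) \to (M', M' \setminus J_{M'}(f(K)))$ and $f_K^\ast$ is well defined. Since $K \subseteq \Sigma$ is compact and $\Sigma$ is spacelike acausal, the image $f(K)$ lies in the spacelike acausal submanifold $f(\Sigma) \subseteq M'$. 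Choosing a relatively compact open neighborhood $W \subseteq \Sigma$ of $K$ and applying \cite[Theorem~1.1]{Bernal:2005qf} to the compact spacelike acausal submanifold-with-boundary $f(\overline{W})$, I extend $f(W)$ to a smooth spacelike Cauchy surface $\Sigma'$ of $M'$, so that $f|_W : W \to \Sigma'$ is an orientation preserving isometric open embedding. Applying Theorem~\ref{thmRelCauchyPbl} on both sides then identifies $f_K^\ast$, up to the Cauchy problem isomorphisms $\iota_\Sigma^\ast \times \iota_\Sigma^\ast$ and $\iota_{\Sigma'}^\ast \times \iota_{\Sigma'}^\ast$, with the map $\dH^{k,m-k}(\Sigma', \Sigma' \setminus f(K);\bbZ) \to \dH^{k,m-k}(\Sigma, \Sigma \setminus K;\bbZ)$ induced by the restriction of $f$ to the Cauchy surfaces. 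By excision for relative differential cohomology (cf.\ \cite[Section~3]{BBSS} and Lemma~\ref{lemRelChainHomotopy}) this map depends only on the behaviour of $f$ in an open neighborhood of $K$, hence is induced by the diffeomorphism $f|_W: W \to f(W)$ and is an isomorphism. Naturality in $K \in \mathcal{K}_\Sigma$ is then immediate from functoriality of $\Conf^k$ in the pair.

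Once the componentwise isomorphism of $f^\ast$ on $\mathcal{K}_\Sigma$ is in hand, I invert it levelwise to obtain the natural transformation $(f^\ast)^{-1}: \Conf^k(M, M \setminus J(-);\bbZ) \Rightarrow \Conf^k(M', M' \setminus J(f(-));\bbZ)$ between functors on $\mathcal{K}_\Sigma$. Post-composing each component with the structure morphism $\Conf^k(M', M' \setminus J(f(K));\bbZ) \to \Conf^k_{\rm sc}(M';\bbZ)$ into the colimit over $\mathcal{K}_{M'}$ (which exists since $f(K) \in \mathcal{K}_{M'}$) produces a cocone on $\Conf^k(M, M \setminus J(-);\bbZ): \mathcal{K}_\Sigma \to \Ab$. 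By Remark~\ref{remCofinal} the colimit of this restricted diagram is $\Conf^k_{\rm sc}(M;\bbZ)$, so the universal property of the colimit delivers a unique group homomorphism $f_\ast : \Conf^k_{\rm sc}(M;\bbZ) \to \Conf^k_{\rm sc}(M';\bbZ)$.

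The main technical obstacle I expect is the compatibility invoked in the second paragraph: one must verify that under the Cauchy problem isomorphism of Theorem~\ref{thmRelCauchyPbl} the pullback $f_K^\ast$ on relative semi-classical configuration spaces really corresponds to the pullback on relative differential cohomology on Cauchy surfaces, and that the resulting map on the source and target of $f_K^\ast$ does not depend on the auxiliary choice of $\Sigma'$. This amounts to a naturality statement for the isomorphism in Theorem~\ref{thmRelCauchyPbl} with respect to $\Loc_m$-morphisms that restrict to isometric embeddings between open pieces of the chosen Cauchy surfaces, combined with the excision property of relative differential cohomology in a neighborhood of~$K$.
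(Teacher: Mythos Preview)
Your approach is essentially the same as the paper's: choose a relatively compact open neighborhood of $K$ in $\Sigma$, extend its image under $f$ to a Cauchy surface $\Sigma'$ of $M'$ via \cite{Bernal:2005qf}, and combine Theorem~\ref{thmRelCauchyPbl} with excision for relative differential cohomology to conclude that each component $f_K^\ast$ is an isomorphism; the assembly of $f_\ast$ via cofinality and the colimit universal property is also exactly what the paper does.

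The one point where the paper is sharper concerns precisely the ``technical obstacle'' you flag at the end. Your phrasing that $f_K^\ast$ corresponds, under the Cauchy isomorphisms, to a map $\dH^{k,m-k}(\Sigma',\Sigma'\setminus f(K);\bbZ)\to\dH^{k,m-k}(\Sigma,\Sigma\setminus K;\bbZ)$ ``induced by the restriction of $f$ to the Cauchy surfaces'' is not quite well posed, since $f$ does not send $\Sigma$ into $\Sigma'$. The paper resolves this cleanly by building a commutative square in $\Pair$ with a common vertex $(U,U\setminus K)$ (your $W$): one leg is $j:U\hookrightarrow\Sigma$, the other is $f_U:U\to\Sigma'$, and these are compatible with $\iota_\Sigma$, $\iota_{\Sigma'}$ and $f$. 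Applying $\dH^{k,m-k}(-;\bbZ)$ yields a diagram in which the vertical arrows are isomorphisms by Theorem~\ref{thmRelCauchyPbl} and the bottom arrows $j^\ast$ and $f_U^\ast$ are isomorphisms by excision \cite[Theorem~3.8]{BBSS}; hence so is $f^\ast$. This removes any need for a separate naturality argument for the Cauchy isomorphism or for checking independence of $\Sigma'$, since you are only showing that a fixed component is invertible.
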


\begin{proof}
For each $K \subseteq M$ compact, we note that $f: M \to M^\prime$ in $\Loc_m$ induces 
an open embedding $f:(M,M \setminus J(K)) \to (M^\prime,M^\prime \setminus J(f(K)))$ of 
pairs which is compatible with the inclusions of the given submanifolds. 
Looking at an inclusion $K \subseteq K^\prime$ of compact subsets of $M$, 
one realizes that both $(M,M \setminus J(-))$ and $(M^\prime,M^\prime \setminus J(f(-)))$ 
are functors from $\mathcal{K}_M$ to $\Pair^\op$, the opposite category 
of the category $\Pair$ of pairs of manifolds 
with submanifold preserving smooth maps as morphisms;
moreover, $f:(M,M \setminus J(-)) \Rightarrow (M^\prime,M^\prime \setminus J(f(-)))$ 
is a natural transformation between these functors. 
Therefore, applying the functor $\dH^{k,m-k}(-;\bbZ): \Pair^\op \to \Ab$, 
cf.\ \cite[Section~3.1]{BBSS}, 
we obtain the pullback along $f$ as a natural transformation
\begin{equation}
f^\ast: \dH^{k,m-k}(M^\prime,M^\prime \setminus J(f(-));\bbZ) \Longrightarrow \dH^{k,m-k}(M,M \setminus J(-);\bbZ)~
\end{equation}
between functors from $\mathcal{K}_M$ to $\Ab$.
Since $f$ is a morphism in $\Loc_m$, hence in particular an isometry, 
and $\cu$ is a natural transformation for relative differential characters, 
see \cite[Section~3.1]{BBSS}, 
$f^\ast$ maps relative semi-classical gauge fields on $M^\prime$ to relative
semi-classical gauge fields on $M$, 
so that we obtain the natural transformation displayed in \eqref{eqPullbackRelConf}. 
\sk

We will now show that the restriction to $\mathcal{K}_\Sigma$ 
of the natural transformation \eqref{eqPullbackRelConf} is a natural isomorphism. 
For each $K \subseteq \Sigma$ compact, we choose an open neighborhood $U \subseteq \Sigma$ of $K$ 
with compact closure $\overline U$. We denote by $j: U \to \Sigma$ the open embedding 
induced by the inclusion. Observing that $f(\, \overline U\, ) \subseteq M^\prime$ 
is a spacelike and acausal compact submanifold with boundary of $M^\prime$, 
by~\cite[Theorem 1.1]{Bernal:2005qf} there is a smooth spacelike Cauchy surface 
$\Sigma^\prime$ of $M^\prime$ extending $f(\, \overline U\, )$
whose embedding in $M^\prime$ is denoted by $\iota_{\Sigma^\prime}: \Sigma^\prime \to M^\prime$. 
We also denote by $f_U: U \to \Sigma^\prime$ the open embedding induced by the restriction of $f$. 
By construction, the diagram
\begin{flalign}\label{eqAux3}
\xymatrix{
(U,U \setminus K) \ar[rr]^-{f_U} \ar[d]_-j			&	&	(\Sigma^\prime,\Sigma^\prime \setminus f(K))
															\ar[dd]^-{\iota_{\Sigma^\prime}}			\\
(\Sigma,\Sigma \setminus K) \ar[d]_-{\iota_\Sigma}														\\
(M,M \setminus J(K)) \ar[rr]_-f						&&	(M^\prime,M^\prime \setminus J(f(K)))
}
\end{flalign}
in the category  $\Pair$ commutes.
Therefore, applying the functor $\dH^{k,m-k}(-;\bbZ): \Pair^\op \to \Ab$ 
and recalling that the pullback along $f$ maps relative semi-classical gauge fields to relative semi-classical gauge fields, 
we obtain a new commutative diagram
\begin{flalign}\label{eqAux4}
\xymatrix{
\Conf^k(M^\prime,M^\prime \setminus J(f(K));\bbZ) \ar[rr]^-{f^\ast} \ar[dd]_-{\iota_{\Sigma^\prime}^\ast  }
		&&	\Conf^k(M,M \setminus J(K);\bbZ) \ar[d]^-{\iota_\Sigma^\ast}								\\
		&&	\dH^{k,m-k}(\Sigma,\Sigma \setminus K;\bbZ) \ar[d]^-{j^\ast }								\\
\dH^{k,m-k}(\Sigma^\prime,\Sigma^\prime \setminus f(K);\bbZ) \ar[rr]	_{f_U^\ast }
		&&	\dH^{k,m-k}(U,U \setminus K;\bbZ)
}
\end{flalign}
in the category of Abelian groups $\Ab$.
Using Theorem \ref{thmRelCauchyPbl} and the excision theorem \cite[Theorem~3.8]{BBSS}
we find that the vertical and bottom horizontal arrows are isomorphisms. 
In particular, note that $f_U$ can be factored as the composition of a diffeomorphism onto its image 
followed by the inclusion of its image into the target, hence by excision
$f_U^\ast: \dH^k(\Sigma^\prime,\Sigma^\prime \setminus f(K);\bbZ) \to \dH^k(U,U \setminus K;\bbZ)$ 
is an isomorphism. It follows that the top horizontal arrow is an isomorphism too.
\end{proof}

The proofs of our final two lemmas will rely extensively on Lemma~\ref{lemSCPushforward}. 
In particular, we will adopt the following approach. 
Starting from a semi-classical gauge field of  spacelike compact support
and unraveling the directed colimit in \eqref{eqSCConf}, we will represent it by a gauge field 
relative to the complement of $J(K)$ for a suitable compact subset $K$ of a smooth spacelike Cauchy surface. 
Then we use Lemma \ref{lemSCPushforward} to represent the pushforward 
of the given spacelike compact gauge field by the image under the inverse of the pullback 
of the corresponding relative gauge field.

\begin{lem}\label{lemPairingSigmaindep}
The pairing \eqref{eqPairingConf} does not depend on the choice of Cauchy surface $\Sigma$.
Moreover, for any morphism $f: M \to M^\prime$ in $\Loc_m$ the diagram of Abelian groups
\begin{flalign}\label{eqPairingConfNat}
\xymatrix{
\ar[d]_-{\id\times f_\ast}\Conf^k(M^\prime;\bbZ) \times \Conf^k_{\rm sc}(M;\bbZ) \ar[rr]^-{f^\ast \times \id} && 
\Conf^k(M;\bbZ) \times \Conf^k_{\rm sc}(M;\bbZ)\ar[d]^-{\ips{\cdot}{\cdot}}\\
\Conf^k(M^\prime;\bbZ) \times \Conf^k_{\rm sc}(M^\prime;\bbZ)\ar[rr]_-{\ips{\cdot}{\cdot}} && \bbT
}
\end{flalign}
commutes.
\end{lem}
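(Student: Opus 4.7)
The unifying idea is to represent $(h',\tilde h'\, )\in \Conf^k_{\rm sc}(M;\bbZ)$ by a relative class on $M$ and then reduce both claims to properties of the pairings from \cite[Section~5.2]{BBSS}. By \eqref{eqSCConf} and Remark \ref{remCofinal}, I would fix once and for all an auxiliary smooth spacelike Cauchy surface $\Sigma_0$ of $M$, a compact subset $K\subseteq \Sigma_0$ and a representative $(\bar h',\bar{\tilde h}\, ')\in \Conf^k(M,M\setminus J(K);\bbZ)$ of $(h',\tilde h'\, )$. For any smooth spacelike Cauchy surface $\Sigma$ of $M$, setting $K_\Sigma=J(K)\cap \Sigma$, Theorem \ref{thmRelCauchyPbl} places $(\iota_\Sigma^\ast \bar h',\iota_\Sigma^\ast \bar{\tilde h}\, ')$ in $\dH^{k,m-k}(\Sigma,\Sigma\setminus K_\Sigma;\bbZ)$, whose image in $\dH^{k,m-k}_{\rm c}(\Sigma;\bbZ)$ coincides with $\iota_\Sigma^\ast(h',\tilde h'\, )$ under the canonical map to the colimit defining compactly supported differential cohomology.

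For the first claim (independence of $\Sigma$), I would invoke that $\ips{\cdot}{\cdot}_{\rm c}$ factors in its second slot through $\dH^p(\Sigma,\Sigma\setminus K_\Sigma;\bbZ)\to \dH^p_{\rm c}(\Sigma;\bbZ)$ as a relative pairing against $\dH^{m-p}(\Sigma;\bbZ)$, and that under the Cauchy-problem isomorphisms of Theorems \ref{thmCauchyPbl} and \ref{thmRelCauchyPbl} this lifts to a pairing $\dH^{m-p}(M;\bbZ)\times \dH^p(M,M\setminus J(K);\bbZ)\to \bbT$ whose construction from \cite[Sections~4 and~5]{BBSS} makes no reference to a Cauchy surface. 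The two summands defining \eqref{eqPairingConf} can then be rewritten intrinsically in terms of this spacetime relative pairing applied to $(h,\tilde h)$ and $(\bar h',\bar{\tilde h}\, ')$, so the value is manifestly independent of the Cauchy surface $\Sigma$ used to evaluate it.

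For the naturality claim, I would imitate the proof of Lemma \ref{lemSCPushforward}: choose an open neighborhood $U\subseteq \Sigma_0$ of $K$ with compact closure, and a smooth spacelike Cauchy surface $\Sigma_0'$ of $M'$ extending $f(\, \overline U\, )$, so that $f_\ast(h',\tilde h'\, )$ is represented by the image of $(\bar h',\bar{\tilde h}\, ')$ under the inverse of the pullback isomorphism $f^\ast$ of \eqref{eqPullbackRelConf}. Having already established independence of the Cauchy surface, I may evaluate the left-hand side of \eqref{eqPairingConfNat} on $\Sigma_0$ and the right-hand side on $\Sigma_0'$. Commutativity of the diagram \eqref{eqAux3} (so that $\iota_{\Sigma_0'}\circ f_U=f\circ \iota_{\Sigma_0}\circ j$), together with the excision isomorphism for relative differential cohomology \cite[Theorem~3.8]{BBSS} applied to $f_U:(U,U\setminus K)\to (\Sigma_0',\Sigma_0'\setminus f(K))$ and the functoriality of $\ips{\cdot}{\cdot}_{\rm c}$ under the induced maps on differential cohomology, then identifies both sides of \eqref{eqPairingConfNat} with a common relative pairing on $(U,U\setminus K)$ evaluated on the restrictions of $(\bar h',\bar{\tilde h}\, ')$ and $j^\ast\iota_{\Sigma_0}^\ast f^\ast(h,\tilde h)=f_U^\ast\iota_{\Sigma_0'}^\ast(h,\tilde h)$.

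The main obstacle is the first claim: establishing that the compactly supported pairing on a Cauchy surface is controlled by an intrinsic spacetime relative pairing demands a careful examination of the colimit definition of $\dH^p_{\rm c}(\Sigma;\bbZ)$ and of the pairing constructions in \cite[Sections~4 and~5]{BBSS}. Once that factorization is in hand, Part 2 is essentially formal, relying only on excision and the commutativity of \eqref{eqAux3}.
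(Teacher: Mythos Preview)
Your strategy for the naturality claim is essentially the paper's: representing $f_\ast(h',\tilde h')$ via the inverse of the relative pullback $f^\ast$, using the diagram \eqref{eqAux3}, excision from \cite[Theorem~3.8]{BBSS}, and the adjunction-type identity for $\ips{\cdot}{\cdot}_{\rm c}$ under open embeddings (the diagram \eqref{eqAux}). That part is fine.

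The gap is in your approach to the first claim. You want to exhibit an intrinsic pairing $\dH^{m-p}(M;\bbZ)\times \dH^p(M,M\setminus J(K);\bbZ)\to \bbT$ from \cite{BBSS} which ``makes no reference to a Cauchy surface'', so that each summand of \eqref{eqPairingConf} is separately $\Sigma$-independent. No such pairing is available: the ring product $\tilde h\cdot h'$ lands in $\dH^m(M,M\setminus J(K);\bbZ)$, and extracting a value in $\bbT$ requires evaluation on an $(m{-}1)$-cycle, which is precisely what the choice of $\Sigma$ supplies. Indeed, the individual terms $\ips{\iota_\Sigma^\ast\tilde h}{\iota_\Sigma^\ast h'}_{\rm c}$ and $\ips{\iota_\Sigma^\ast h}{\iota_\Sigma^\ast\tilde h'}_{\rm c}$ \emph{do} depend on $\Sigma$ in general (since $\iota_\Sigma^\ast\tilde h$ alone is only half of the Cauchy data for $(h,\tilde h)$); it is only the specific antisymmetric combination in \eqref{eqPairingConf} that is $\Sigma$-independent, and this hinges on the dynamical constraints $\cu\,h=\ast\,\cu\,\tilde h$ and $\cu\,h'=\ast\,\cu\,\tilde h'$, which your sketch never invokes.

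The paper's argument is more direct. It shows that the relative fundamental classes $\iota_{\Sigma\,\ast}\mu$ and $\iota_{\Sigma'\,\ast}\mu'$ agree in $\H_{m-1}(M,M\setminus J(K))$ (via Lemma~\ref{lemRelChainHomotopy} and the orientation-compatibility of Cauchy surfaces), so cycle representatives differ by a boundary $\partial\gamma$. The difference of the two pairings is then $(\tilde h\cdot h' - (-1)^{k(m-k)} h\cdot \tilde h')(\partial\gamma)=\int_\gamma \cu(\tilde h\cdot h' - (-1)^{k(m-k)} h\cdot \tilde h')\bmod\bbZ$ by the defining property \eqref{eqn:curvproperty}, and the integrand vanishes because $\cu\,\tilde h\wedge\ast\,\cu\,\tilde h' - \cu\,\tilde h'\wedge\ast\,\cu\,\tilde h=0$. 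You should replace your abstract factorization plan with this explicit computation.
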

\begin{proof}
We first prove independence of the pairing \eqref{eqPairingConf} on the choice of Cauchy surface $\Sigma$ 
used to evaluate it. For this, we choose any two smooth spacelike Cauchy surfaces $\Sigma$ 
and $\Sigma^\prime$ of $M$. Let $(h,\tilde h) \in \Conf^k(M;\bbZ)$, $K \subseteq \Sigma$ compact 
and $(h',\tilde h'\, ) \in \Conf^k(M,M \setminus J(K);\bbZ)$. 
Note that $K^\prime = \Sigma^\prime \cap J(K)$ is compact and that
$J(K)\subseteq J(K^\prime\, )$. 
Let $\mu$ denote the unique element of $\H_{m-1}(\Sigma,\Sigma \setminus K)$ 
which agrees with the orientation of $\Sigma$ for each point of $K$. 
Similarly, let $\mu^\prime$ denote the unique element of 
$\H_{m-1}(\Sigma^\prime,\Sigma^\prime \setminus K^\prime\, )$ 
which agrees with the orientation of $\Sigma^\prime$ at each point of $K^\prime$. 
By means of the isomorphisms in Lemma \ref{lemRelChainHomotopy}, we can compare $\mu$ 
with $\tilde \mu = \pi_{\Sigma\,\ast} \, \iota_{\Sigma^\prime\,\ast} \mu^\prime \in \H_{m-1}(\Sigma,\Sigma \setminus K)$. 
Since the orientations of both $\Sigma^\prime$ and $\Sigma$ 
are chosen consistently with the orientation and time-orientation of $M$, 
for each point of $K$, $\tilde \mu$ agrees with the orientation of $\Sigma$. 
In particular \cite[Lemma 3.27]{Hatcher} entails that 
$\tilde \mu = \mu \in \H_{m-1}(\Sigma,\Sigma \setminus K)$, therefore 
$\iota_{\Sigma^\prime\,\ast} \mu^\prime = \iota_{\Sigma\,\ast} \mu \in \H_{m-1}(M,M \setminus J(K))$ 
by Lemma \ref{lemRelChainHomotopy}. Let $\nu \in Z_{m-1}(\Sigma,\Sigma \setminus K)$ 
and $\nu^\prime \in Z_{m-1}(\Sigma^\prime,\Sigma^\prime \setminus
K^\prime\, )$ be cycles 
representing $\mu$ and $\mu^\prime$ respectively. Hence we obtain $\gamma \in C_m(M,M \setminus J(K))$
such that $\iota_{\Sigma \ast} \nu - \iota_{\Sigma^\prime \ast} \nu^\prime = \del \gamma$. 
Taking into account also \cite[Definition 3.1]{BBSS}, we get 
\begin{multline}
\ip{\iota_{\Sigma }^\ast (h,\tilde h)\, }{\, \iota_{\Sigma }^\ast( h',\tilde h'\, )}_\Sigma
	- \ip{\iota_{\Sigma' }^\ast( h,\tilde h)\, }{\, \iota_{\Sigma' }^\ast( h',\tilde h'\, )}_{\Sigma^\prime} 
= \big(\, \tilde h \cdot h' - (-1)^{k\, (m-k)} \, h \cdot \tilde h'\,\big)(\del \gamma)	\\[4pt]
	= \int_\gamma \, \cu\big(\tilde h \cdot h' - (-1)^{k\, (m-k)}\, h \cdot \tilde h'\, \big) \mod \bbZ~,
\end{multline}	
where the subscripts $_\Sigma$ and $_{\Sigma^\prime}$ denote the Cauchy surfaces 
which have been used to evaluate the pairing \eqref{eqPairingConf}. 
Furthermore, using \cite[eq.\ (3.31)]{BBSS} 
together with the identities $\cu\, h = \ast \, \cu\, \tilde h$ and $\cu\, h' = \ast \, \cu\, \tilde h'$, one has 
\begin{flalign}
\cu\,\big(\tilde h \cdot h' - (-1)^{k\, (m-k)} \, h \cdot \tilde h'\, \big) 
= \cu\, \tilde h \wedge \ast \, \cu\, \tilde h' - \cu\, \tilde h'
\wedge \ast \, \cu\, \tilde h = 0~.
\end{flalign}

To complete the proof, we show that the diagram \eqref{eqPairingConfNat} commutes. 
Let $(h,\tilde h) \in \Conf^k(M^\prime;\bbZ)$ and $(h',\tilde h'\, ) \in \Conf^k_{\rm sc}(M;\bbZ)$. 
Consider a smooth spacelike Cauchy surface $\Sigma$ of $M$ and $K \subseteq \Sigma$ compact 
such that $(h',\tilde h'\, ) \in \Conf^k(M,M \setminus J(K);\bbZ)$. 
Since $f$ is a morphism in the category $\Loc_m$, the induced map 
$f: (M,M \setminus J(K)) \to (M^\prime,M^\prime \setminus J(f(K)))$ is a morphism in $\Pair$, 
in particular an open embedding between the $m$-manifolds $M$ and $M^\prime$ 
which maps the open subset $M \setminus J(K)$ to $M' \setminus f(J(K))$. We 
represent $f_\ast (h',\tilde h'\, ) \in \Conf^k_{\rm sc}(M^\prime;\bbZ)$ 
by $(f^\ast)^{-1} (h',\tilde h'\, ) \in \Conf^k(M^\prime,M^\prime \setminus J(f(K));\bbZ)$, 
and interpret $(h',\tilde h'\, )$ as its representative in $\Conf^k(M,M \setminus J(K);\bbZ)$.
Recalling the proof of Lemma~\ref{lemSCPushforward}, 
we consider an open neighborhood $U \subseteq \Sigma$ of $K$ with compact closure $\overline{U}$
and we extend $f(\, \overline U\, )$ to a smooth spacelike Cauchy surface $\Sigma^\prime$ of $M^\prime$. 
With the same notation, we then find 
\begin{flalign}
\ip{(h,\tilde h)\, }{\, f_\ast (h',\tilde h'\, )}
= \ip{\iota_{\Sigma^\prime}^\ast (h,\tilde h)\, }{\,
  \iota_{\Sigma^\prime}^\ast\, (f^\ast)^{-1} (h',\tilde h'\, )}_{\Sigma'}
= \ip{\iota_{\Sigma^\prime}^\ast (h,\tilde h)\, }{\, (f_U^\ast)^{-1}
  \, j^\ast \, \iota_\Sigma^\ast (h',\tilde h'\, )}_{\Sigma'}~.
\end{flalign}
This follows from the definition of the pairing \eqref{eqPairingConf} and the diagram in \eqref{eqAux4}. 
Note that $f_{U \ast}: \dH^{k,m-k}_{\rm c}(U;\bbZ) \to \dH^{k,m-k}_{\rm c}(\Sigma^\prime;\bbZ)$ is defined as the colimit 
over the directed set $\mathcal K_U$ of compact subsets of $U$ of the inverse of the natural isomorphism 
$f_U^\ast: \dH^{k,m-k}(\Sigma^\prime,\Sigma^\prime \setminus f_U(-)) \Rightarrow \dH^{k,m-k}(U,U \setminus -)$, 
see \cite[Section~4]{BBSS}. 
We further use commutativity of the diagram
\begin{flalign}\label{eqAux}
\xymatrix{
\dH^k(\Sigma^\prime;\bbZ) \times \dH^{m-k}_{\rm c}(U;\bbZ)
	\ar[rr]^-{f_U^\ast \times \id} \ar[d]_-{\id \times f_{U \ast}}
		&&	\dH^k(U;\bbZ) \times \dH^{m-k}_{\rm c}(U;\bbZ) \ar[d]^-{\ips{\cdot}{\cdot}_{\rm c}}					\\
\dH^k(\Sigma^\prime;\bbZ) \times \dH^{m-k}_{\rm c}(\Sigma^\prime;\bbZ) \ar[rr]_-{\ips{\cdot}{\cdot}_{\rm c}}
		&&	\bbT
}
\end{flalign}
which is shown in \cite[Section~5.2]{BBSS}.
It then follows that 
\begin{flalign}
\ip{(h,\tilde h)\,}{\, f_\ast (h',\tilde h'\,)}
= \ip{\iota_{\Sigma^\prime}^\ast (h,\tilde h)\, }{\, f_{U \ast} \, j^\ast\, \iota_\Sigma^\ast (h',\tilde h'\, )}_{\Sigma'}
= \ip{f_U^\ast\, \iota_{\Sigma^\prime}^\ast (h,\tilde h)\, }{\, j^\ast \, \iota_\Sigma^\ast (h',\tilde h'\, )}_{U}~.
\end{flalign}
From diagram \eqref{eqAux3} we have $\iota_{\Sigma^\prime} \circ f_U = f \circ \iota_\Sigma \circ j$,
and by using the analogue of the diagram \eqref{eqAux} for the open embedding $j: U \to \Sigma$ we find
\begin{flalign}
\ip{(h,\tilde h)\, }{\, f_\ast (h',\tilde h'\, )}
= \ip{j^\ast\, \iota_\Sigma^\ast \, f^\ast(h,\tilde h)\, }{\, j^\ast\, \iota_\Sigma^\ast(h',\tilde h'\, )}_{U}
= \ip{\iota_\Sigma^\ast\, f^\ast(h,\tilde h)\, }{\, j_\ast\, j^\ast\, \iota_\Sigma^\ast(h',\tilde h'\, )}_{\Sigma}~.
\end{flalign}
Similarly to $f_{U \ast}$, the group homomorphism $j_\ast: \dH^{k,m-k}_{\rm c}(U;\bbZ) \to \dH^{k,m-k}_{\rm c}(\Sigma;\bbZ)$ 
is obtained as the colimit over $\mathcal K_U$ of the inverse of the natural isomorphism 
$j^\ast: \dH^{k,m-k}(\Sigma,\Sigma \setminus -;\bbZ) \Rightarrow \dH^{k,m-k}(U,U \setminus -;\bbZ)$. It follows that
\begin{flalign}
\ip{(h,\tilde h)\, }{\, f_\ast (h',\tilde h'\, )}
= \ip{\iota_\Sigma^\ast \, f^\ast(h,\tilde h)\, }{\, (j^\ast)^{-1}\, j^\ast\,  \iota_\Sigma^\ast(h',\tilde h'\,)}_{\Sigma}
=\ip{f^\ast (h,\tilde h)\, }{\, (h',\tilde h'\, )}~,
\end{flalign}
where for the last equality we used the definition of the pairing \eqref{eqPairingConf}. 
\end{proof}

\begin{lem}\label{lemIproperties}
Recalling \eqref{eqn:Imapping}, 
the group homomorphisms $I : \Conf^k_{\rm sc}(M;\bbZ)\to \Conf^k(M;\bbZ)$, for objects $M$ in $\Loc_m$,
enjoy the following properties:
\begin{itemize}
\item[(i)] $f^\ast\circ I \circ f_\ast = I$, for all morphisms $f : M\to M^\prime$ in $\Loc_m$.
\item[(ii)] $f^\ast_2\circ I \circ {f_{1}}_\ast =0$, for all diagrams $M_1\stackrel{f_1}{\longrightarrow} M 
\stackrel{f_2}{\longleftarrow} M_2$ in $\Loc_m$ such that the images of $f_1$ and $f_2$ are causally disjoint.
\end{itemize}
\end{lem}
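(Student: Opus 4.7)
The plan is to handle both parts by a single mechanism: unravel the colimit \eqref{eqSCConf} defining $\Conf^k_{\rm sc}$, express the pushforwards via the inverse of the natural pullback isomorphism provided by Lemma \ref{lemSCPushforward}, and then invoke naturality of the forgetful map $I$ from relative to absolute differential cohomology under $\Pair$-morphisms, cf.\ \cite[Section~3.1]{BBSS}. Concretely, any $\Pair$-morphism $g: (N, N\setminus T) \to (M, M\setminus S)$ should induce a commutative square
\begin{flalign*}
\xymatrix{
\Conf^k(M, M\setminus S; \bbZ) \ar[r]^-{I} \ar[d]_-{g^\ast} & \Conf^k(M; \bbZ) \ar[d]^-{g^\ast} \\
\Conf^k(N, N\setminus T; \bbZ) \ar[r]_-{I} & \Conf^k(N; \bbZ)
}
\end{flalign*}
which follows from naturality of the underlying quotient maps on relative cycles $Z_{p-1}(N) \to Z_{p-1}(N, N\setminus T)$ and from the fact that both the curvature and the defining pullback of $\Conf^k$ are natural.

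For (i), I would start with $(h', \tilde h'\,) \in \Conf^k_{\rm sc}(M; \bbZ)$ and, using Remark \ref{remCofinal}, select a smooth spacelike Cauchy surface $\Sigma \subseteq M$ together with a compact $K \subseteq \Sigma$ such that $(h', \tilde h'\,)$ admits a representative $\eta \in \Conf^k(M, M\setminus J(K); \bbZ)$. By Lemma \ref{lemSCPushforward}, $f_\ast(h', \tilde h'\,)$ is then represented by $(f^\ast)^{-1}(\eta) \in \Conf^k(M^\prime, M^\prime \setminus J(f(K)); \bbZ)$, so that $f^\ast \circ I \circ f_\ast(h', \tilde h'\,) = f^\ast \circ I \circ (f^\ast)^{-1}(\eta)$. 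Because causal compatibility of $f$ gives $f(M\setminus J(K))\subseteq M^\prime\setminus J(f(K))$, the map $f$ induces a $\Pair$-morphism $(M, M\setminus J(K)) \to (M^\prime, M^\prime \setminus J(f(K)))$, and the naturality square above rewrites the composition as $I \circ f^\ast \circ (f^\ast)^{-1}(\eta) = I(\eta) = I(h', \tilde h'\,)$.

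For (ii), I would run the same argument starting from $(h, \tilde h) \in \Conf^k_{\rm sc}(M_1; \bbZ)$ represented by some $\eta \in \Conf^k(M_1, M_1 \setminus J(K); \bbZ)$, for a compact $K$ in a Cauchy surface of $M_1$. The essential new ingredient is that causal disjointness of the images of $f_1$ and $f_2$ forces $f_2^{-1}(J(f_1(K))) \subseteq f_2^{-1}(J(f_1(M_1))) = \emptyset$, so $f_2$ induces the degenerate $\Pair$-morphism $f_2: (M_2, M_2) \to (M, M\setminus J(f_1(K)))$. Since the relative chain complex $C_\sharp(M_2, M_2)$ vanishes identically, one has $\Conf^k(M_2, M_2; \bbZ) = 0$, and the naturality square collapses the composition $f_2^\ast \circ I \circ {f_1}_\ast(h, \tilde h)$ to $I(0) = 0$.

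The main conceptual input is naturality of the forgetful map $I$ with respect to $\Pair$-morphisms, which should be immediate from the definition of relative differential characters recalled in \cite[Section~3.1]{BBSS}. The only bookkeeping I expect to have to be careful with is tracking the compact subset $K$ through the colimit presentation of spacelike compact gauge fields, so that Lemma \ref{lemSCPushforward} gets applied on the correct relative groups; beyond that, I do not foresee any substantial obstacle.
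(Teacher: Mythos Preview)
Your proposal is correct and follows essentially the same route as the paper: represent spacelike compact gauge fields by relative ones via Remark~\ref{remCofinal}, use Lemma~\ref{lemSCPushforward} to express $f_\ast$ as $(f^\ast)^{-1}$, and invoke naturality of $I$ under $\Pair$-morphisms (the paper cites \cite[eq.~(3.13)]{BBSS} for this). The only cosmetic difference is in (ii): the paper verifies vanishing by evaluating $I(f_1^\ast)^{-1}(h',\tilde h'\,)$ directly on cycles $f_{2\ast}(z,\tilde z)$ and observing these become zero in $Z_{k-1,m-k-1}(M,M\setminus J(f_1(K)))$, whereas you package the same observation as the vanishing of the target group $\Conf^k(M_2,M_2;\bbZ)$ in the naturality square---these are two phrasings of the same fact.
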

\begin{proof}
Let us start with statement (i). For $(h',\tilde h'\, ) \in \Conf^k_{\rm sc}(M;\bbZ)$, 
we show that $f^\ast\, I\, f_\ast (h',\tilde h'\, ) = I (h',\tilde h'\, )$. 
For a fixed smooth spacelike Cauchy surface $\Sigma$, let $K \subseteq \Sigma$ be compact 
with $(h',\tilde h'\, ) \in \Conf^k(M,M \setminus J(K);\bbZ)$. 
By \cite[eq.\ (3.13)]{BBSS} one has 
$f^\ast\circ I = I\circ f^\ast: \Conf^k(M^\prime,M^\prime \setminus J(f(K));\bbZ) \to \Conf^k(M;\bbZ)$, and so we find 
\begin{flalign}
f^\ast\, I \, (f^\ast)^{-1} (h',\tilde h'\, ) = I \, f^\ast \, (f^\ast)^{-1} (h',\tilde h'\, ) = I (h',\tilde h'\, )~.
\end{flalign}
This equation corresponds to $f^\ast\, I \, f_\ast (h',\tilde h'\, ) = I (h',\tilde h'\, )$ when $(h',\tilde h'\, )$ 
is regarded as an element of $\Conf^k_{\rm sc}(M;\bbZ)$. 
\sk

For statement (ii), let $(h',\tilde h'\, ) \in \Conf^k_{\rm sc}(M_1;\bbZ)$. 
Choosing a smooth spacelike Cauchy surface $\Sigma$ of $M_1$, we find a compact subset $K$ of $\Sigma$ 
such that $(h',\tilde h'\, ) \in \Conf^k(M_1,M_1 \setminus J(K);\bbZ)$. 
As above, we represent $f_{1 \ast} (h',\tilde h'\, ) \in \Conf^k_{\rm sc}(M;\bbZ)$ 
by $(f_1^\ast)^{-1} (h',\tilde h'\, ) \in \Conf^k(M,M \setminus J(f_1(K));\bbZ)$. 
For each pair of cycles $(z,\tilde z) \in Z_{k-1,m-k-1}(M_2)$, 
the pushforward $f_{2 \ast} (z,\tilde z) \in Z_{k-1,m-k-1}(M)$ is supported inside $f_2(M_2)$. 
By assumption $f_2(M_2) \subseteq M \setminus J(f_1(M_1)) \subseteq M \setminus J(f_1(K))$, 
hence $f_{2 \ast} (z,\tilde z) = 0 $ in $ Z_{k-1,m-k-1}(M,M \setminus J(f_1(K)))$. 
In particular, $I \, (f_1^\ast)^{-1} (h',\tilde h'\, )$ vanishes when evaluated on $f_{2 \ast} (z,\tilde z)$. 
Since this is the case for any $(z,\tilde z) \in Z_{k-1,m-k-1}(M_2)$, 
we conclude that $f_2^\ast\, I \, (f_1^\ast)^{-1} \, (h',\tilde h'\, ) = 0$ and 
hence also $f_2^\ast\, I \, f_{1 \ast} (h',\tilde h'\, ) = 0$. 
\end{proof}


\end{document}